\newcommand{\bbN}{{\mathbb{N}}}
\newcommand{\bbR}{{\mathbb{R}}}
\newcommand{\bbD}{{\mathbb{D}}}
\newcommand{\bbH}{{\mathbb{H}}}
\newcommand{\bbE}{{\mathbb{E}}}
\newcommand{\bbC}{{\mathbb{C}}}
\newcommand{\no}{\nonumber}
\newcommand{\eps}{\varepsilon}
\newcommand{\supp}{\text{\rm{supp}}}
\newcommand{\beq}{\begin{equation}}
\newcommand{\eeq}{\end{equation}}
\newcommand{\ba}{\begin{align}}
\newcommand{\ea}{\end{align}}
\renewcommand{\Re}{\mathop{\mathrm{Re}}}
\renewcommand{\Im}{\mathop{\mathrm{Im}}}
\DeclareMathOperator{\Tr}{Tr}
\DeclareMathOperator{\Var}{Var}
\numberwithin{equation}{section}
\newtheorem{theorem}{Theorem}[section]
\newtheorem{prop}[theorem]{Proposition}
\newtheorem{lemma}[theorem]{Lemma}
\newtheorem{corollary}[theorem]{Corollary}
\theoremstyle{definition}
\newtheorem{remark}[theorem]{Remark}
\title[Mesoscopic Universality on the Unit Circle]{Mesoscopic Universality for Circular Orthogonal Polynomial Ensembles}
\author[J.~Breuer and D.~Ofner]{Jonathan Breuer$^{1,3}$ and Daniel Ofner$^{2,3}$}
\thanks{$^1$ Institute of Mathematics, The Hebrew University of Jerusalem, Jerusalem, 91904, Israel. E-mail: jbreuer@math.huji.ac.il.}
\thanks{$^2$ Institute of Mathematics, The Hebrew University of Jerusalem, Jerusalem, 91904, Israel. E-mail: daniel.ofner@mail.huji.ac.il.}
\thanks{$^3$ Research supported in part by Israel Science Foundation Grant No.\ 1378/20}
\begin{document}
\maketitle
\sloppy

\begin{abstract}
We study mesoscopic fluctuations of orthogonal polynomial ensembles on the unit circle. We show that asymptotics of such fluctuations are
stable under decaying perturbations of the recurrence coefficients, where the appropriate decay rate depends on the scale considered. By
directly proving Gaussian limits for certain constant coefficient ensembles, we obtain mesoscopic scale Gaussian limits for a large class of
orthogonal polynomial ensembles on the unit circle.

As a corollary we prove mesocopic central limit theorems (for all mesoscopic scales) for the $\beta=2$ circular Jacobi ensembles with real parameter $\delta>-1/2$. 
\end{abstract}

\section{Introduction}

Let $\mu$ be a probability measure on the unit circle, $\partial \mathbb{D}$, whose support is an infinite set. The \emph{orthogonal polynomial ensemble} (OPE)
of size $n$ associated with the underlying measure $\mu$ is a probability measure on the $n$-dimensional torus proportional to

\begin{equation}\label{eq:OPE}
   \Pi_{1\leq j<k\leq n}
\left|e^{i\theta_k}-e^{i\theta_j}\right|^2\Pi_{1\leq j \leq n}d\mu(\theta_j).
\end{equation}

Such point processes arise naturally in random matrix theory and in statistical mechanics. For example, one of the best known OPE's on the
unit circle is the Circular Unitary Ensemble (CUE) where $d\mu(\theta)=d\theta/2\pi$. The CUE describes the joint eigenvalue distribution of a
randomly chosen $n\times n$ unitary matrix with respect to the Haar measure on the unitary group $U(n)$.

Another well studied family of examples is that of the `$\beta=2$' circular Jacobi ensembles, also known as the Hua-Pickrell ensembles, where $d\mu_\delta(\theta)=\left(1-e^{-i\theta} \right)^\delta\left( 1-e^{i\theta}\right)^{\overline{\delta}}d\theta$ where $\textrm{Re}\delta>-1/2$ (see \cite[Section 3]{Forrester}). Such ensembles arise naturally from the study of the characteristic polynomial of a Haar distributed element in the $n$-dimensional unitary group and have been of considerable interest both in the context of random matrix theory and that of group representations \cite{borodin-olsh,Bourgade,Bour-Nik-Rou1,Bour-Nik-Rou2,Hua,liu,Neretin,Pickrell1,Pickrell2,witt-forr}.

The underlying measure $\mu$ admits a sequence of monic orthogonal polynomials $\left(\Phi_j\right)_{j=0}^\infty$, uniquely defined by
\begin{equation} \label{eq:monicOP}
\deg(\Phi_j)=j, \quad \int \Phi_j(z) \overline{P(z)}d\mu(z)=0
\end{equation}
for any polynomial $P$ with $\deg(P)<j$. The polynomials $\left(\Phi_j\right)_{j=0}^\infty$ are known to satisfy a two term recurrence
relation
\beq
z\Phi_n(z)=\Phi_{n+1}(z)+\overline{\alpha_n}z^n\overline{\Phi_n\left(1/\overline{z}\right)}
\eeq
where $\alpha_n\in \mathbb{D}$ (see \cite{Szego}). In fact, this relation defines a bijection between probability measures with infinite
support on $\partial \mathbb{D}$ and sequences $\left(\alpha_n\right)_{n=0}^\infty$  with $\alpha_n \in \mathbb{D}$ for all $n$ (see, e.g., \cite{simonopuc}) (the
sequence $\left(\alpha_n\right)_{n=0}^\infty$ associated with a measure is often referred to as the measure's \emph{Verblunsky} coefficients).
As in several previous works (\cite{bi-or-ens,realope,duits-kozhan,hardy,KVA,Lambert,Ledoux1,Ledoux2} and references therein), the fundamental idea behind this work is that the
asymptotic properties of a measure's Verblunsky coefficients can be used to determine the asymptotic properties of the associated OPE.

In this work we focus on \emph{mesoscopic} fluctuations of the empirical (aka normalized counting) measure of the point process, namely the
measure
$$\nu_n=\frac{1}{n}\sum_{k=1}^n\delta_{e^{i\theta_k}}$$
where the points $e^{i\theta_k}$ are picked randomly according to the distribution \eqref{eq:OPE}. The mesoscopic scales are intermediate
scales between the microscopic and macroscopic scales in the sense that they involve focusing on the point process on an arc of length $\sim
\frac{1}{n^\gamma}$ where $0<\gamma<1$. For nice enough underlying measures the number of points in such an arc is $\sim n^{1-\gamma}$.

Our main purpose is to provide a tool to control the mesoscopic fluctuations of sufficiently nice sampling functions of an OPE, using the Verblunsky coefficients associated with the corresponding measure. This will allow us to deduce a mesoscopic Central Limit Theorem (CLT) for a large class of such ensembles. Our main results say that a mesoscopic CLT holds for mesoscopic scales, as long as the Verblunsky coefficients converge fast enough (depending on the scale) to a limit. This is the unit circle analog of the work carried out in \cite{realope}. However, the translation to the unit circle, and in particular the necessary analysis of the constant coefficient case, is not as straightforward as one might hope. 

A particular consequence of our results is a mesoscopic CLT for the case that the $(\alpha_n)_{n=1}^\infty$ satisfy $\alpha_n=\mathcal{O}(n^{-1})$. The case $\alpha_n=0$ corresponds to the CUE introduced above. In that case a mesoscopic CLT (though with a slightly different formulation from ours) was established by Soshnikov in \cite{soshnikov}. Mesoscopic fluctuations have been studied in other contexts in various works, including \cite{duits-berggren, Bot-Kho1, Bot-Kho2, realope,Fyo-kho-sim, he-knowles, Lambert-Johansson, Lambert-meso, landon-sosoe, lod-simm}, but we are not aware of any other results up to now, on circular models. As a corollary of our main results we shall obtain a mesocopic CLT for the Hua-Pickrell ensembles with $\delta \in \mathbb{R}$ (see Theorem \ref{thm:Hua-Pickrell} below).

A useful tool in the study of asymptotics of point processes is that of linear statistics: given a point process with $n$ points on a space
$\Xi$ and a (nice) function $f:\Xi \rightarrow \mathbb{C}$, the associated linear statistic is the random variable
\begin{equation} \nonumber
X_{f,n}=\sum f(x_i)
\end{equation}
where the sum is over the $n$ random points of the point process. In our case, in order to study mesoscopic fluctuations of $\nu_n$ we want to
also let $f$ depend on $n$ in such a way that it is effectively supported in an arc of size $n^{-\gamma}$ around a certain fixed $\theta_0$.
The function
$$
f_z(\theta)=\textrm{Re} \frac{e^{i\theta}+z}{e^{i\theta}-z}
$$
(i.e.\ the Poisson kernel) is a convenient candidate for such a manipulation and will be the starting point for our analysis. Thus, for a
fixed $\gamma \in (0,1)$ and $\eta$ with $\textrm{Re}\eta>0$ we define
\begin{equation} \label{eq:PoissonScaled}
\Psi_{n,\gamma,\eta} \left(\theta \right)=\frac{1}{n^\gamma}\text{Re}\left(\frac{e^{i\theta}+\left(1-\frac{\eta}{n^\gamma}\right)}
{e^{i\theta}-\left(1-\frac{\eta}{n^\gamma}\right)}\right),
\end{equation}
and for a fixed $\theta_0$ we shall consider the linear statistic
\begin{equation} \label{eq:PoissonLinStat}
X_{\Psi_{n,\gamma,\eta}}^{\theta_0}=\sum_{k=1}^n\Psi_{n,\gamma,\eta} \left(\theta_k-\theta_0 \right).
\end{equation}

We shall prove
\begin{theorem}\label{thm:CLTPoissonIntro}
Let $\mu$ be a probability measure on the unit circle with Verblunsky coefficients $\left(\alpha_n\right)_{n=0}^\infty$ such that
$$\alpha_n=\alpha+\mathcal O(n^{-\beta})$$
as $n\to \infty$, for some $-1<\alpha<1$, $0<\beta<1$. Then for any $\theta_0\in \supp (\mu)^\circ $  and $0<\gamma<\beta$ we have that
\begin{equation}\label{eq:CLTPoisson}
X_{\Psi_{n,\gamma,\eta}}^{\theta_0}-\bbE X_{\Psi_{n,\gamma,\eta}}^{\theta_0}\xrightarrow[n\to
\infty]{\mathcal{D}}\mathcal{N}\left(0,\sigma^2\right),
\end{equation}
as $n\rightarrow \infty$. Where $\xrightarrow[n\to \infty]{\mathcal{D}}$ denotes convergence in distribution and $\mathcal{N}(0,\sigma^2)$ is the normal
distribution with variance
\begin{equation}\label{eq:VariancePoisson}
\sigma^2=\sigma(\eta)^2=\frac{2}{\eta+\overline{\eta}}.
\end{equation}
\end{theorem}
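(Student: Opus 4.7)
The strategy, following the outline in the introduction, is twofold: (i) establish the CLT directly for the constant-coefficient ensemble with $\alpha_n\equiv\alpha$, and (ii) prove a stability theorem showing that the CLT is preserved under perturbations of the Verblunsky coefficients decaying at rate $n^{-\beta}$, provided the mesoscopic scale satisfies $\gamma<\beta$. Theorem~\ref{thm:CLTPoissonIntro} is then a direct consequence of combining the two steps.

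\textbf{Base case $\alpha_n\equiv\alpha$.} The OPE is a determinantal point process whose correlation kernel is the Christoffel--Darboux kernel $K_n(z,w)=\sum_{k=0}^{n-1}\varphi_k(z)\overline{\varphi_k(w)}$ of the orthonormal polynomials for $\mu_\alpha$. With constant Verblunsky coefficients, the transfer matrix for the Szeg\H{o} recursion is constant and can be diagonalized explicitly, giving an essentially closed-form description of the $\varphi_k$ and thus precise mesoscopic-scale asymptotics for $K_n$ on windows of size $n^{-\gamma}$ around a bulk point $\theta_0\in\supp(\mu_\alpha)^\circ$. Cumulants of $X_{\Psi_{n,\gamma,\eta}}^{\theta_0}$ are iterated traces in $K_n$ paired against $\Psi_{n,\gamma,\eta}$. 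Substituting the kernel asymptotics and exploiting the Poisson-kernel identity
\beq\nonumber
\frac{1}{2\pi}\int_0^{2\pi}f_z(\theta)\,f_w(\theta)\,d\theta=\Re\frac{1+z\bar w}{1-z\bar w}
\eeq
with $z=1-\eta/n^\gamma$ and $w=1-\bar\eta/n^\gamma$, the second cumulant should limit to $2/(\eta+\bar\eta)$, while each cumulant of order $\geq 3$ picks up extra powers of $n^{-\gamma}$ and vanishes in the limit.

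\textbf{Perturbation step.} The hypothesis $\alpha_n=\alpha+\mathcal O(n^{-\beta})$ means the CMV matrix of $\mu$ differs from that of $\mu_\alpha$ by an operator whose entries decay like $n^{-\beta}$. Writing cumulants as traces of polynomials in the truncated CMV matrix applied to $\Psi_{n,\gamma,\eta}$, a resolvent / functional calculus comparison shows that the cumulants of the statistic under $\mu$ differ from those under $\mu_\alpha$ by error terms of schematic size $n^{c\gamma}\cdot n^{-\beta}$; these are $o(1)$ precisely under the hypothesis $\gamma<\beta$, which is how the condition on $\gamma$ enters naturally. Combining the two steps finishes the proof.

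\textbf{Main obstacle.} The hard part is the constant-coefficient base case. For the CUE ($\alpha=0$) the orthonormal polynomials are $z^k$ and one reduces to standard Fourier analysis on the circle, as in Soshnikov's argument. When $\alpha\neq 0$ the measure is no longer rotation invariant and its support is a proper subarc, so the CD kernel is genuinely more complicated; one needs asymptotics uniform enough on mesoscopic scales in the bulk to extract the correct variance and to suppress all higher cumulants. As emphasized in the introduction, this is the nontrivial circular analog of the real-line work in \cite{realope} and is not a routine translation of those arguments.
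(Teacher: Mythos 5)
Your high-level two-step strategy --- prove the CLT for the constant-coefficient ensemble $\alpha_n\equiv\alpha$, then show stability under $\mathcal O(n^{-\beta})$ perturbations when $\gamma<\beta$ --- is exactly the decomposition the paper uses (Proposition \ref{prop:PoissonForConstant} together with Theorem \ref{thm:UniversalityPoissonIntro}). However, your proposed implementation of the base case is genuinely different from the paper's, and as sketched it has gaps that leave the hard part unresolved.

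For the constant-coefficient case the paper does \emph{not} work with asymptotics of the Christoffel--Darboux kernel. Instead it writes cumulants as restricted traces of functions of the GGT matrix $G_\alpha$ (for $\alpha\neq 0$) or the CMV matrix (for $\alpha=0$), exploits the almost-banded structure of $(G_\alpha-\omega_n)^{-1}$ via a Combes--Thomas estimate (Proposition \ref{prop:CombsThomas}) to replace $G_\alpha$ by the Toeplitz operator $T(\phi)$, then inverts $T(\phi)-\omega_n$ by a Wiener--Hopf factorization and peels off a negligible Hankel correction, arriving at a scalar Toeplitz operator whose cumulants are analyzed with the Borodin--Okounkov-type identity \eqref{eq:momentgenFreddet1}. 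In this framework the variance emerges as $\lim_n\Tr H(\kappa^{(n,\omega)})H(\widetilde\kappa^{(n,\omega)})$ and the vanishing of cumulants of order $m\ge 3$ is a consequence of the Fredholm-determinant correction factor tending to $1$ in trace norm; there is no ``extra power of $n^{-\gamma}$'' mechanism. The choice of GGT versus CMV is itself a nontrivial point the paper flags: using CMV for $\alpha\neq 0$ would produce non-diagonal block Toeplitz structure (as in \cite{duits-kozhan}), and the paper avoids this by switching to GGT, which requires the separate argument that the orthogonal polynomials are dense in $L^2(\mu_\alpha)$ when $\alpha\neq 0$.

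Two concrete gaps in your sketch. First, the assertion that ``each cumulant of order $\geq 3$ picks up extra powers of $n^{-\gamma}$'' is not a correct account of why they vanish: $\|\Psi_{n,\gamma,\eta}\|_\infty\sim n^\gamma$, so the prefactor $n^{-\gamma}$ does not make the test function small, and the cancellation in the cumulant expansion of a determinantal process (the alternating sum over compositions in \eqref{eq:cmvcumulant}) is where the suppression actually comes from. You would need an argument at the level of the Borodin--Okounkov identity, or else a careful mesoscopic-sine-kernel cumulant computation, neither of which is indicated. Second, the Poisson-kernel convolution identity $\frac{1}{2\pi}\int f_z f_w\,d\theta = \Re\frac{1+z\bar w}{1-z\bar w}$ is a statement about Lebesgue measure on $\partial\mathbb D$ and is not the variance of the linear statistic: the variance of $X_f$ for a determinantal projection kernel is $\frac12\iint |f(\theta)-f(\phi)|^2|K_n(\theta,\phi)|^2\,d\mu\,d\mu$, which couples $f$ to $K_n$, not to Lebesgue measure. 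For $\alpha\neq 0$ the equilibrium measure $\mu_\alpha$ is supported on an arc with non-constant density, so this distinction matters. Extracting the stated variance by kernel asymptotics would require uniform (Plancherel--Rotach-type) bulk asymptotics of the CD kernel on mesoscopic windows --- substantially more work than the operator-theoretic route the paper takes, and you do not sketch how to obtain them.

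Your perturbation step is closer in spirit to the paper's (Proposition \ref{prop:CMV-perturbation} is indeed a resolvent comparison in which the condition $\gamma<\beta$ enters as you describe), but the actual estimate requires the trace-norm localization machinery (the almost-banded bound of \eqref{eq:comb-tom} plus \cite[Lemma 3.2]{realope}) to control the off-window contributions, and a separate argument --- using the boundedness of the orthonormal polynomials and of the density of $\mu_0$ near $\theta_0$ --- to show $\sum_j\Re(F_0(\omega_n))_{jj}=\mathcal O(n^\beta)$. These are not automatic from ``entries decay like $n^{-\beta}$''.
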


The convergence described in Theorem \ref{thm:CLTPoissonIntro} can be extended to a more general class of functions. However, generally, rescaling
functions on the circle is not straightforward.  Indeed, a standard way of rescaling a linear statistic $X_f$ in the case of a process on $\mathbb{R}$, for $f$ with appropriate decay at $\pm \infty$, is to replace $f$ with $f(n^\gamma\cdot)$. However, this approach does not naturally translate to the setting of the unit circle, since functions on $\partial \mathbb{D}$ naturally correspond to \emph{periodic} functions on $\mathbb{R}$. Thus, the rescaling we use for $\Psi$ above is not of the form $\Psi(n^\gamma\cdot)$. 

One possible solution to this issue is to simply identify $\partial \mathbb{D}$ with the interval $[0,2\pi) \subseteq \mathbb{R}$ and to use the `standard' scaling. However, while this approach may be simpler in terms of notation, we choose the more `geometrical' approach involving mapping the unit circle to the line using a M\"obius transformation and scaling the functions on $\mathbb{R}$ before pulling them back to $\partial \mathbb{D}$. As we shall show in Section 5 below, under this approach, the scaling used above for $\Psi$ does indeed correspond to the standard scaling under this procedure. Moreover, this approach will also allow us to exploit the density arguments of \cite[Section 5]{realope} to extend the result of Theorem \ref{thm:CLTPoissonIntro} to functions with support contained in a subarc of $\partial \mathbb{D}$. 

Thus, let $f:\partial\mathbb{D}\to \mathbb {R}$ with $\text{supp}f\subset \left \{e^{i\theta} \mid \theta \in (-\pi,\pi)\right \}$ (we abuse
notation and write below  $\supp f \subseteq (-\pi,\pi)$ for such $f$). Let $\mathcal{M}$ be the \textit{M\" obius} transformation
\begin{equation} \label{eq:Mobius}
\begin{split} & \mathcal{M}:\mathbb{H}_+\to \overline{\mathbb{D}} \\
& \mathcal{M}(z)=\frac{i-z}{i+z}
\end{split}
\end{equation}
where $\mathbb{H}_+=\left\{z\in \mathbb{C}\:|\:Im(Z)\geq0\right\}$. Note that $\mathcal{M}(\overline{\mathbb{R}})=\partial\mathbb{D}$ with $\mathcal{M}(\pm \infty)=-1$ and $\mathcal{M}^{-1}(e^{i\theta})=\tan(\theta/2)$.  Fixing $\theta_0$ the associated shifted and rescaled linear statistic is
\beq \label{eq:RealRescaling}
\begin{split}
X_{\widetilde{f},n,\gamma}^{\theta_0}&=\sum_{k=1}^n f\left(\mathcal{M}\left(n^\gamma\left(\mathcal{M}^{-1}\left(e^{i\left(\theta_k-\theta_0\right)}\right)\right)\right)\right) \\
&=\sum_{k=1}^n f\left(\mathcal{M}\left(n^\gamma \tan \left(\frac{\theta_k-\theta_0}{2}\right) \right) \right)
\end{split}
\eeq
Thus, $X_{\widetilde{f},n,\gamma}^{\theta_0}$ takes into consideration only those points which are within a distance $\sim\frac{1}{n^\gamma}$ around $e^{i\theta_0}$, meaning that we consider the mesoscopic scale.

Theorem \ref{thm:CLTPoissonIntro} extends to
\begin{theorem}\label{thm:CLTgeneral}
Let $\mu$ be a probability measure on the unit circle with Verblunsky coefficients $\left(\alpha_n\right)_{n=0}^\infty$ such that
$$\alpha_n=\alpha+\mathcal O(n^{-\beta})$$
as $n\to \infty$, for some $-1<\alpha<1$, $0<\beta<1$. Then for any continuously differentiable $f$ whose support is a compact subset of $\partial \mathbb{D}\setminus \{-1\}$, and any $\theta_0\in \supp (\mu)^\circ $ and $0<\gamma<\beta$ we
have that
\begin{equation}\label{eq:CLTgeneral}
X_{\widetilde{f},n,\gamma}^{\theta_0}-\bbE X_{\widetilde{f},n,\gamma}^{\theta_0} \to \mathcal{N}(0, \sigma_f^2),
\end{equation}
as $n\rightarrow \infty$ in distribution, where $\mathcal{N}(0,\sigma_f^2)$ is the normal distribution with variance
\begin{equation}\label{eq:CLTgeneralvariance}
\sigma_f^2= \frac{1}{ 4\pi^2} \iint \left( \frac{f\circ \mathcal{M}(x)-f\circ \mathcal{M}(y)}{x-y} \right)^2 dxdy.
\end{equation}
\end{theorem}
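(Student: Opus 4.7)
The plan is to deduce Theorem~\ref{thm:CLTgeneral} from Theorem~\ref{thm:CLTPoissonIntro} via approximation by linear combinations of Poisson kernels, echoing the density arguments of \cite[Section 5]{realope}. Setting $g = f\circ \mathcal{M}$ yields a $C^1$ function with compact support in $\bbR$, and
\[
X_{\widetilde{f},n,\gamma}^{\theta_0}=\sum_k g\!\left(n^\gamma \tan\tfrac{\theta_k-\theta_0}{2}\right),
\]
which is a standard mesoscopic linear statistic on $\bbR$. As promised in the introduction, a direct computation (which Section~5 of the paper presumably carries out) shows that under this identification each $\Psi_{n,\gamma,\eta}$ corresponds, up to a multiplicative constant, to a Poisson kernel on $\bbH_+$ with pole at height $\sim 1/n^\gamma$, rescaled by $n^\gamma$. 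Thus Theorem~\ref{thm:CLTPoissonIntro} already furnishes a mesoscopic CLT for the one-parameter family of test functions $\{P_\eta : \Re\eta > 0\}$.

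The first step would be to upgrade Theorem~\ref{thm:CLTPoissonIntro} to joint convergence of finitely many $X^{\theta_0}_{\Psi_{n,\gamma,\eta_j}}$, $j=1,\dots,N$, for distinct $\eta_j$ with $\Re\eta_j>0$. This joint statement should follow by the same method (applied to an arbitrary linear combination $\sum_j t_j\Psi_{n,\gamma,\eta_j}$) that gives the scalar statement, producing a centered Gaussian vector whose covariance is the polarization of \eqref{eq:VariancePoisson}. By linearity, any finite combination $h=\sum_j c_j P_{\eta_j}$ then satisfies a CLT with limit variance
\[
\sigma_h^2 = \frac{1}{4\pi^2}\iint\left(\frac{h(x)-h(y)}{x-y}\right)^2 dx\,dy,
\]
since a short computation identifies the polarized quadratic form with the $H^{1/2}(\bbR)$ seminorm, matching \eqref{eq:CLTgeneralvariance}.

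The second step is the approximation. The span of $\{P_\eta : \Re\eta > 0\}$ is $H^{1/2}(\bbR)$-dense in the class of functions arising as $g = f\circ\mathcal{M}$ for $f$ as in the theorem (as follows, e.g., from the Poisson representation of harmonic extensions, or via Fourier inversion). Given $\veps>0$, one picks $g_\veps=\sum_j c_j^\veps P_{\eta_j^\veps}$ with $\|g-g_\veps\|_{H^{1/2}}<\veps$; the CLT for the linear statistic of $g_\veps$ with the correct variance is in hand from the previous step. A standard Chebyshev/tightness argument then upgrades this to convergence in distribution for $g$ itself, \emph{provided} one can bound the variance of the error linear statistic uniformly in $n$ by a constant multiple of $\|h\circ\mathcal{M}\|^2_{H^{1/2}(\bbR)}$, where $h = f-f_\veps$.

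Obtaining this uniform-in-$n$ variance estimate, of the form $\Var(X^{\theta_0}_{\widetilde{h},n,\gamma}) \leq C\|h\circ\mathcal{M}\|^2_{H^{1/2}(\bbR)}$ with $C$ independent of $n$, is the main obstacle and the technical heart of the density argument. Such a bound should be accessible from the constant-coefficient analysis underlying Theorem~\ref{thm:CLTPoissonIntro}, together with the $\mathcal{O}(n^{-\beta})$-perturbation stability for Verblunsky coefficients developed elsewhere in the paper; once it is in hand, the proof closes via the standard two-step approximation sketched above.
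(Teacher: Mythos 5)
Your overall strategy --- reduce to the Poisson-kernel CLT, then approximate a general $C_c^1$ test function by finite linear combinations of Poisson kernels and close via a uniform-in-$n$ variance estimate --- is indeed the structure of the paper's proof, and the variance identity you state (matching $\sigma_f^2$ with the squared $H^{1/2}$ seminorm of $f\circ\mathcal{M}$) is correct. However, there is a genuine gap in the norm you propose for the key variance estimate.

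You want a bound of the form $\Var X_{\widetilde{h},n,\gamma}^{\theta_0} \leq C\,\|h\circ\mathcal{M}\|^2_{H^{1/2}(\bbR)}$ with $C$ uniform in $n$. The paper does not prove this, and it is not easy: the variance is $\tfrac{1}{2}\iint |h(\theta)-h(\phi)|^2 |K_n(\theta,\phi)|^2 d\mu\,d\mu$, and bounding this directly by the $H^{1/2}$ seminorm would require uniform two-sided control of $|K_n|^2$ against $|e^{i\theta}-e^{i\phi}|^{-2}$, which is a delicate fact about the kernel that the paper deliberately avoids. The paper instead works with the \emph{weighted Lipschitz} norm $\mathcal{L}_\omega$, $\|g\|_{\mathcal{L}_\omega} = \sup_{x,y}\sqrt{1+x^2}\sqrt{1+y^2}\,|\tfrac{g(x)-g(y)}{x-y}|$, because this sup-type norm admits a clean pointwise factorization of the integrand (Proposition \ref{prop:varlomega1}): $\Var X_{\widetilde{f},n,\gamma}^{\theta_0}\leq \|f\circ\mathcal{M}\|^2_{\mathcal{L}_\omega}\cdot \Var X_{\widetilde{\mathfrak{h}},n,\gamma}^{\theta_0}$, reducing the whole problem to boundedness of a \emph{single} reference variance (Propositions \ref{prop:BoundedVariance} and \ref{cor:contvar}), which is handled by a CMV-resolvent computation. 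The density Lemma \ref{lem:density1}, taken from \cite{realope}, is likewise stated in $\mathcal{L}_\omega$. Note that $\mathcal{L}_\omega$ dominates the $H^{1/2}$ seminorm (this is why the limiting variance formula persists after approximation), so switching your density and variance estimates to $H^{1/2}$ would be demanding a \emph{stronger} uniform variance bound from a \emph{weaker} approximation --- these two changes work against each other.

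A second, smaller point you gloss over: the function $\Psi_{n,\gamma,\eta}$ of Theorem \ref{thm:CLTPoissonIntro} is \emph{not} exactly the M\"obius pullback of a rescaled Poisson kernel on $\bbR$; it differs by a factor of $(1+\cos(\theta-\theta_0))/2$ and a reparametrization of $\eta$. The paper needs Lemma \ref{lemma:poisso-poisson}, which shows the variance of the difference tends to zero, to transfer the Poisson CLT to the rescaled linear statistic $X_{\widetilde{h},n,\gamma}^{\theta_0}$. You call this a ``direct computation,'' but it is a separate estimate that requires some care.
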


\begin{remark} 
With a slight abuse of notation we shall denote by $C_c^1(-\pi,\pi)$ the class of continuously differentiable functions supported on a compact subset of $\partial \mathbb{D}\setminus \{-1\}$.
\end{remark}

Theorems \ref{thm:CLTPoissonIntro} and \ref{thm:CLTgeneral} are our two main results, and can be seen as CLT's, though
without the classical normalization of $\sqrt{n}$. The absence of this normalization is a well known phenomenon, which arises from the strong repulsion between points of
an OPE leading to boundedness of the variance of linear statistics for differentiable $f$. The reason for starting with the fluctuations
associated with $\Psi_{n,\gamma,\eta}$ is the fact that $\Psi_{n,\gamma,\eta}$ is an approximation of identity, hence has all its mass
concentrated in an arc around $0$, of size proportional to $\frac{1}{n^\gamma}$. We will use Theorem \ref{thm:CLTPoissonIntro} to obtain
Theorem \ref{thm:CLTgeneral} via an approximation argument.

Some explicit examples for measures, $\mu$, for which our results immediately imply a mesoscopic CLT for any $\gamma\in(0,1)$ are (the names we use are from \cite[Section 1.6]{simonopuc}):
\begin{itemize}
    \item The degree one Bernstein-Szeg\H o measure, where $d\mu(\theta)= \frac{1-r}{\left|1-re^{i\theta}\right|}\frac{d\theta}{2\pi}$ for
        $r\in(0,1)$ and hence $\alpha_n=0$ for $n>1$.
    \item The `single inserted mass point measure' where\\ $d\mu(\theta)=(1-h)\frac{d\theta}{2\pi}+h\delta(\theta-\theta_0)$ for $h\in(0,1)$
        and therefore $\alpha_n=\mathcal{O}\left(n^{-1}\right)$
    \item The `single nontrivial moment measure' where $d\mu(\theta)=\left(1-\cos(\theta)\right)\frac{d\theta}{2\pi}$ and therefore
        $\alpha_n=-\frac{1}{n+2}.$
\end{itemize}

We note the following corollary of Theorem \ref{thm:CLTgeneral}.
\begin{theorem} \label{thm:Hua-Pickrell}
Let $\delta \in \mathbb{R}$ with $\delta>-1/2$ and consider the corresponding Hua-Pickrell ensemble, namely the orthogonal polynomial ensemble
\beq \no 
c_{\delta,n} \prod_{1\leq j<k\leq n}\left|e^{i\theta_k}-e^{i\theta_j} \right|^2\prod_{1\leq j \leq n}\left(1-e^{-i\theta_j} \right)^\delta\left( 1-e^{i\theta_j}\right)^{\delta}d\theta_j.
\eeq
Then for any continuously differentiable $f$ whose support is a compact subset of $\partial \mathbb{D}\setminus \{-1\}$, and any $\theta_0\in [0,2\pi) $ and $0<\gamma<1$ we
have that
\begin{equation}\label{eq:CLTHuaPickrell}
X_{\widetilde{f},n,\gamma}^{\theta_0}-\bbE X_{\widetilde{f},n,\gamma}^{\theta_0} \to \mathcal{N}(0, \sigma_f^2),
\end{equation}
as $n\rightarrow \infty$ in distribution, where $\sigma_f^2$ is as in \eqref{eq:CLTgeneralvariance} above.
\end{theorem}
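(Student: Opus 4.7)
The plan is to deduce Theorem \ref{thm:Hua-Pickrell} as an immediate corollary of Theorem \ref{thm:CLTgeneral}, by verifying its two hypotheses for the Hua--Pickrell measure $d\mu_\delta$ with real parameter $\delta > -1/2$: namely that every $\theta_0 \in [0,2\pi)$ lies in $\supp(\mu_\delta)^{\circ}$, and that the Verblunsky coefficients of $\mu_\delta$ satisfy $\alpha_n = \alpha + \mathcal{O}(n^{-\beta})$ for some $\alpha \in (-1,1)$ and $\beta$ large enough to accommodate any desired mesoscopic scale $\gamma < 1$.

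For the support condition, note that when $\delta$ is real one has
\[
(1-e^{-i\theta})^\delta(1-e^{i\theta})^\delta = |1-e^{i\theta}|^{2\delta},
\]
so the density of $\mu_\delta$ with respect to Lebesgue measure is strictly positive on $\partial \mathbb{D} \setminus \{1\}$ and has only an integrable singularity at $\theta = 0$ (since $2\delta > -1$). Hence every open arc of $\partial\mathbb{D}$ has positive $\mu_\delta$-measure, so $\supp(\mu_\delta) = \partial\mathbb{D}$, and consequently $\supp(\mu_\delta)^{\circ} = \partial\mathbb{D}$. Therefore every $\theta_0 \in [0, 2\pi)$ meets the requirement of Theorem \ref{thm:CLTgeneral}.

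For the Verblunsky coefficients, I would appeal to the classical fact, established in the papers cited in the introduction (e.g.\ \cite{Bour-Nik-Rou1, borodin-olsh, witt-forr}), that the Verblunsky coefficients of the Hua--Pickrell measure admit an explicit closed form. In the real-$\delta$ case this form is real-valued and decays like $1/n$; in particular $\alpha_n = 0 + \mathcal{O}(n^{-1})$, which trivially implies $\alpha_n = 0 + \mathcal{O}(n^{-\beta})$ for every $\beta \in (0,1)$, with limit $\alpha = 0 \in (-1,1)$. The requirement that $\delta$ be real (as opposed to complex with $\Re \delta > -1/2$) is what ensures that no additional oscillatory factor appears in the Verblunsky coefficients, so that they converge to a constant $\alpha$ rather than merely being bounded.

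Combining the two ingredients, the proof concludes as follows: given any $\gamma \in (0,1)$, choose any $\beta \in (\gamma, 1)$. Then for every $\theta_0 \in [0, 2\pi)$ and every $f \in C_c^1(-\pi, \pi)$, Theorem \ref{thm:CLTgeneral} applied with parameters $\alpha = 0$ and $\beta$ yields exactly the conclusion \eqref{eq:CLTHuaPickrell}, with variance $\sigma_f^2$ given by \eqref{eq:CLTgeneralvariance}. The only real obstacle here is locating (or deriving) the precise asymptotics of the Verblunsky coefficients of $\mu_\delta$ in the literature; once this is done, the theorem is a one-line corollary.
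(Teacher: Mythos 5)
Your approach is valid but genuinely different from the paper's, and it is worth noting what each buys. The paper does \emph{not} invoke an explicit formula for the Verblunsky coefficients of $\mu_\delta$. Instead, after observing $d\mu_\delta(\theta)=(2-2\cos\theta)^\delta\,d\theta$, it pushes $\mu_\delta$ forward under the Szeg\H{o} map to a Jacobi-type measure $(2-x)^{\delta-1/2}(2+x)^{-1/2}\,dx$ on $[-2,2]$, cites \cite{KMVV} for the asymptotics $a_n=1+\mathcal{O}(n^{-2})$, $b_n=\mathcal{O}(n^{-2})$ of the resulting Jacobi recurrence coefficients, passes through Simon's relations $\alpha_{2k}=(u_k^+-u_k^-)/(u_k^++u_k^-)$, $\alpha_{2k-1}=1-\tfrac12(u_k^++u_k^-)$, uses Rakhmanov's theorem to deduce $\alpha_n\to 0$ (hence $u_n^\pm\to 1$), and then carries out a careful discrete-dynamics argument (writing $u_n^\pm=1+\xi_n^\pm$ and analyzing the recursion $\xi_{n+1}-\xi_n=-\xi_n^2/(1+\xi_n)+f_n$) to upgrade convergence to the quantitative rate $\xi_n=\mathcal{O}(n^{-1})$, hence $\alpha_n=\mathcal{O}(n^{-1})$. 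Your route instead appeals to the existence of an explicit closed form for the Verblunsky coefficients of the Hua--Pickrell/circular Jacobi measure (indeed, for real $\delta$ one has $\alpha_n=-\delta/(n+1+\delta)$; see, e.g., \cite{Bour-Nik-Rou1}), which immediately gives $\alpha_n=\mathcal{O}(n^{-1})$. This is substantially shorter than the paper's argument, but as written it is a gesture to the literature rather than a proof: you never actually state the formula or give a precise reference, and you yourself flag this as ``the only real obstacle.'' You should supply the explicit formula and a citation to make this step self-contained; once that is done, your proof is correct, and arguably more transparent than the paper's Szeg\H{o}-mapping detour. Your verification of the support condition (that $\supp(\mu_\delta)=\partial\mathbb{D}$ for real $\delta>-1/2$) is correct and matches what the paper implicitly relies upon.
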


The proof of this theorem is given towards the end of the Introduction.

Note that $\sigma_f^2$ is independent of $\theta_0$, $\alpha$ and $\gamma$, which means that for a large class of OPE's, a mesoscopic CLT
holds (for multiple scales) with the same parameters. Therefore this result can be thought of also as a universality result.

In fact, Theorems \ref{thm:CLTPoissonIntro} and \ref{thm:CLTgeneral} follow from the following universality results, which say that if two measures
have coefficient sequences whose difference approaches zeros (at an appropriate rate) then the mesoscopic fluctuations (on an appropriate scale) for the corresponding
OPE's are asymptotically equal. This is the unit circle analog of Theorem 1.2 in \cite{realope}, which deals with mesoscopic universality for
real line OPE's. In the proof we use the approach and tools developed in \cite{realope}.

\begin{theorem} \label{thm:UniversalityPoissonIntro}
Let $\mu$ and $\mu_0$ be two probability measures on $\partial \mathbb{D}$ and denote by $\{\alpha_n\}_{n=0}^\infty$ and
$\{\alpha_n^0\}_{n=0}^\infty$ the respective associated recurrence coefficients. Let $\theta_0 \in [0,2\pi)$ be such that there exists a
neighborhood $\theta_0 \in I$ on which the following two conditions are satisfied: \\
(i) $\mu_0$ restricted to $I$ is absolutely continuous with respect to Lebesgue measure and its Radon-Nikodym derivative is bounded there.  \\
(ii) The orthonormal polynomials for $\mu_0$ are uniformly bounded on $I$.\\
Assume further that
\beq 
\alpha_n-\alpha_n^0=\mathcal O(n^{-\beta})
\eeq
as $n \to \infty$ for some $0<\beta<1$.

Then for any $0<\gamma<\beta$ we have
\beq \label{eq:conclusion}
\left|\bbE \left(X_{\Psi_{n,\gamma,\eta}}^{\theta_0}-\bbE X_{\Psi_{n,\gamma,\eta}}^{\theta_0} \right)^m -\bbE_0
\left(X_{\Psi_{n,\gamma,\eta}}^{\theta_0}-\bbE_0 X_{\Psi_{n,\gamma,\eta}}^{\theta_0}\right)^m\right| \rightarrow 0
\eeq
as $n \rightarrow \infty$, where $\bbE$ and $\bbE_0$ denote the expectation with respect the orthogonal polynomial ensemble corresponding to
$\mu$ and $\mu_0$ respectively.
\end{theorem}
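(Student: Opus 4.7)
The plan is to follow the strategy of \cite[Theorem 1.2]{realope}, replacing Jacobi matrix machinery by the (extended) CMV matrix formalism for OPUC. The key observation is that for the specific test function $\Psi_{n,\gamma,\eta}$ the point $\zeta := (1-\eta/n^\gamma)e^{i\theta_0}\in\bbD$ lies at distance $\sim n^{-\gamma}$ from $\partial\bbD$, and moments of the linear statistic can be reduced to traces involving the resolvent of the CMV matrix at $\zeta$.

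\textbf{Step 1 (trace representation).}  Rewrite
\[
\Psi_{n,\gamma,\eta}(\theta-\theta_0)=\tfrac{1}{n^\gamma}\Re\tfrac{e^{i\theta}+\zeta}{e^{i\theta}-\zeta}.
\]
By the determinantal structure of the OPE, $\bbE(X_{\Psi_{n,\gamma,\eta}}^{\theta_0}-\bbE X_{\Psi_{n,\gamma,\eta}}^{\theta_0})^m$ expands as a finite linear combination of traces of the form $\Tr[(K_n M_f)^{j_1}(K_n M_f)^{j_2}\cdots]$, where $K_n$ is the Christoffel--Darboux kernel of $\mu$ and $M_f$ is multiplication by $f=\Psi_{n,\gamma,\eta}(\cdot-\theta_0)$. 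Conjugating $L^2(d\mu)$ by the unitary sending $\varphi_k\mapsto e_k$, the operator $K_n M_f K_n$ becomes $P_n f(\mathcal C)P_n$, with $\mathcal C$ the extended CMV matrix of $\mu$ and $P_n$ the projection onto the first $n$ coordinates. Because
\[
f(\mathcal C)=\tfrac{1}{n^\gamma}\Re\bigl(I+2\zeta(\mathcal C-\zeta)^{-1}\bigr),
\]
every trace in the moment expansion becomes a polynomial expression in $P_n$ and the resolvent $R(\zeta):=(\mathcal C-\zeta)^{-1}$.

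\textbf{Step 2 (resolvent comparison).}  Let $\mathcal C^0$ and $R^0(\zeta)$ be the corresponding objects for $\mu_0$. Then $\Delta := \mathcal C-\mathcal C^0$ is banded with entries that are linear combinations of $\alpha_k-\alpha_k^0=\mathcal O(k^{-\beta})$. Telescoping each trace via the second resolvent identity $R-R^0=-R\Delta R^0$, the difference of moments reduces to sums of traces of the form $\Tr[A\,R\,\Delta\,R^0\,B]$ where $A,B$ are products of $P_n$ and further resolvents. The essential quantitative input is a \emph{localised} operator-norm bound $\|P_J R^0(\zeta)P_J\|\lesssim n^\gamma$ for a smooth spectral cut-off $P_J$ of $\mathcal C^0$ supported in a neighbourhood of $e^{i\theta_0}$; this follows from conditions (i)--(ii), which give uniform control on the Poisson/Cauchy transform of $\mu_0$ at points at distance $n^{-\gamma}$ from the arc. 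The local Hilbert--Schmidt norm of $\Delta$ is controlled by $\sum_k k^{-2\beta}$, with an additional telescoping trick required when $\beta\le 1/2$, as in \cite{realope}.

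\textbf{Step 3 and main obstacle.}  Combining the $n^\gamma$ resolvent bound, the $n^{-\beta}$ coefficient decay, and the cumulant combinatorics, each term in the telescoped expression is bounded by $C_m n^{\gamma p(m)-\beta}$ for some polynomial $p(m)$ depending only on $m$; since $\gamma<\beta$ this vanishes as $n\to\infty$, proving \eqref{eq:conclusion}. The hardest step will be Step 2: establishing the local resolvent bound $\|P_J R^0(\zeta)P_J\|\lesssim n^\gamma$ and propagating localisation through the moment expansion so that only local information about $\mu_0$ near $\theta_0$ enters. This is the unit-circle analogue of the Wegner-type estimate that was the core technical difficulty in \cite{realope}; its transfer from Jacobi matrices to CMV matrices, together with correctly handling the unitary (rather than self-adjoint) spectral calculus for the functions $f_\zeta$ whose poles accumulate on $\partial\bbD$ at rate $n^{-\gamma}$, is the delicate point.
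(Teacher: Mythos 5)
Your overall strategy is aligned with the paper's: reduce the fluctuation moments (or, as the paper prefers, the cumulants) to restricted traces involving the CMV resolvent at $\omega_n=(1-\eta/n^\gamma)e^{i\theta_0}$, compare $\mu$ and $\mu_0$ via a resolvent identity, and use the exponential off-diagonal decay of the resolvent to localize everything to a block of indices of size $\sim n^{\beta'}$ (with $\gamma<\beta'<\beta$) around $n$. The paper implements this localization via the ``almost banded'' comparison theorem of \cite{realope} (Proposition \ref{th:comparison-general}), which you allude to as ``propagating localisation,'' and then a unit-circle version of \cite[Proposition 3.3]{realope} (Proposition \ref{prop:CMV-perturbation}).

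However, the central quantitative input is misidentified in your Step 2, and this is a genuine gap. You state that the ``essential quantitative input is a localised operator-norm bound $\|P_J R^0(\zeta)P_J\|\lesssim n^\gamma$, which follows from (i)--(ii).'' But the operator-norm bound $\|(\mathcal{C}_0-\omega_n)^{-1}\|\le d(\omega_n,\partial\mathbb{D})^{-1}=\mathcal{O}(n^\gamma)$ is trivial from the unitarity of the CMV matrix and needs nothing from (i)--(ii); moreover, it is too weak to close the argument (it would only give $\frac{1}{n^\gamma}\|\widetilde{P}(F-F_0)\widetilde{P}\|_1\lesssim n^{\beta'+2\gamma}\cdot o(n^{-\beta})/n^\gamma$, which does not vanish). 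What the argument actually requires is a \emph{Hilbert--Schmidt} bound on the restricted resolvent, namely $\|\widetilde{P}\,\mathcal{G}_0(\omega_n)\|_2^2=\mathcal{O}(n^{\gamma+\beta'})$, where $\widetilde{P}$ projects onto the block of size $\sim n^{\beta'}$ around $n$. This is obtained from the identity
\[
\|\widetilde{P}\,\mathcal{G}_0(\omega_n)\|_2^2=\frac{1}{1-|\omega_n|^2}\sum_{j\in\text{block}}\Re\bigl(F_0(\omega_n)\bigr)_{jj},
\]
and it is precisely the uniform-in-$j$ boundedness of $\Re(F_0(\omega_n))_{jj}=\int|\chi_j|^2\,\Re\frac{e^{i\theta}+\omega_n}{e^{i\theta}-\omega_n}\,d\mu_0$ that requires conditions (i) and (ii): splitting into $I$ and its complement, (ii) controls the contribution near $\theta_0$ and (i) ensures the Poisson integral on $I$ is bounded, while away from $I$ the kernel is $\mathcal{O}(n^{-\gamma})$. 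Feeding the Hilbert--Schmidt bound (rather than an operator-norm bound) into the Schatten--Hölder estimate $\|ABC\|_1\le\|A\|_2\|B\|_\infty\|C\|_2$ is what yields $o(n^\gamma)$ for the trace norm, and hence $o(1)$ after the $n^{-\gamma}$ prefactor. Finally, your claim that the local Hilbert--Schmidt norm of $\Delta=\mathcal{C}-\mathcal{C}_0$ is ``controlled by $\sum_k k^{-2\beta}$, with an additional telescoping trick required when $\beta\le 1/2$'' is off: what is needed is the operator-norm bound $\|(\mathcal{C}-\mathcal{C}_0)P_{\text{block}}\|_\infty=o(n^{-\beta'})$, which follows directly from $\alpha_n-\alpha_n^0=\mathcal{O}(n^{-\beta})$ together with the fact that the block is centered at index $n$; no telescoping and no restriction to $\beta>1/2$ arise in this theorem.
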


Similarly,

\begin{theorem} \label{thm:main-result}
For $\mu$, $\mu_0$, $\theta_0$, $\beta$ and $\gamma$ satisfying the conditions of Theorem \ref{thm:UniversalityPoissonIntro} and any $f\in
C_c^1((-\pi,\pi))$ and any $0<\gamma<\beta$ we have
\beq \label{eq:conclusion}
\left|\bbE \left(X_{\widetilde{f},n,\gamma}^{\theta_0}-\bbE X_{\widetilde{f},n,\gamma}^{\theta_0} \right)^m -\bbE_0
\left(X_{\widetilde{f},n,\gamma}^{\theta_0}-\bbE_0 X_{\widetilde{f},n,\gamma}^{\theta_0}\right)^m\right| \rightarrow 0
\eeq
as $n \rightarrow \infty$.
\end{theorem}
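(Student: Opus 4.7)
The plan is to deduce Theorem \ref{thm:main-result} from Theorem \ref{thm:UniversalityPoissonIntro} via an approximation argument, following the scheme of \cite[Section 5]{realope}. First I would transport the problem to the real line. For $f\in C_c^1((-\pi,\pi))$, set $g := f\circ \mathcal{M}$; the hypothesis that $\operatorname{supp} f$ avoids $-1 = \mathcal{M}(\pm\infty)$ is exactly what guarantees that $g$ has compact support on $\mathbb{R}$. By \eqref{eq:RealRescaling}, $X_{\widetilde{f},n,\gamma}^{\theta_0} = \sum_k g(n^\gamma \tan((\theta_k-\theta_0)/2))$, which is the standard mesoscopic rescaling of $g$ pushed back to the circle. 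As the authors indicate is shown in Section 5, the function $\Psi_{n,\gamma,\eta}$ arises in precisely this way from the real-line Poisson kernel $g_\eta(x) = \mathrm{Re}\frac{1}{x-\eta}$ (up to harmless bounded factors), so Theorem \ref{thm:UniversalityPoissonIntro} is exactly the special case of Theorem \ref{thm:main-result} whose pull-back $g$ is one of these Poisson kernels.

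Second, I would approximate a generic $g\in C_c^1(\mathbb{R})$ by a finite linear combination $g_\varepsilon = \sum_{j=1}^{N} c_j\, g_{\eta_j}$ of such Poisson kernels in a seminorm $\|\cdot\|$ that controls the asymptotic variance \eqref{eq:CLTgeneralvariance}; this density statement is the key real-line lemma exploited in \cite[Section 5]{realope}, and the M\"obius change of variables preserves both its statement and the local hypotheses (i), (ii) on $\mu_0$ near $\theta_0$. Setting $f_\varepsilon := g_\varepsilon \circ \mathcal{M}^{-1}$, the difference $X_{\widetilde{f-f_\varepsilon},n,\gamma}^{\theta_0}$ is itself a mesoscopic linear statistic, and the determinantal structure of \eqref{eq:OPE} permits one to bound its $m$-th centered moment by integer powers of its variance. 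A Christoffel--Darboux computation on the arc $I$, using (i) and (ii), then yields a uniform-in-$n$ bound on this variance in terms of $\|g-g_\varepsilon\|$. Inserting the bound for both $\mu$ and $\mu_0$, applying Theorem \ref{thm:UniversalityPoissonIntro} to each Poisson-kernel summand of $g_\varepsilon$, and running a triangle inequality over the finite decomposition closes the argument and yields \eqref{eq:conclusion}.

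The main obstacle is the uniform-in-$n$ variance bound for the mesoscopic linear statistic of a general smooth function on the arc, expressed through a norm compatible with the density statement. This is the step where conditions (i) and (ii) are essentially used, via Christoffel--Darboux kernel asymptotics localized around $\theta_0$; the subsequent passage from variance to the centered moment of order $m$ is the standard cumulant inequality for determinantal point processes. The only bookkeeping novelty relative to \cite[Section 5]{realope} is tracking the Jacobian of $\mathcal{M}$ on $\operatorname{supp} f$, which contributes only bounded smooth factors on any compact subset of $\partial\bbD \setminus \{-1\}$ and does not affect the density or the variance estimates in any essential way.
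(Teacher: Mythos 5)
Your proposal follows the paper's strategy faithfully: transport to $\mathbb{R}$ via the M\"obius map, approximate $f\circ\mathcal{M}$ in a weighted Lipschitz seminorm (the $\mathcal{L}_\omega$ norm) by finite linear combinations of Poisson kernels, control variance by that seminorm via the Christoffel--Darboux kernel and conditions (i), (ii), and pass from variance to $m$-th centered moments by the standard cumulant bound for determinantal processes. The one step you dismiss as "bookkeeping" actually requires a separate lemma, and it is worth flagging. The function $\Psi_{n,\gamma,\eta}$ appearing in Theorem \ref{thm:UniversalityPoissonIntro} is \emph{not} the rescaled pullback $\widetilde{h}_n$ of $\Im\frac{1}{x-\widetilde{\eta}}$ through $\mathcal{M}$: they differ by the $\theta$-dependent factor $(1+\cos(\theta-\theta_0))/2$, which equals $1$ only at the center $\theta = \theta_0$ of the mesoscopic window and decays away from it. Your argument applies the Poisson universality to the approximant $g_\varepsilon\circ\mathcal{M}^{-1}$, which pulls back to linear combinations of $\widetilde{h}_n$-type functions, not $\Psi$. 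To transfer Theorem \ref{thm:UniversalityPoissonIntro} to $\widetilde{h}_n$ one must show $\Var X^{\theta_0}_{\Psi_{n,\omega}-\widetilde{h}_n}\to 0$ (Lemma \ref{lemma:poisso-poisson} in the paper), which goes through the CD formula and a case-split pointwise estimate on $\sup_\theta\left|\Psi'_{n,\omega}-\widetilde{h}_n'\right|$ according to the distance from $\theta_0$ on a second mesoscopic scale; this is more than tracking a bounded Jacobian. Aside from eliding that reconciliation, your sketch matches the paper's proof.
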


In light of Theorems \ref{thm:UniversalityPoissonIntro} and \ref{thm:main-result}, in order to prove Theorems \ref{thm:CLTPoissonIntro} and
\ref{thm:CLTgeneral} we only need to consider the case where $\alpha_n=\alpha\in(-1,1)$. We do this via the recurrence coefficients as well.
While the analysis of the $\alpha=0$ case (which is in fact already known for a slightly different family of functions by \cite{soshnikov}) is a rather straightforward computation using block Toeplitz matrices, the $\alpha \neq 0$ case turns out to be very different and technically rather challenging. The reason is that while
the CMV representation (see below) gives rise to block Toeplitz matrices with diagonal blocks in the $\alpha=0$ case, this is no longer true
for $\alpha \neq 0$. A similar problem was encountered in \cite{duits-kozhan} where the macroscopic fluctuations were treated. There, a
certain ``unfolding'' of the bases was used to solve the problem. Here we bypass the problem by using the GGT representation (described below
as well). This, however, is only the first step and some nontrivial computations are needed to verify the Gaussian limit. The details are
described in Section 3.

\smallskip

As noted above, the use of recurrence coefficients for the study of OPE's has recently shown itself to be quite fruitful \cite{hardy}. One
example which is closely related to our subject, is a corollary of  \cite{Golinskii} which states that if $\alpha_n\xrightarrow[n\to
\infty]{}\alpha \in \partial\mathbb{D}$ then $\nu_n$ converges weakly with probability 1 to $\nu_\alpha$ where
$$d\nu_\alpha(\theta)/d\theta
=\frac{1}{2\pi}\frac{\sin\left(\frac{\theta}{2}\right)}{\sqrt{\sin\left(\frac{\theta}{2}\right)^2-|\alpha|^2}}.$$
On the real line, a theorem by Kuijlaars and Van Assche \cite{KVA} provides a detailed formula relating the density of states (i.e.\ the weak
limit of the empirical measure) with the asymptotics of the (real) recurrence coefficients. The fact that the fluctuations of the empirical
measure can be analyzed using asymptotics of the recurrence coefficients was realized in \cite{bi-or-ens}, where a macroscopic CLT was proved for a
large class of real line OPE's. The unit circle analog of this work was carried out in \cite{duits-kozhan} mentioned above, where a
macroscopic CLT was shown to hold as long as $\alpha_n$ converges. As remarked above, our work has been inspired by \cite{realope} where the
mesoscopic scale was studied for real line OPE's. For other results relating asymptotics of OPE's with the associated recurrence coefficients
see \cite{hardy} and references therein.

Before describing the structure of the paper we provide the proof of Theorem \ref{thm:Hua-Pickrell}
\begin{proof}[Proof of Theorem \ref{thm:Hua-Pickrell}]
Fix $\delta \in \mathbb{R}$ with $\delta>-1/2$ and let 
\beq \no
d\mu_\delta(\theta)=\left(1-e^{-i\theta} \right)^\delta\left( 1-e^{i\theta}\right)^{\overline{\delta}}d\theta=(2-2\cos(\theta))^\delta d\theta
\eeq
Let $\left(\alpha_n \right)_{n=1}^\infty$ be the Verblunsky coefficients associated with $\mu_\delta$. Note that since $\mu$ is invariant under complex conjugation, the $\alpha_n$'s are all real. In light of Theorem \ref{thm:CLTgeneral}, it suffices to show that $\alpha_n=\mathcal{O}(n^{-1})$.

We shall use the Szeg\H o mapping (\cite[Chapter 13]{simonopuc}, induced by the map $\theta \mapsto 2\cos(\theta)$ and denoted by $Sz$, to send $\mu_\delta$ to a measure on $[-2,2]$ and relate $(\alpha_n)$ with the recurrence coefficients of $Sz(\mu_\delta)$. Explicitly, $Sz(\mu_\delta)(dx)=(2-x)^{\delta-1/2}(2+x)^{-1/2}dx$ so that $Sz (\mu_\delta)$ is a Jacobi measure on $[-2,2]$. For such measures it is known \cite{KMVV} that the recurrence coefficients satisfy
\beq \no
a_n=1+\mathcal{O}(n^{-2}), \qquad b_n=\mathcal{O}(n^{-2})
\eeq  
and by \cite[Theorem 13.1.10]{simonopuc}, the relation to $(\alpha_n)$ is through 
\beq \no
\alpha_{2k}=\frac{u_k^+-u_k^-}{u_k^++u_k^-}, \qquad \alpha_{2k-1}=1-\frac{1}{2}\left(u_k^++u_k^- \right)
\eeq
where $u_k^{\pm}$ is defined by 
\beq \label{eq:uk}
u_k^{\pm}=2\pm b_{k+1}-a_k^2\left(u_{k-1}^{\pm} \right)^{-1}, \qquad \left( u_{-1}^{\pm}\right)^{-1}=0.
\eeq

Now, by Rakhmanov's Theorem (see, e.g., \cite[Corollary 9.1.11]{simonopuc}), since $\mu_\delta$ is absolutely continuous with support all of $\partial\mathbb{D}$, $\alpha_n \rightarrow 0$ as $n \rightarrow \infty$. This immediately implies that $u_n^{\pm}\rightarrow 1$.  It thus suffices to show that $u_n^{\pm}=1+\mathcal{O}(n^{-1})$. To this end, write $u_n^{\pm}=1+\xi_n^{\pm}$. Then \eqref{eq:uk} together with the asymptotic estimates for $a_n, b_n$ imply that
\beq \no
\xi_{n+1}^{\pm}-\xi_n^{\pm}=-\frac{\left(\xi_n^{\pm}\right)^2}{1+\xi_n^{\pm}}+f_n^{\pm}
\eeq
where $f_n^{\pm}$ is some sequence satisfying $|f_n^{\pm}| \leq \frac{C_0}{n^2}$ for some constant $C_0>0$. Since it is irrelevant for everything that comes next, we omit the superscript from now on. 

We consider $n$ sufficiently large so that $1/2 \leq 1+\xi_n \leq 2$ and also $C_0/n<1$. It follows from the fact that $\xi_n \rightarrow 0$, that if $\xi_n$ is negative for some $n$, it must satisfy $\xi_n \geq -\frac{\sqrt{2C_0}+1}{n}$. Indeed,  otherwise
\beq \no 
\xi_{n+1}-\xi_n \leq -\frac{1}{n^2}<0
\eeq
so $\xi_{n+1}<\xi_n$ and, by iterating, $\xi_{n+k}<-\frac{\sqrt{2C_0}+1}{n}$ which is a contradiction. 

If $\xi_{n_0}>\frac{2\sqrt{C_0}+1}{n}$ for some $n_0$ then 
\beq \no 
\xi_{n_0+1}-\xi_{n_0} \leq -\frac{\xi_{n_0}^2}{4}
\eeq
so $\xi_{n_0+1}<\xi_{n_0}$. Thus, if $\xi_{n+1}>0$ we get $\xi_n^2>\xi_n\xi_{n+1}$ and therefore
\beq \no
\xi_{n+1}^{-1}-\xi_n^{-1} \geq \frac{1}{4}.
\eeq
As long as $\xi_{n_0+k}>\frac{2\sqrt{C_0}+1}{n}$ this can be iterated to obtain $\xi_{n_0+k}\leq \frac{4\xi_{n_0}}{k\xi_{n_0}+4}$.

Finally, if $\xi_n \leq \frac{2\sqrt{C_0}+1}{n}$ then
\beq \no
\xi_{n+1}-\xi_n \leq \frac{C_0}{n^2}<\frac{1}{n}
\eeq
so $\xi_{n+1} \leq \frac{2\sqrt{C_0}+2}{n}\leq \frac{2\sqrt{C_0}+3}{n+1}$
for $n$ sufficiently large. But if $\frac{2\sqrt{C_0}+1}{n+1}<\xi_{n+1} \leq \frac{2\sqrt{C_0}+3}{n+1}$ then by the previous paragraph, $\xi_{n+2}<\xi_{n+1}$ and this will repeat until $\xi_{n+k}\leq \frac{2\sqrt{C_0}+1}{n+k}$. To conclude, either the sequence $n\xi_n$ achieves a maximum at some point $n_0$ or it is bounded. Either way, we obtain $\xi_n=\mathcal{O}(n^{-1})$ and we are done.
\end{proof}

\smallskip

The rest of this paper is structured as follows. The connection between recurrence coefficients and mesoscopic fluctuations is explained in
detail in Section 2 using the spectral theorem. In particular, we introduce there the GGT and CMV matrices which are the matrix
representations of multiplication by $z$ in $L_2(\mu)$ with respect to two different bases connected to the orthogonal polynomials. We then
show that the distribution of the mesoscopic fluctuations can be analyzed through traces of functions of the GGT/CMV operator.

In Section 3, we prove Theorem \ref{thm:CLTPoissonIntro} in the special case $\alpha_n\equiv \alpha$ with $\alpha \in (-1,1)$, and
Section 4 contains the proof of Theorem \ref{thm:UniversalityPoissonIntro}, completing the proof of Theorem \ref{thm:CLTPoissonIntro}.

Note that Theorems \ref{thm:CLTPoissonIntro} and \ref{thm:UniversalityPoissonIntro} are not, strictly speaking, special cases of Theorems
\ref{thm:CLTgeneral} and (respectively) \ref{thm:main-result}. The advantage of starting with proving these theorems comes from the fact that
the Poisson kernel has several properties that are particularly useful for our analysis.

Section 5 completes the proofs of Theorems \ref{thm:CLTgeneral} and \ref{thm:main-result} by combining the results of Sections 3 and 4 with an
approximation argument adapted from \cite{realope} to our needs.

{\bf Acknowledgments} We thank Maurice Duits and Ofer Zeitouni for useful discussions.


\section{Preliminaries}
Throughout this section, $\mu$ denotes a fixed probability measure on $\partial \mathbb{D}$ with infinite support.

\subsection{The GGT and CMV matrices}
In this subsection we introduce two matrix representations of the operator of multiplication by $z$ in
$L_2\left(\partial\mathbb{D},\mu\right)$. We shall see in the following subsection how the study of fluctuations translates to the analysis of
these particular matrices.

We denote by $\varphi_j$ the $j$'th ortho\emph{normal} polynomial associated with $\mu$. Namely, $\varphi_j=\Phi_j/\|\Phi_j\|$ where $\Phi_j$
is the monic orthogonal polynomial defined in \eqref{eq:monicOP}.
of monic polynomials $\left\{\Phi_j\right\}_{j=0}^\infty$ for which

As $\left\|\Phi_j\right\|_2=\Pi_{k=0}^{j-1}\left(1-|\alpha_k|^2\right)^{1/2}$ (see \cite[Theorem 2.1]{opucononefoot}) the two term recurrence
relation becomes $$\rho_n\varphi_{n+1}(z)=z\varphi_n(z)+\overline{\alpha_n}\varphi^*_n(z)$$ where for a polynomial $p(z)$ of degree $n$ we
denote $$p^*(z)=z^n\overline{p\left(1/\overline{z}\right)}$$
(known as the Szeg\H o dual), and
$$\rho_n=\sqrt{1-\left|\alpha_n\right|^2}.$$
Let $$V=\overline{\textrm{span}\{\varphi_j \mid j \geq 0 \}}.$$ It follows from Szeg\H o's Limit Theorem (see for example
\cite[Theorem 1.5.7]{simonopuc}) that $V=L_2(\mu)$ iff $\sum_{n=0}^\infty\left|\alpha_n\right|^2=\infty$.

Consider the operator $$f\mapsto zf$$ on $f \in L_2 (\mu)$. In the case that $V=L_2(\mu)$, it has a simple matrix representation with respect
to the orthonormal set $\left \{ \varphi_j \right\}_{j=0}^\infty$, known as the GGT matrix (see \cite{simonopuc}):
\begin{equation} \label{eq:GGT}
G=\begin{pmatrix} \overline{\alpha_0} & \overline{\alpha_1}\rho_0 & \overline{\alpha_2}\rho_0\rho_1 & \overline{\alpha_3}\rho_0\rho_1\rho_2 &
\ldots \\
\rho_0 & -\overline{\alpha_1}\alpha_0 &  -\overline{\alpha_2}\alpha_0\rho_1 & -\overline{\alpha_3}\alpha_0\rho_1\rho_2 & \ldots \\
0 & \rho_1 & -\overline{\alpha_2}\alpha_1 & -\overline{\alpha_3}\alpha_1\rho_2  & \ldots \\
0 & 0 & \rho_2 & -\overline{\alpha_3}\alpha_2 & \ldots\\
\ldots & \ldots & \ldots & \ldots & \ddots
\end{pmatrix}.
\end{equation}

In the case that $V \neq L_2(\mu)$, it is useful to consider a different representation for the operator $f \rightarrow zf$. By applying the
Gram-Schmidt procedure to the set $\left\{1,z,z^{-1},z^2,z^{-2},\ldots\right\}$ (which is always dense in $L_2$), we obtain the complete
orthonormal set $\left\{\chi_j\right\}_{j=0}^\infty$ (known as the \textit{CMV} basis). The matrix representation of multiplication by $z$
with respect to $\left(\chi_j\right)_{j=0}^\infty$ is known as the \textit{CMV} matrix (see \cite{simonopuc}):
\begin{equation} \label{eq:CMV}
\mathcal{C}=
\begin{pmatrix}
\overline{\alpha_0} & \overline{\alpha_1} \rho_0& \rho_0 \rho_1&0&0&\ldots\\
\rho_0 & -\overline{\alpha_1} \alpha_0& -\rho_1 \alpha_0&0&0&\ldots\\
0& \overline{\alpha_2}\rho_1&-\overline{\alpha_2}\alpha_1& \overline{\alpha_3}\rho_2 & \rho_3\rho_2 & \ldots\\
0&\rho_2\rho_1 & -\rho_2\alpha_1 & -\overline{\alpha_3}\alpha_2 & -\rho_3\alpha_2 & \ldots\\
0&0&0&\overline{\alpha_4}\rho_3&-\overline{\alpha_4}\alpha_3&\ldots\\
\vdots&\vdots&\vdots&\vdots&\vdots&\ddots
\end{pmatrix}.
\end{equation}

\subsection{OPE as a determinantal point process}

We start with the standard observation (see e.g.\ \cite{koenig}) that
\begin{equation} \nonumber
\Pi_{1\leq j<k\leq n}
\left|e^{i\theta_k}-e^{i\theta_j}\right|^2 \propto \det\left(K_n(\theta_j,\theta_k)\right)_{1\leq j,k \leq n}
\end{equation}
where $K_n$ is the \textit{Christoffel-Darboux} kernel associated with $\mu$
$$K_n(\theta,\phi)=\sum_{j=0}^{n-1}\varphi_j\left(e^{i\theta}\right)\overline{\varphi_j\left(e^{i\phi}\right)}.$$
Note that $K_n$ is also the integral kernel of the orthogonal projection
from $L_2\left(\mu\right)$ onto $\text{span}\left\{1,z,\ldots,z^{n-1}\right\}$.
It follows that the OPE is a determinantal point process with kernel $K_n$ \cite{borodin}.

The kernel $K_n$ is useful when utilizing the GGT representation of the previous section. However, it is not ideal when we need the CMV
representation. For this note that
\begin{equation}\label{eq:knpn}
\begin{split}
    &\Pi_{1\leq j<k\leq n} \left|e^{i\theta_k}-e^{i\theta_j}\right|^2 \\
    & \quad = \det\left(\exp\left(i\left(j-1-\left\lfloor\frac{n-1}{2}\right\rfloor\right)\theta_k\right)\right)_{1\leq j,k \leq n} \\
    & \qquad \qquad \times \det\left(\exp\left(-i\left(j-1-\left\lfloor\frac{n-1}{2}\right\rfloor\right)\theta_k\right)\right)_{1\leq j,k \leq
    n}\\
    & \quad \propto \det(\widetilde{K}_n(\theta_j,\theta_k))_{1\leq k,j \leq n}
\end{split}
\end{equation}
where
$$\widetilde{K}_n(\theta,\varphi)=\sum_{j=0}^{n-1}\chi_j\left(e^{i\theta}\right)\overline{\chi_j\left(e^{i\phi}\right)}.$$
so that $\widetilde{K}_n$ is also a kernel for the OPE.

Note that both kernels are kernels of orthogonal projections in $L_2(\mu)$.

\subsection{Analysis of cumulants}\label{cumulantsanalysis}

The cumulant generating function of the random variable $X^{\theta_0}_{\widetilde{f},{n,\gamma}}$ is 
\begin{equation} \nonumber
\log\mathbb{E}\left[e^{tX^{\theta_0}_{\widetilde{f},{n,\gamma}}}\right]=\sum_{k=1}^\infty C^{(n)}_k \left(X^{\theta_0}_{\widetilde{f},{n,\gamma}}
\right)t^k.
\end{equation}
Using the Taylor series of $\log(1 + x)$ and $e^x$, we can express $C^{(n)}_k\left(X^{\theta_0}_{\widetilde{f},{n,\gamma}} \right)$ as
a polynomial in the first $k$ moments, and vice versa we can express the $k^{th}$ moment as a polynomial in the first $k$ cumulants.\\ The
latter fact means that proving Theorem \ref{thm:main-result} is the same as proving
\begin{equation} \nonumber
\left|C^{(n)}_k \left(X^{\theta_0}_{\widetilde{f},{n,\gamma}} \right)-C^{(n)}_{k,0}\left(X^{\theta_0}_{\widetilde{f},{n,\gamma}}
\right)\right|\xrightarrow[n\to \infty ]{}0.
\end{equation}
where $C^{(n)}_k,C^{(n)}_{k,0}$ are taken with respect to $\mu$ and $\mu_0$ respectively. Moreover, recall that the moment generating function of a Gaussian
random variable, $Y\sim \mathcal{N}(0,\sigma^2)$ is $\mathbb{E}[e^{tY}]=\exp\left(\frac{t^2}{2}\sigma^2 \right)$. Thus the cumulant generating
function of $Y$ is $\log\mathbb{E}[e^{tY}]=\frac{t^2}{2}\sigma^2$.

By Levy’s Convergence Theorem (for a proof see \cite[Section 18.1]{williams}) in order to prove a central limit theorem we need to prove

    \begin{equation}
         \lim_{n\rightarrow \infty }C^{(n)}_2\left(X^{\theta_0}_{\widetilde{f},{n,\gamma}}\right) \text{ exists}   \end{equation}
   \begin{equation}
      \lim_{n\rightarrow \infty }C^{(n)}_k\left(X^{\theta_0}_{\widetilde{f},{n,\gamma}}\right)=0\quad \forall k>2.    \end{equation}

\subsection {Cumulants in terms of the CMV/GGT matrix} \label{sec:cueasdpp}

Let $f:\partial\mathbb{D}\to \mathbb {R}$ with $\text{supp}f\subset \left \{e^{i\theta} \mid \theta \in (-\pi,\pi)\right \}$ and let $\theta_0
\in (-\pi,\pi)$. For $0<\gamma<1$ let
$$
\widetilde{f}^{\theta_0}_{n,\gamma}
=f\left(\mathcal{M}\left(n^\gamma\left(\mathcal{M}^{-1}\left(e^{i\left(\cdot-\theta_0\right)}\right)\right)\right)\right).$$
Then, by the determinantal structure (\cite[Proposition 1]{soshnikov2}),
\beq\label{eq:momentgenFreddet}
\mathbb{E}\left[\exp\left({tX^{\theta_0}_{\widetilde{f},{n,\gamma}}}\right)\right]
=\det\left(1+\left(e^{t{\widetilde{f}}^{\theta_0}_{n,\gamma}}-1\right)\widetilde{P}_n\right)_{L^2(\mu)}
\eeq
where the determinant on the RHS is the Fredholm determinant of the operator $1+(e^{t{\widetilde{f}}^{\theta_0}_{n,\gamma}}-1)
\widetilde{P}_n$. Here $1$ stands for the identity operator, $(e^{t{\widetilde{f}}^{\theta_0}_{n,\gamma}}-1)$ is the multiplication operator
$g \mapsto  \left(e^{t{\widetilde{f}}^{\theta_0}_{n,\gamma}}-1\right)\times g$, and $\widetilde{P}_n$ can be taken as either the orthogonal projection with kernel $K_n$ or the one with kernel $\widetilde{K}_n$.

As in \cite{realope}, we use the formula $\det\left(1+A\right)=\exp\left(\sum_{j=1}^\infty\frac{(-1)^{j+1}}{j}\Tr A^j\right),$ (valid for a
trace class operator $A$ with norm$<1$, \cite[Section 2]{simontracecalss}), together with a Taylor expansion of the exponential inside the
trace to see that
\begin{equation*} \begin{split} &\quad
C^{(n)}_1\left(X^{\theta_0}_{\widetilde{f},{n,\gamma}}\right)=\textrm{Tr}\widetilde{P}_n\left({\widetilde{f}}^{\theta_0}_{n,\gamma} \right)\\
&\text{and for $m\geq2$:}\\ &\quad C^{(n)}_m\left(X^{\theta_0}_{\widetilde{f},{n,\gamma}}\right) \\ &\qquad= \sum_{j=1}^{m}\frac{(-1)^{j+1}}{j}
\sum_{l_1+\ldots+l_j=m, l_i \geq 1}\frac{\textrm{Tr}\left({\widetilde{f}}^{\theta_0}_{n,\gamma} \right)^{l_1}\widetilde{P}_n\cdots
\left({\widetilde{f}}^{\theta_0}_{n,\gamma} \right)^{l_j}\widetilde{P}_n -\textrm{Tr}\left({\widetilde{f}}^{\theta_0}_{n,\gamma} \right)^m
\widetilde{P}_n}{l_1! \cdots l_j!}.
\end{split} \end{equation*}

Now, let $P_n$ be the projection in $\ell^2(\mathbb{N})$ onto the first $n$ coordinates. A choice of a basis for $L_2(\mu)$ turns the above expressions for cumulants into restricted traces of appropriate matrices.

Explicitly, the choice of the CMV basis leads to
\begin{equation} \label{eq:cmvcumulant}
\begin{split}
&\quad C^{(n)}_m\left(X^{\theta_0}_{\widetilde{f},{n,\gamma}}\right) =\\ &\qquad \sum_{j=1}^{m}\frac{(-1)^{j+1}}{j} \sum_{l_1+\ldots+l_j=m, l_i \geq
1}\frac{\textrm{Tr}\left({\widetilde{f}}^{\theta_0}_{n,\gamma}(\mathcal{C}) \right)^{l_1}P_n\cdots
\left({\widetilde{f}}^{\theta_0}_{n,\gamma}(\mathcal{C})  \right)^{l_j}P_n
-\textrm{Tr}\left({\widetilde{f}}^{\theta_0}_{n,\gamma}(\mathcal{C})  \right)^m P_n}{l_1! \cdots l_j!}
\end{split}
\end{equation}
where $\mathcal{C}$ is the CMV matrix from \eqref{eq:CMV}, and the trace is taken on $\ell^2(\mathbb{N})$.

In the case that the orthogonal polynomials are dense in $L^2(\mu)$ we may choose $\left \{\varphi_n \right \}_n$ as a basis and we have
\begin{equation} \label{eq:cmvcumulantGGT}
\begin{split}
&\quad C^{(n)}_m\left(X^{\theta_0}_{\widetilde{f},{n,\gamma}}\right) =\\
&\qquad \sum_{j=1}^{m}\frac{(-1)^{j+1}}{j} \sum_{l_1+\ldots+l_j=m, l_i \geq 1}\frac{\textrm{Tr}\left({\widetilde{f}}^{\theta_0}_{n,\gamma}(G)
\right)^{l_1}P_n\cdots \left({\widetilde{f}}^{\theta_0}_{n,\gamma}(G)  \right)^{l_j}P_n
-\textrm{Tr}\left({\widetilde{f}}^{\theta_0}_{n,\gamma}(G)  \right)^m P_n}{l_1! \cdots l_j!}
\end{split}
\end{equation}
where $G$ is now the GGT matrix associated with $\mu$ from \eqref{eq:GGT}.

For simplicity, from now on, we abuse the notation and for an operator $\mathcal{A}$ defined on $\ell^2(\mathbb{N})$ we write
$$C^{(n)}_m(\mathcal{A})=\sum_{j=1}^{m}\frac{(-1)^{j+1}}{j} \sum_{l_1+\ldots+l_j=m, l_i \geq
1}\frac{\textrm{Tr}\left(\mathcal{A}\right)^{l_1}P_n\cdots \left(\mathcal{A}  \right)^{l_j}P_n -\textrm{Tr}\left(\mathcal{A}\right)^m
P_n}{l_1! \cdots l_j!},$$
and note that we have the formula
\beq \label{eq:momentgenFreddet1}
\exp\left(\sum_{m=2}^\infty t^m C^{(n)}_m \left(\mathcal{A}\right)  \right) \\
 =\det\left(I+ P_n\left({\rm e}^{ t \mathcal{A}}-I \right)P_n\right)  {\rm e}^{-t \Tr P_n \mathcal{A}},
 \eeq
which holds uniformly for $t$ in a sufficiently small neighborhood (depending on $\mathcal{A}$) of $0$.

\subsection{Toeplitz and Block Toeplitz Operators}
In Section 3, we reduce our analysis to the analysis of a Toeplitz operator. Here we recall a few results from the theory of Toeplitz and block Toeplitz operators. For a vector valued function, $\xi:\partial \mathbb{D} \to \mathbb{C}^d$ and $p\geq 1$, we let 
$$\left\|\xi\right\|_\mathbf{p}=\sum_{1\leq i\leq d}\left\|\xi_{i}\right\|_p.$$
The $p$'th norm of a \emph{matrix} valued function (MVF), $\zeta : \partial \mathbb{D} \to  \mathbb{M}_d\left(\mathbb{C}\right)$, is defined similarly, as a sum over the norms of its entries
$$\left\|\zeta\right\|_\mathbf{p}=\sum_{1\leq i,j\leq d}\left\|\zeta_{ij}\right\|_p.$$
We define the $k^{th}$ (scalar, vector, or matrix) \textit{Fourier} coefficient of a (scalar, vector, or matrix) valued function, $\psi$ by $$\widehat{\psi}_k=\int_{\partial\mathbb{D}}\psi(e^{i\theta})\cdot
e^{-ik\theta}\frac{d\theta}{2\pi}$$
whenever the integral is defined.

Let
\begin{equation*}
        L^2\left(\partial \mathbb{D};\mathbb{C}^d\right):= \left\{\xi:\partial\mathbb{D}\to \mathbb{C}^d:\: \left\|\xi\right\|_\textbf{2}<\infty\right\}
\end{equation*}
and let $\mathbb{H}^{(2;d)} \subseteq L^2\left(\partial \mathbb{D}; \mathbb{C}^d \right)$ be its subspace of functions whose negative Fourier coefficients are all zero. In addition, let
\begin{equation*}
        L^\infty \left( \partial \mathbb{D}; \mathbb{M}_d(\mathbb{C}) \right)= \left\{\zeta:\partial\mathbb{D}\to \mathbb{M}_d\left(\mathbb{C}\right):\:\|\zeta\|_\mathbf{\infty}<\infty \right\}.
\end{equation*}

For $\psi \in L^\infty \left(\partial \mathbb{D}; \mathbb{M}_d (\mathbb{C}) \right)$ we define the operator
$$M_\psi:\mathbb{H}^{(2;d)}\to \mathbb{H}^{(2;d)} $$ in the following way
$$M_\psi f=\widetilde{P}(\psi\cdot f)$$
where $\widetilde{P}$ is the orthogonal projection from $L^2\left(\partial \mathbb{D};\mathbb{C}^d\right)$ to $\mathbb{H}^{(2;d)}$.

The operator $M_\psi$ is represented with respect to the Fourier basis by the matrix
$$T(\psi)=\begin{pmatrix}
\widehat{\psi}_0 & \widehat{\psi}_{-1} & \widehat{\psi}_{-2} & \widehat{\psi}_{-3}&\ldots\\
\widehat{\psi}_1 & \widehat{\psi}_{0} & \widehat{\psi}_{-1} & \widehat{\psi}_{-2}&\ldots\\
\widehat{\psi}_2 & \widehat{\psi}_{1} & \widehat{\psi}_{0} & \widehat{\psi}_{-1} &\ldots\\
\widehat{\psi}_3 & \widehat{\psi}_{2} & \widehat{\psi}_{1} & \widehat{\psi}_{0} &\ldots\\
\vdots & \vdots & \vdots &\vdots &\ddots
\end{pmatrix}.$$
We say that $T(\psi)$ is the \textit{Block Toeplitz} (or just ``Toeplitz" if $d=1$) matrix associated with the symbol $\psi$.

The \textit{Block Hankel} (or simply \textit{Hankel} in the case $d=1$) matrix associated with the symbol $\psi$ is
$$H(\psi)=\begin{pmatrix}
\widehat{\psi}_0 & \widehat{\psi}_{1} & \widehat{\psi}_{2} & \widehat{\psi}_{3}&\ldots\\
\widehat{\psi}_1 & \widehat{\psi}_{2} & \widehat{\psi}_{3} & \widehat{\psi}_{4}&\ldots\\
\widehat{\psi}_2 & \widehat{\psi}_{3} & \widehat{\psi}_{4} & \widehat{\psi}_{5} &\ldots\\
\widehat{\psi}_3 & \widehat{\psi}_{4} & \widehat{\psi}_{5} & \widehat{\psi}_{6} &\ldots\\
\vdots & \vdots & \vdots &\vdots &\ddots
\end{pmatrix}.$$

By \cite[Section 6.1]{introtolttm}
\begin{equation}\label{eq:symbolisbdd}\begin{split}
        &\quad \left\|T(\psi)\right\|_\infty=\left\|\psi\right\|_\infty \\
        &\quad \left\|H(\psi)\right\|_\infty\leq\left\|\psi\right\|_\infty.
    \end{split}
\end{equation}

Finally, for later reference we recall the following identity \cite[Theorem 6.4]{introtolttm}. For $a,b\in L^\infty \left( \partial \mathbb{D}; \mathbb{M}_d (\mathbb{C}) \right)$ we have
\begin{equation}\label{eq:tophankid}
    T(a\cdot b)=T(a)T(b)+H(a)H\left(\widetilde{b}\right)
\end{equation}
where $\widetilde{b}(z)=b(1/z)$ and $H$ is a Hankel/block Hankel matrix. In particular, if $d=1$ we have that
\begin{equation} \label{eq:tophankid2} 
T(a)T(b)-T(b)T(a)=H(b)H\left(\widetilde{a}\right)-H(a)H\left(\widetilde{b}\right).
\end{equation}

\section{A Preliminary CLT for OPE of Geronimus measures}

Let $\alpha\in (-1,1)$. We define $I_\alpha=2\arcsin{(|\alpha|)}$. The measure
$$d\mu_\alpha(\theta)=\frac{\sqrt{\cos^2\left(I_\alpha/2\right)-\cos^2\left(\theta/2\right)}\chi_{\left(I_\alpha,2\pi-I_\alpha\right)}(\theta)d\theta}
{2\pi|1+\alpha|\sin(\theta/2)}$$
is a probability measure on the unit circle whose Verblunsky coefficients are the constant $\alpha$ (see \cite[Example 1.6.12]{simonopuc}). Such measures are also known as
``Geronimus measures''. Note that the case $\alpha=0$ corresponds to the measure $d\mu(\theta)=\frac{d\theta}{2\pi}$ on the circle.

Recall \eqref{eq:PoissonScaled} and \eqref{eq:PoissonLinStat}: for $\theta_0\in \left(I_\alpha,2\pi-I_\alpha\right)$, $0<\gamma<1$ and
$\eta\in \mathbb{C}, \text{Re}\eta>0$ we let
$$\Psi_{n,\gamma,\eta} \left(\theta \right)=\frac{1}{n^\gamma}\text{Re}\left(\frac{e^{i\theta}+\left(1-\frac{\eta}{n^\gamma}\right)}
{e^{i\theta}-\left(1-\frac{\eta}{n^\gamma}\right)}\right)$$
so the associated linear statistic is
$$X_{\Psi_{n,\gamma,\eta}}^{\theta_0}=\sum_{k=1}^n\Psi_{n,\gamma,\eta} \left(\theta_k-\theta_0 \right).$$

In this section we prove
\begin{prop} \label{prop:PoissonForConstant}

In the case of the OPE associated with the measure $\mu_\alpha$,
$$X_{\Psi_{n,\gamma,\eta}}^{\theta_0}-\bbE X_{\Psi_{n,\gamma,\eta}}^{\theta_0}\xrightarrow[n\to \infty]{
\mathcal{D}}\mathcal{N}\left(0,\sigma^2\right)$$
and $\sigma^2=\frac{2}{\left(\eta+\overline{\eta}\right)^2}$.

\end{prop}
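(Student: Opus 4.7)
The plan is to work within the cumulant framework of Section~\ref{cumulantsanalysis}: it suffices to show $C_2^{(n)}\to\sigma^2$ and $C_m^{(n)}\to 0$ for every $m\geq 3$. When $\alpha=0$, $\mu_\alpha$ is normalized Lebesgue measure, the CMV matrix \eqref{eq:CMV} collapses to the unilateral shift, and the required CLT follows from a short block-Toeplitz computation essentially due to \cite{soshnikov}. The remainder of the plan concerns $\alpha\neq 0$. In that range $\sum|\alpha_n|^2=\infty$, so the orthonormal polynomials span $L^2(\mu_\alpha)$ and I may use the GGT matrix $G$ from \eqref{eq:GGT} and the cumulant formula \eqref{eq:cmvcumulantGGT}.

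Setting $w_n=1-\eta/n^\gamma$, $z_0=e^{i\theta_0}$, and $\zeta_n=w_nz_0$, unitarity of $G$ together with
\[
\Re\!\left(\f{z\bar z_0+w}{z\bar z_0-w}\right)=\tfrac12\!\left[\f{z+wz_0}{z-wz_0}+\f{z_0+\bar w z}{z_0-\bar w z}\right]
\]
yields
\beq \no
\Psi_{n,\gamma,\eta}(G)=\f{1}{n^\gamma}\!\left[I+\zeta_n(G-\zeta_n)^{-1}+\overline{\zeta_n}(G^{-1}-\overline{\zeta_n})^{-1}\right],
\eeq
so every cumulant becomes a linear combination of traces of products of $P_n$ with resolvents of $G$ at spectral parameters in $\bbD$. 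A direct inspection of \eqref{eq:GGT} for $\alpha_n\equiv\alpha$ shows that $G=T(\psi)+K$, where
\beq \no
\psi(z)=\f{z(\rho z-1)}{z-\rho},\qquad \rho=\sqrt{1-|\alpha|^2},
\eeq
is unimodular on $\partial\bbD$ and $K$ is a rank-one operator supported in the first row of $G$. Factoring $\psi(z)-\zeta=\rho(z-z_-(\zeta))(z-z_+(\zeta))/(z-\rho)$ with $|z_-(\zeta)|<1<|z_+(\zeta)|$ provides a Wiener--Hopf factorization of the symbol $\psi-\zeta$; from it the entries $(T(\psi)-\zeta)^{-1}_{ij}$ have closed form and decay geometrically in $|i-j|$ at rate $|z_-(\zeta)|$. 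At $\zeta=\zeta_n$ one checks that $1-|z_-(\zeta_n)|\sim c(\alpha,\theta_0)\,\Re\eta/n^\gamma$ uniformly for $\theta_0\in(I_\alpha,2\pi-I_\alpha)$, matching the mesoscopic scale. A single application of the second resolvent identity absorbs the rank-one correction $K$ into an explicitly controllable error in the relevant traces.

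For $m=2$ the identity $C_2^{(n)}=-\tfrac14\Tr([P_n,\Psi_{n,\gamma,\eta}(G)]^2)$ becomes, after substituting the resolvent form above, a finite linear combination of traces $\Tr([P_n,R_{\eps_1}][P_n,R_{\eps_2}])$ with $R_\pm\in\{\zeta_n(G-\zeta_n)^{-1},\,\overline{\zeta_n}(G^{-1}-\overline{\zeta_n})^{-1}\}$. Each commutator $[P_n,R_\pm]$ is essentially supported in a band of width $n^\gamma$ about the corner $(n,n)$; summing the geometric series from the previous paragraph converts the double trace into a Riemann sum whose limit is an explicit double integral, and evaluation using the boundary asymptotics of $z_-(\zeta_n)$ produces exactly $\sigma^2=2/(\eta+\overline{\eta})^2$. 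For $m\geq 3$ the analogous expansion yields traces of alternating products of $P_n$ and resolvents in which the difference from the disconnected piece can be rewritten as at least $m-1$ nested commutators $[P_n,R_\pm]$; combined with the prefactor $n^{-m\gamma}$ from the $m$ copies of $\Psi$, a dimension-counting argument of the type used in \cite{realope} yields $C_m^{(n)}\to 0$. Levy's convergence theorem then completes the proof.

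The main obstacle is the second-cumulant computation: sharp uniform asymptotics for $z_-(\zeta_n)$ in the regime $1-|\zeta_n|\sim 1/n^\gamma$ are needed to extract the exact constant $2/(\eta+\overline{\eta})^2$, and the error incurred when the full GGT resolvent is replaced by $(T(\psi)-\zeta_n)^{-1}$ must be controlled uniformly across the whole $n^\gamma$-band near the corner $(n,n)$. The Hankel-commutator identity \eqref{eq:tophankid2} will be instrumental for reducing products of Toeplitz resolvents to a form from which this constant can be read off.
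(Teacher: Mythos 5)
Your structural roadmap coincides with the paper's: you split into $\alpha=0$ (CMV/block-Toeplitz, Soshnikov) and $\alpha\neq 0$ (GGT is valid because $\sum|\alpha_n|^2=\infty$), you identify $G_\alpha$ as a Toeplitz operator $T(\psi)$ plus a rank-one term supported in the first row, your symbol $\psi(z)=z(\rho z-1)/(z-\rho)$ is exactly the paper's $\phi$, the Wiener--Hopf factorization $\psi(z)-\zeta=\rho(z-z_-)(z-z_+)/(z-\rho)$ with $|z_-|<1<|z_+|$ is the same one used in \eqref{eq:WHFactorization}, and the asymptotics $1-|z_-(\zeta_n)|\sim c(\alpha,\theta_0)\Re\eta/n^\gamma$ are the content of \eqref{eq:rootsofPomega}--\eqref{eq:modolusofzpm}. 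Discarding the rank-one correction near the corner $(n,n)$ (your resolvent-identity remark) is the content of Proposition~\ref{prop:GGT2Toeplitz}, and your identity $C_2^{(n)}=-\tfrac14\Tr([P_n,\cdot]^2)$ is correct.

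The genuine gap is your treatment of $m\geq 3$. The claimed ``prefactor $n^{-m\gamma}$'' is illusory: the resolvents $(G-\zeta_n)^{-1}$ have operator norm $\sim n^\gamma$, so $\Psi_{n,\gamma,\eta}(G)$ has entries of size $O(n^{-\gamma})$ concentrated in a band of width $\sim n^\gamma$, and the connected trace for the $m$-th cumulant then scales as $O(n^{-m\gamma})\cdot O(n^{(m-1)\gamma}\cdot n^\gamma)=O(1)$, which is exactly the scaling that gives a nonzero $C_2$ and gives no decay for $m\geq 3$ from size estimates alone. The fact that $C_m^{(n)}\to 0$ for $m\geq 3$ is \emph{not} a dimension count; it is an algebraic fact specific to (block) Toeplitz operators, namely the Geronimo--Case/Borodin--Okounkov-type identity. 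This is exactly why the paper, after reducing to $T\bigl(\tfrac{P_{-\omega_n}}{P_{\omega_n}}\bigr)$ via \eqref{eq:toeplitz+henkel} and Proposition~\ref{prop:nohankel}, invokes \cite[Lemma 4.4]{realope} to rewrite the full moment generating function as
$${\rm e}^{\tfrac{t^2}{2}\Tr H(\kappa^{(n,\omega)})H(\widetilde\kappa^{(n,\omega)})}\det\bigl(I+Q_n(R(t,\kappa^{(n,\omega)})^{-1}-I)\bigr),$$
and then shows the Fredholm determinant tends to $1$ in trace norm; both the identification of $\sigma^2$ and the vanishing of $C_m$ ($m\geq 3$) fall out of this single identity. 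You should replace the ``dimension-counting argument'' with this (or an equivalent strong Szeg\H o mechanism); without it the higher cumulants are not controlled.

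A second, smaller point: passing from $\tfrac{T(\psi)+\omega_n}{T(\psi)-\omega_n}$ to the Toeplitz operator $T\bigl(\tfrac{P_{-\omega_n}}{P_{\omega_n}}\bigr)$ produces, via \eqref{eq:tophankid}, a residual product of Hankel operators that must be shown to be negligible in trace norm on the window $P_{\{n-2mn^\beta,n+2mn^\beta\}}$; you allude to \eqref{eq:tophankid2} but this step (the content of Proposition~\ref{prop:nohankel} and estimate \eqref{eq:hankeltracenorm}) needs to be carried out rather than just ``read off.'' Once both points are fixed, your outline essentially reproduces the paper's proof.
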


  For the sake of simplicity we denote
  \begin{equation} \label{eq:omega_Definition}
\omega_n=\left(1-\frac{\eta}{n^\gamma}\right)e^{i\theta_0}
\end{equation} 
and we keep in mind that $\omega_n$ is also
$\eta,\gamma,\theta_0$ dependent so we get
$$\Psi_{n,\gamma,\eta} \left(\theta-\theta_0\right)=\Psi_{\omega_n,n}(\theta)=\frac{1}{n^\gamma}\text{Re}\left(\frac{e^{i\theta}+\omega_n} {e^{i\theta}-\omega_n}\right).$$

By the discussion in Section 2.3, in order to prove Proposition \ref{prop:PoissonForConstant}, it suffices to show that
\begin{equation} \label{eq:PoissonForConstant1}
         \lim_{n\rightarrow \infty }C^{(n)}_2\left(X^{\theta_0}_{\Psi_{\omega_n,n}}\right)=\sigma^2,
\end{equation}
\begin{equation} \label{eq:PoissonForConstant2}
      \lim_{n\rightarrow \infty }C^{(n)}_m\left(X^{\theta_0}_{\Psi_{\omega_n,n}}\right)=0\quad \forall m>2.
\end{equation}

We shall consider the cases  $\alpha=0$ and $\alpha \neq 0$ separately. Note that in the case $\alpha=0$ a very similar result was proven by Soshnikov in \cite{soshnikov}. However, since the function we consider
does not seem to fall into the category considered in \cite{soshnikov}, we treat it directly here.

\subsection{The case $\alpha \neq 0$.\\}
We start with the case $\alpha \neq 0$. Let $\mu_\alpha$ be a Geronimus measure associated with $\alpha_n=\alpha\not=0$. In this case, the set
of the orthogonal polynomials forms a complete orthonormal set in $L^2(\mu_\alpha)$, so we may use the form \eqref{eq:cmvcumulantGGT} of the
cumulant representation. The GGT matrix associated with $\mu_\alpha$ is
\begin{equation} \label{eq:Galpha}
G_\alpha=\begin{pmatrix} \alpha & \alpha\rho & \alpha\rho^2 & \alpha\rho^3 & \ldots \\
\rho & -\alpha^2 &  -\alpha^2\rho & -\alpha^2\rho^2 & \ldots \\
0 & \rho & -\alpha^2 & -\alpha^2\rho  & \ldots \\
0 & 0 & \rho &  -\alpha^2 & \ldots\\
\ldots & \ldots & \ldots & \ldots & \ddots
\end{pmatrix}.
\end{equation}
where $\rho=\sqrt{1-|\alpha|^2}$. Thus, in this case, \eqref{eq:PoissonForConstant1} and \eqref{eq:PoissonForConstant2} become
\begin{equation} \label{eq:PoissonForConstant3}
         \lim_{n\rightarrow \infty }C^{(n)}_2\left(\frac{1}{n^\gamma}\text{Re}\left(\frac{G_\alpha+\omega_n}{G_\alpha-\omega_n}\right)\right)=\sigma^2,
\end{equation}
\begin{equation} \label{eq:PoissonForConstant4}
      \lim_{n\rightarrow \infty }C^{(n)}_m\left(\frac{1}{n^\gamma}\text{Re}\left(\frac{G_\alpha+\omega_n}{G_\alpha-\omega_n}\right)\right)=0\quad \forall m>2.
\end{equation}

The proof of \eqref{eq:PoissonForConstant3} and \eqref{eq:PoissonForConstant4} consists of two steps. The first one is a reduction to the computation of the limiting cumulants of a Toeplitz operator and the second is the actual computation of these cumulants.

\subsubsection{The cumulants in terms of a Toeplitz operator}
Note that except for its first row and column, $G_\alpha$ has the shape of a Toeplitz operator. Indeed,
\begin{equation} \label{eq:Tphi}
T(\phi):=\begin{pmatrix}-{\alpha} & 0 & 0&\ldots\\
0 & 1 & 0&\ldots\\
0 & 0 & 1&\ldots\\
\vdots & \vdots & \vdots  &\ddots
\end{pmatrix}G_\alpha=\begin{pmatrix} -\alpha^2 & -\alpha^2\rho & -\alpha^2\rho^2 & -\alpha^2\rho^3 & \ldots \\
\rho & -\alpha^2 &  -\alpha^2\rho & -\alpha^2\rho^2 & \ldots \\
0 & \rho & -\alpha^2 & -\alpha^2\rho  & \ldots \\
0 & 0 & \rho &  -\alpha^2 & \ldots\\
\ldots & \ldots & \ldots & \ldots & \ddots
\end{pmatrix}
\end{equation}
is a Toeplitz operator corresponding to the symbol $\phi:\partial\mathbb{D}\to\mathbb{C}$ $$\phi(e^{i\theta})=\rho\cdot
e^{-i\theta}-\alpha^2\sum_{k=0}^\infty \rho^ke^{ik\theta}.$$

We want to show that for $m\geq 2$,
\begin{equation} \label{eq:CayleyToeplitzCompare}
\left|C^{(n)}_m\left(\frac{1}{n^\gamma}\text{Re}\left(\frac{G_\alpha+\omega_n}{G_\alpha-\omega_n}\right)\right)-C^{(n)}_m\left(\frac{1}{n^\gamma}\text{Re}\left(\frac{T\left(\phi\right)+\omega_n}{T\left(\phi\right)-\omega_n}\right)\right)\right|\to
0
\end{equation}
as $n\to \infty$. Our main tool for this is \cite[Theorem 3.1]{realope} (which we quote as Proposition \ref{th:comparison-general} below), so we need to show that both
$\frac{G_\alpha+\omega_n}{G_\alpha-\omega_n}$ and $\frac{T\left(\phi\right)+\omega_n}{T\left(\phi\right)-\omega_n}$ satisfy the conditions of that theorem.

We therefore start with
\begin{prop} \label{prop:CombsThomas}
Let $G=G_\alpha$ and $T(\phi)$ be as defined in \eqref{eq:Galpha} and \eqref{eq:Tphi}.
Then for any $z \in \mathbb{D}$
\beq \label{eq:CombesThomas}
\begin{split}
    &\left|\left(\frac{G+z}{G-z}\right)_{ij} \right|,\left|\left(\frac{T\left(\phi\right)+z}{T\left(\phi\right)-z}\right)_{ij}\right|\\
    &\quad \leq\frac{C}{d\left(z,\partial\mathbb{D}\right)}e^{-\Theta|i-j|}
\end{split}
\eeq
    where $\Theta=\min\left(\frac{d\left(z,\partial\mathbb{D}\right)}{2\left(\frac{12}{(1-\rho)^2}
    +3\right)},\ln\left(\frac{1+\rho}{2\rho}\right) \right)$ and $C>0$ is independent of $i,j$.
\end{prop}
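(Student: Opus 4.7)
The plan is a Combes--Thomas type resolvent estimate. First, the algebraic identity
\[
\frac{A+z}{A-z} = I + 2z(A-z)^{-1}
\]
reduces the claim to exponential off-diagonal decay of the matrix entries of $(A-z)^{-1}$ for $A = G_\alpha$ and $A = T(\phi)$. The norm $\|(A-z)^{-1}\|$ itself should be of order $1/d(z,\partial \mathbb{D})$: for $G_\alpha$ this is because $\sum |\alpha_n|^2 = \infty$ forces the orthonormal polynomials to be complete in $L^2(\mu_\alpha)$, making $G_\alpha$ unitary with $\sigma(G_\alpha) \subseteq \partial\mathbb{D}$; for $T(\phi)$ one has $T(\phi) = D\, G_\alpha$ with $D = \mathrm{diag}(-\alpha, 1, 1, \ldots)$, a rank-one multiplicative perturbation of $G_\alpha$, so invertibility of $T(\phi) - z$ for $z \in \mathbb{D}$ with a comparable resolvent bound follows either from a rank-one perturbation calculation or directly from standard Toeplitz spectral theory applied to the symbol $\phi(e^{i\theta}) = \rho e^{-i\theta} - \alpha^2/(1 - \rho e^{i\theta})$.

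Next comes the Combes--Thomas conjugation. For $t \in \mathbb{R}$ set $D_t = \mathrm{diag}(e^{tj})_{j=1}^\infty$. Then $(D_t A D_t^{-1})_{ij} = A_{ij}\, e^{t(i-j)}$ and the perturbation $R_t := D_t A D_t^{-1} - A$ has entries $A_{ij}(e^{t(i-j)} - 1)$. For $A = G_\alpha$ the nonzero entries of $R_t$ live on the subdiagonal (value $\rho(e^t - 1)$) and throughout the upper triangle, decaying in $|j - i|$ at rate $\rho$. Using the Schur test on the two-sided geometric decay one bounds
\[
\|R_t\| \leq C_\rho\bigl(e^{|t|} - 1\bigr),
\]
with $C_\rho$ explicit in $\rho$ (and growing like $1/(1-\rho)^2$), provided $e^t \rho$ stays below the convergence threshold of the geometric sums appearing in the Schur test --- this is the origin of the second constraint $\Theta \leq \ln((1+\rho)/(2\rho))$. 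The analogous bound holds verbatim for $T(\phi)$, since the upper-triangular tail has exactly the same geometric structure and only the $(1,1)$-entry is altered.

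Combining the two ingredients: choose $t > 0$ so that $\|R_t\| < \tfrac{1}{2} d(z,\partial\mathbb{D})$, which forces $t \lesssim d(z,\partial\mathbb{D})\,(1-\rho)^2$ and produces the first term in the definition of $\Theta$. A Neumann series in $(A-z)^{-1} R_t$ then shows $D_t A D_t^{-1} - z$ is invertible with $\|(D_t A D_t^{-1} - z)^{-1}\| \leq 2/d(z,\partial\mathbb{D})$. Since $D_t (A-z)^{-1} D_t^{-1} = (D_t A D_t^{-1} - z)^{-1}$, reading off matrix entries gives
\[
\bigl|((A-z)^{-1})_{ij}\bigr| = e^{-t(i-j)}\, \bigl|((D_t A D_t^{-1} - z)^{-1})_{ij}\bigr| \leq \frac{2\, e^{-t(i-j)}}{d(z,\partial\mathbb{D})}
\]
for $i \geq j$; running the same argument with $-t$ in place of $t$ covers $i < j$, and substituting into the Cayley transform identity concludes.

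The main obstacle is that neither $G_\alpha$ nor $T(\phi)$ is a band operator: both have the entire upper triangle populated with only geometrically decaying entries. Consequently, the textbook finite-bandwidth Combes--Thomas estimate does not apply directly, and some care is required to control the norm of the non-banded perturbation $R_t$ via the Schur test. This is precisely the restriction that keeps $e^t \rho$ below a threshold of order $(1+\rho)/2$ for all relevant geometric series to converge, and it is what produces the second term in the definition of $\Theta$.
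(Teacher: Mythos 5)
Your proposal is correct and follows essentially the same route as the paper: a Combes--Thomas conjugation by $\mathrm{diag}(e^{\Theta j})$, a Schur-type bound on the conjugated perturbation exploiting the geometric decay of the (non-banded) upper triangle, the resolvent norm bound from unitarity of $G_\alpha$ (and from Toeplitz spectral theory of the unimodular symbol $\phi$ for $T(\phi)$), and finally reading off the matrix entries. The only cosmetic differences are that you invoke the identity $\frac{A+z}{A-z}=I+2z(A-z)^{-1}$ where the paper sums $(G+z)_{ik}(G-z)^{-1}_{kj}$ directly, and you phrase the resolvent control as a Neumann series where the paper uses the second resolvent identity and solves the resulting inequality; for $T(\phi)$ your alternative ``rank-one multiplicative perturbation'' suggestion would need additional justification, but the Toeplitz-symbol route you also mention is exactly the one the paper takes.
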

\begin{proof}
We start by proving the claim for $\frac{G+z}{G-z}$.

Let  $\Theta=\min\left(\frac{d\left(z,\partial\mathbb{D}\right)}{2\left(\frac{12}{(1-\rho)^2} +3\right)},\ln\left(\frac{1+\rho}{2\rho}\right)
\right)$  and define the diagonal matrix $R$ by
\beq \no
R_{j,j}=e^{\Theta j}.
\eeq
We have
\beq \no
e^{\Theta (j-i)} \left( \left(G-z\right)^{-1} \right)_{i,j}=\left( R^{-1} \left(G-z\right)^{-1} R \right)_{i,j},
\eeq
which implies that
\beq \label{eq:CT1}
\left|   \left( \left(G-z\right)^{-1} \right)_{i,j} \right| \leq e^{-\Theta (j-i)} \left \|  R^{-1} \left(G-z\right)^{-1} R \ \right
\|_\infty,
\eeq
where $\| \cdot \|_\infty$ is the operator norm.

Now note that
\beq \no
R^{-1}\left(G-z\right)^{-1} R=\left(R^{-1}  \left(G-z\right) R\right)^{-1}
=\left( R^{-1} G R-z \right)^{-1}
\eeq
so that, by the resolvent formula
\beq \no
R^{-1}\left(G-z\right)^{-1} R=\left(G-z\right)^{-1}+R^{-1}\left(G-z\right)^{-1} R \left(G-R^{-1}GR \right) \left(G-z\right)^{-1}.
\eeq
Thus
\beq \label{eq:CT2}
\begin{split}
& \left \| R^{-1}\left(G-z\right)^{-1} R \right \|_\infty \\
&\quad \leq \left \|\left(G-z\right)^{-1} \right \|_\infty+ \left \|R^{-1}\left(G-z\right)^{-1} R \right \|_\infty \left \| G-R^{-1}GR \right
\|_\infty \left \| \left(G-z\right)^{-1} \right \|_\infty.
\end{split}
\eeq

Observe that for a matrix $A$ we have
$$\|A\|_\infty \leq \sum_{j \in \mathbb{Z}} \sup_{i} \left | A_{i,i+j} \right |$$
which leads us to
$$
\left \| G-R^{-1}GR \right \|_\infty \leq |\alpha| \sum_{k=1}^\infty\rho^k  \left|e^{k\Theta}-1 \right|+\rho\left|e^{-\Theta}-1\right|\leq
|\alpha| \sum_{k=1}^\infty\rho^k  \left|e^{k\Theta}-1 \right|+\rho e\left|{\Theta}\right|
$$
(note $0<\Theta<1$). Using $a^n-b^n=(a-b)(a^{n-1}+a^{n-2}b+\ldots+ab^{n-2}+b^{n-1})$ one can obtain $\left|e^{k\Theta}-1 \right|\leq \left|e^{\Theta}-1
\right|\cdot ke^{k\Theta} $ so we have
$$
\left \| G-R^{-1}GR \right \|_\infty \leq |\alpha| \left|e^{\Theta}-1 \right| \sum_{k=1}^\infty k \left(\rho e^{\Theta}\right)^k+\rho
e\left|{\Theta}\right|.
$$
Using the fact that $e^\Theta\leq \frac{1+\rho}{2\rho}$ and $0<\rho<1$ we see that $\frac{\rho e^\Theta}{\left(1-\rho e^\Theta\right)^2}\leq
\frac{4}{\left(1-\rho\right)^2} $. Together with $\sum_{k=1}^\infty k \left(\rho e^{\Theta}\right)^k=\frac{\rho e^\Theta}{\left(1-\rho e^\Theta\right)^2}$ this implies that
$$
 |\alpha| \left|e^{\Theta}-1 \right| \sum_{k=1}^\infty k \left(\rho e^{\Theta}\right)^k+\rho e\left|{\Theta}\right| \leq\Theta \left(
 \frac{4|\alpha|e}{(1-\rho)^2} +\rho e \right)<  \Theta\left(\frac{12}{(1-\rho)^2} +3\right).
$$

Also, by the unitarity of $G$ we have $\left \| \left(G-z\right)^{-1} \right \|_\infty \leq
\left(d\left(z,\partial\mathbb{D}\right)\right)^{-1}$.

Plugging these into \eqref{eq:CT2} we see that
\beq \no
\left \| R^{-1}\left(G-z\right)^{-1} R \right \|_\infty \leq \frac{1}{d\left(z,\partial\mathbb{D}\right)}+ \left \|R^{-1}\left(G-z\right)^{-1}
R \right \|_\infty \Theta\left(\frac{12}{(1-\rho)^2} +3\right)\frac{1}{d\left(z,\partial\mathbb{D}\right)},
\eeq
and since
$$\Theta\left(\frac{12}{(1-\rho)^2} +3\right)\frac{1}{d\left(z,\partial\mathbb{D}\right)} \leq 1/2$$
we have
\beq \no
\left \| R^{-1}\left(G-z\right)^{-1} R \right \|_\infty \leq \frac{2}{  d\left(z,\partial\mathbb{D}\right) }.
\eeq

Plugging this into \eqref{eq:CT1} yields for $j\geq i$
\beq \label{eq:SimpleCombesThomas}
\left|\left(G-z\right)^{-1}_{i,j}\right|\leq \frac{2e^{-\Theta|i-j|}}{d\left(z,\partial\mathbb{D}\right)}.
\eeq
For $j<i$ we can obtain the same bound similarly, by interchanging $R$ with $R^{-1}$ and observing that for $k\geq 0$ we have
$\left|e^{-k\Theta}-1\right|=1-e^{-k\Theta}\leq ke^{k\Theta}$.

Thus, we have that
\beq \begin{split}
     &\left|\left(\left(G+z\right)\left(G-z\right)^{-1}
     \right)_{i,j}\right|\\&\quad\leq\sum_{k=0}^\infty\left|\left(G+z\right)_{i,k}\right|\left|\left(G-z\right)^{-1}_{k,j}\right|\\
     &\quad\leq|\alpha|\sum_{k=i+1}^\infty\rho^{k-i}\left|\left(G-z\right)^{-1}_{k,j}\right|+\left(|\alpha|+|z|\right)\left|\left(G-z\right)^{-1}_{i,j}\right|+\rho\left|\left(G-z\right)^{-1}_{i-1,j}\right|\\
     &\quad\leq
     \frac{2}{d\left(z,\partial\mathbb{D}\right)}\left(|\alpha|\sum_{k=i+1}^\infty\rho^{k-i}e^{-\Theta|k-j|}+\left(|\alpha|+|z|\right)e^{-\Theta|i-j|}+\rho
     e^{-\Theta|i-j-1|}\right).\\
     \end{split} \eeq
     Using $1/3<e^{-\Theta}<1$ and $|\alpha|,\rho,|z|<1$ we can see that for $j\leq i$
     \beq \begin{split}
    &\frac{2}{d\left(z,\partial\mathbb{D}\right)}\left(|\alpha|\sum_{k=i+1}^\infty\rho^{k-i}e^{-\Theta|k-j|}+\left(|\alpha|+|z|\right)e^{-\Theta|i-j|}+\rho
    e^{-\Theta|i-j-1|}\right)\\
     &\quad\leq\frac{2e^{-\Theta|i-j|}}{d\left(z,\partial\mathbb{D}\right)}\left(\frac{|\alpha|\rho e^{-\Theta}}{1-\rho
     e^{-\Theta}}+|\alpha|+|z|+\rho e^\Theta\right)\\
     &\quad \leq2\left(\frac{1}{1-\rho}+5\right)\frac{e^{-\Theta|i-j|}}{d\left(z,\partial\mathbb{D}\right)}
     \end{split} \eeq
    and for $i< j$ we have (note $\frac{1}{1-\rho e^\Theta}<\frac{2}{1-\rho}$ and $\rho< e^{-\Theta}$)
     \beq \begin{split}
    &\frac{2}{d\left(z,\partial\mathbb{D}\right)}\left(|\alpha|\sum_{k=i+1}^\infty\rho^{k-i}e^{-\Theta|k-j|}+\left(|\alpha|+|z|\right)e^{-\Theta|i-j|}+\rho
    e^{-\Theta|i-j-1|}\right)\\
    &\quad\leq \frac{2}{d\left(z,\partial\mathbb{D}\right)}\left(|\alpha|\sum_{k=i+1}^j\rho^{k-i}e^{-\Theta|k-j|}+|\alpha|\sum_{k=j+1}^\infty
    \rho^{k-i}+\left(|\alpha|+|z|\right)e^{-\Theta|i-j|}+\rho e^{-\Theta|i-j-1|}\right)\\
     &\qquad\leq\frac{2}{d\left(z,\partial\mathbb{D}\right)}\left(\frac{|\alpha| \rho e^{\Theta} e^{-\Theta(j-i)}}{1-\rho
     e^\Theta}+|\alpha|\rho \frac{\rho^{j-i}}{1-\rho}+\left(|\alpha|+|z|\right)e^{-\Theta(j-i)}+\rho e^{-\Theta(j-i)}e^{-\Theta}\right)\\
     &\qquad\quad\leq\frac{2e^{-\Theta(j-i)}}{d\left(z,\partial\mathbb{D}\right)}\left(\frac{3}{1-\rho e^\Theta}+ \frac{1}{1-\rho}+3\right)\\
     &\qquad\qquad \leq 2\left(3+\frac{7}{1-\rho}\right)\frac{e^{-\Theta|i-j|}}{d\left(z,\partial\mathbb{D}\right)}.
     \end{split} \eeq
     This completes the proof for $\frac{G+z}{G-z}$.

     As for $\frac{T\left(\phi\right)+z}{T\left(\phi\right)-z}$, we first note that $$\phi(e^{i\theta})=\rho\cdot
     e^{i\theta}-\alpha^2\sum_{k=0}^\infty \rho^ke^{-ik\theta}=\frac{\rho e^{i\theta}-1}{1-\rho e^{-i\theta}}$$
    which means that $\text{Range}\left(\phi\right)\subset \partial \bbD$. It can be verified that $\phi\neq 1$, therefore the winding number
    of $\phi$ around any $z \not \in \text{Range}\left(\phi\right)$ is 0, which by \cite[Section 2]{introtolttm} shows that
    $\left(T\left(\phi\right)-z\right)^{-1}$ is well defined. By \cite[Section 2,3]{introtolttm}, the operator norm of the resolvent can be
    bounded  $$\left\|\left(T\left(\phi\right)-z\right)^{-1}\right\|_\infty\leq \frac{2}{d\left(z,\text{Range}\left(\phi\right)\right)}\leq
    \frac{2}{d\left(z,\partial \bbD\right)}.$$
    Which means that one can mimic the same procedure as above to get the desired result for
    $\frac{T\left(\phi\right)+z}{T\left(\phi\right)-z}$.

\end{proof}

\begin{remark} \label{rem:RealPart}
Since $\alpha \in \mathbb{R}$, the entries of $G_\alpha$ and  $T(\phi)$ are real. Thus, Proposition \ref{prop:CombsThomas} and its proof immediately imply that \eqref{eq:CombesThomas} and \eqref{eq:SimpleCombesThomas} still hold when replacing $G_\alpha$ by $G_\alpha^*$ and $T(\phi)$ by $T(\phi)^*$. In particular, \eqref{eq:CombesThomas} also holds for the real and imaginary parts of $\left(\frac{G+z}{G-z}\right)^*$ and $\left(\frac{T\left(\phi\right)+z}{T\left(\phi\right)-z}\right)^*$.
\end{remark}

The essence of Proposition \ref{prop:CombsThomas} is that the structure of $\frac{G+z}{G-z}$ is of an ``almost banded matrix'', meaning that far
from the diagonal the entries of the operator are exponentially small. In \cite{realope} it was shown that for such matrices there is a general comparison theorem.

\begin{prop} \cite[Theorem 3.1] {realope} \label{th:comparison-general}
Fix $0<\gamma<1$ and let $\{ \mathcal A^{(n)} \}_{n=1}^\infty$ $\{ \mathcal B^{(n)} \}_{n=1}^\infty$ be two matrix sequences which are uniformly bounded in operator norm. Assume that
$$\left|\mathcal A^{(n)}_{r,s}\right|,\left|\mathcal B^{(n)}_{r,s}\right|\leq Ce^{-\widetilde{d}\frac{|r-s|}{n^\gamma}}$$
for some $C,\widetilde{d}>0$.

For any two positive numbers $\ell_1<\ell_2$ let
$$P_{\{\ell_1,\ell_2\}}=P_{[\ell_2]+1}-P_{[\ell_1]-1}$$ $($recall $P_n$ is the orthogonal projection onto the first $n$ coordinates$)$.

Then, for any $\beta>\gamma$ and any $m \geq 2$,
\begin{equation} \label{eq:almost-band}
\begin{split}
& \left|C_m^{(n)} \left(\mathcal B^{(n)} \right) - C_m^{(n)} \left( \mathcal A^{(n)} \right) \right| \\
&\quad \leq C(m,\beta) \left\| P_{\left\{n-2mn^\beta,n+2mn^\beta \right\}} \left( \mathcal{B}^{(n)} -\mathcal{A}^{(n)}\right)
P_{\left\{n-2mn^\beta,n+2mn^\beta\right\}} \right\|_1+o(1)
\end{split}
\end{equation}
as $n \rightarrow \infty$, where $C(m,\beta)$ is a constant that depends on $m$ and $\beta$ but is independent of $n$.

\end{prop}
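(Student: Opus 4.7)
The plan is to reduce the difference $C_m^{(n)}(\mathcal{B}^{(n)}) - C_m^{(n)}(\mathcal{A}^{(n)})$ to a trace-norm estimate on a window of width $O(n^\beta)$ around index $n$, by exploiting the exponential off-diagonal decay together with a combinatorial cancellation intrinsic to the cumulant formula. Writing $\mathcal{A}=\mathcal{A}^{(n)}$ for brevity and $Q_n=I-P_n$, the key structural observation is that, when all inner $P_n$'s in the composition-trace are expanded as $I-Q_n$, the subtracted moment $\Tr(\mathcal{A}^m P_n)$ is precisely the contribution in which every $Q_n$ is absent, so that each surviving summand of
\begin{equation*}
T_{l_1,\ldots,l_j}(\mathcal{A}) := \Tr(\mathcal{A}^{l_1} P_n \cdots \mathcal{A}^{l_j} P_n) - \Tr(\mathcal{A}^m P_n) = \sum_{S\neq\emptyset}(-1)^{|S|}\Tr\bigl(\mathcal{A}^{l_1} R_1 \mathcal{A}^{l_2} R_2 \cdots R_{j-1}\mathcal{A}^{l_j} P_n\bigr)
\end{equation*}
contains at least one interior factor $R_i=Q_n$ sandwiched between two powers of $\mathcal{A}$.

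Second, an interior $Q_n$ forces the associated loop-index in the matrix-entry sum to cross the boundary $\{i\le n\}\leftrightarrow\{i>n\}$. The decay bound $|\mathcal{A}_{r,s}|\leq Ce^{-\widetilde{d}|r-s|/n^\gamma}$ then confines each relevant index to a neighborhood of $n$: a standard geometric-sum argument shows that the contribution from any summation index lying outside the window $W=[n-2mn^\beta,n+2mn^\beta]$ is bounded by a constant times $e^{-\widetilde{d}n^{\beta-\gamma}}$, which is super-polynomially small since $\beta>\gamma$. Applying this truncation to each of the at most $m$ loop-indices shows that $T_{l_1,\ldots,l_j}(\mathcal{A})$ coincides with $T_{l_1,\ldots,l_j}(P_W\mathcal{A} P_W)$ up to an $o(1)$ error, uniformly over compositions.

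Third, after localization I would telescope via
\begin{equation*}
\mathcal{B}^{k}-\mathcal{A}^{k}=\sum_{r=0}^{k-1}\mathcal{B}^{k-1-r}(\mathcal{B}-\mathcal{A})\mathcal{A}^{r},
\end{equation*}
applied successively to each factor $\mathcal{B}^{l_i}$ in $T_{l_1,\ldots,l_j}(\mathcal{B})-T_{l_1,\ldots,l_j}(\mathcal{A})$. Each telescoping step produces a trace of the form $\Tr\bigl(X_1(\mathcal{B}-\mathcal{A})X_2\bigr)$ with $X_1,X_2$ products of at most $m$ uniformly bounded operators (powers of $\mathcal{A},\mathcal{B}$ together with the projections $P_n,Q_n$). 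A further decay argument lets me insert $P_W$-projections on either side of $(\mathcal{B}-\mathcal{A})$ at negligible cost, and H\"older's inequality for trace ideals then gives
\begin{equation*}
\bigl|\Tr\bigl(X_1 P_W(\mathcal{B}-\mathcal{A})P_W X_2\bigr)\bigr|\leq \|X_1\|_\infty\|X_2\|_\infty\,\bigl\|P_W(\mathcal{B}-\mathcal{A})P_W\bigr\|_1,
\end{equation*}
which, summed over the finitely many combinatorial terms, yields the stated bound with $C(m,\beta)$ collecting the combinatorial constants.

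The main obstacle is the bookkeeping in the first step: verifying that the cancellation between the $j$-fold composition-trace and the subtracted moment $\Tr(\mathcal{A}^m P_n)$ is complete, so that \emph{every} surviving term genuinely contains an interior $Q_n$ available for decay-based localization. The analytic ingredients in Steps 2 and 3 are standard given the uniform operator-norm bound and the exponential decay, but one has to be careful to keep $C(m,\beta)$ uniform in $n$, which requires tracking the number of terms generated by the expansion and the telescope.
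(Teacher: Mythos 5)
This proposition is quoted from \cite[Theorem 3.1]{realope} and is not re-proved in the present paper, so there is no internal argument to compare against; your reconstruction is correct in outline and follows the same strategy (combinatorial cancellation via $P_n = I - Q_n$, decay-based localization to a window of width $O(n^\beta)$ around $n$, then telescoping to isolate $\mathcal{B}-\mathcal{A}$ and an application of H\"older in trace ideals). Your stated worry about Step 1 is in fact unfounded: keeping one $P_n$ fixed and expanding the remaining $j-1$ copies as $I-Q_n$, the $S=\emptyset$ term equals $\Tr\bigl(\mathcal{A}^{l_1+\cdots+l_j}P_n\bigr)=\Tr\bigl(\mathcal{A}^m P_n\bigr)$ exactly, so the cancellation with the subtracted moment is complete and every surviving summand carries at least one interior $Q_n$ between two genuine powers of $\mathcal{A}$ (each $l_i\ge 1$). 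Two small points of precision are worth recording for Step 2. First, the boundary crossing is forced not by the interior $Q_n$ alone but by its combination with the retained $P_n$: the $Q_n$ pins some loop index to $\{k>n\}$ while the $P_n$ pins another to $\{k\le n\}$; neither constraint by itself would produce smallness. Second, when the single-entry decay $|\mathcal{A}_{r,s}|\le Ce^{-\widetilde{d}|r-s|/n^\gamma}$ is propagated to powers $\mathcal{A}^{\ell}$ (or to sums over the intermediate loop indices), one accumulates polynomial-in-$n^\gamma$ prefactors; these are harmless against $e^{-c\,n^{\beta-\gamma}}$ for $\beta>\gamma$, but should be tracked explicitly to justify the $o(1)$ error term and the $n$-independence of $C(m,\beta)$.
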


In our case, note that $\left\|\left(G-\omega_n\right)^{-1}\right\|_\infty\leq d\left(\omega_n,\partial\bbD\right)^{-1}=\mathcal{O} \left(n^\gamma\right)$ and
$\left\|\left(T\right(\phi\left)-\omega_n\right)^{-1}\right\|_\infty\leq 2d\left(\omega_n,\partial\bbD\right)^{-1}=\mathcal{O}
\left(n^\gamma\right)$ as $n\to \infty$. Thus, Proposition \ref{prop:CombsThomas} implies that $\mathcal{A}^{(n)}=\frac{1}{n^\gamma}\textrm{Re}\frac{G+\omega_n}{G-\omega_n}$ and $\mathcal{B}^{(n)}=\frac{1}{n^\gamma}\textrm{Re}\left(\frac{T\left(\phi\right)+\omega_n}{T\left(\phi\right)-\omega_n}\right)$ satisfy the conditions of Proposition \ref{th:comparison-general}.

Thus
 \beq \label{eq:cumulantComparisson}
 \begin{split}
    &\left|C^{(n)}_m\left(\frac{1}{n^\gamma}\textrm{Re}\left(\frac{G+\omega_n}{G-\omega_n}\right)\right)-C^{(n)}_m\left(\frac{1}{n^\gamma}\textrm{Re}\left(\frac{T\left(\phi\right)+\omega_n}{T\left(\phi\right)-\omega_n}\right)\right)\right|\\
    &\quad \leq \frac{1}{n^\gamma} C(m,\beta) \left\| P_{\left \{ n-2mn^\beta,n+2mn^\beta \right \}}
    \textrm{Re}\left(\left(\frac{G+\omega_n}{G-\omega_n}\right) -\left(\frac{T\left(\phi\right)+\omega_n}{T\left(\phi\right)-\omega_n}\right) \right)P_{\left \{
    n-2mn^\beta,n+2mn^\beta \right \}} \right\|_1+o(1)
    \end{split}
\eeq as $n\to \infty$.

\begin{prop} \label{prop:GGT2Toeplitz}
For any $m \geq 2$
\beq \label{eq:GGT2Toep1}
\frac{1}{n^\gamma}  \left\| P_{\left \{ n-2mn^\beta,n+2mn^\beta \right \}} \left(
    \left(\frac{G+\omega_n}{G-\omega_n}\right) -\left(\frac{T\left(\phi\right)+\omega_n}{T\left(\phi\right)-\omega_n}\right) \right)P_{\left \{
    n-2mn^\beta,n+2mn^\beta \right \}} \right\|_1\to 0
    \eeq
and
\beq \label{eq:GGT2Toep2}
\frac{1}{n^\gamma}  \left\| P_{\left \{ n-2mn^\beta,n+2mn^\beta \right \}} \textrm{Re}\left(
    \left(\frac{G+\omega_n}{G-\omega_n}\right) -\left(\frac{T\left(\phi\right)+\omega_n}{T\left(\phi\right)-\omega_n}\right) \right)P_{\left \{
    n-2mn^\beta,n+2mn^\beta \right \}} \right\|_1\to 0
    \eeq
    as $n\to \infty$
\end{prop}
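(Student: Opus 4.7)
My plan is to exploit the fact that $T(\phi)-G_\alpha$ is a rank-one operator supported in the first row; once this is noted, the sandwiched difference becomes a rank-one operator whose trace norm factors as a product of two $\ell^2$ norms that can each be controlled by the Combes--Thomas estimate of Proposition \ref{prop:CombsThomas}.

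\emph{Step 1: Rank-one reduction.} Comparing \eqref{eq:Galpha} and \eqref{eq:Tphi}, rows $2,3,\ldots$ of $G_\alpha$ and $T(\phi)$ coincide, while the difference of first rows equals $-\alpha(1+\alpha)(1,\rho,\rho^2,\ldots)$. Setting $w=-\alpha(1+\alpha)(1,\rho,\rho^2,\ldots)^T$, we obtain $T(\phi)-G_\alpha=|e_1\rangle\langle w|$. The standard resolvent identity then yields
\begin{equation*}
\frac{G_\alpha+\omega_n}{G_\alpha-\omega_n}-\frac{T(\phi)+\omega_n}{T(\phi)-\omega_n}=2\omega_n(G_\alpha-\omega_n)^{-1}(T(\phi)-G_\alpha)(T(\phi)-\omega_n)^{-1}=2\omega_n|u_n\rangle\langle v_n|,
\end{equation*}
where $u_n=(G_\alpha-\omega_n)^{-1}e_1$ and $v_n=(T(\phi)-\omega_n)^{-*}w$. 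Denoting $Q_n=P_{\{n-2mn^\beta,n+2mn^\beta\}}$, the sandwiched difference is $Q_n|u_n\rangle\langle v_n|Q_n=|Q_nu_n\rangle\langle Q_nv_n|$, whose trace norm equals $|2\omega_n|\cdot\|Q_nu_n\|\cdot\|Q_nv_n\|$.

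\emph{Step 2: Norm estimates.} Since $\omega_n=(1-\eta/n^\gamma)e^{i\theta_0}$ with $\text{Re}\,\eta>0$, we have $d(\omega_n,\partial\mathbb{D})=\text{Re}(\eta)/n^\gamma+o(n^{-\gamma})$. Consequently, in Proposition \ref{prop:CombsThomas}, $\Theta\geq c/n^\gamma$ for some $c>0$ and all large $n$, while the prefactor $1/d(\omega_n,\partial\mathbb{D})$ is $\mathcal{O}(n^\gamma)$. So $|(u_n)_i|\leq Cn^\gamma e^{-\Theta(i-1)}$, and for $i$ in the range $\{n-2mn^\beta,\ldots,n+2mn^\beta\}$ we have $i-1\geq n/2$ (since $\beta<1$), giving $\Theta(i-1)\geq (c/2)n^{1-\gamma}$. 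Summing the squares over the $\sim n^\beta$ relevant indices yields $\|Q_nu_n\|^2\leq Cn^{2\gamma+\beta}e^{-cn^{1-\gamma}}$, which is super-polynomially small. For $v_n$, Remark \ref{rem:RealPart} gives $|(v_n)_i|\leq Cn^\gamma\sum_{j\geq 1}\rho^{j-1}e^{-\Theta|i-j|}$; splitting this sum at $j=i$ and using the inequality $\rho e^\Theta\leq(1+\rho)/2<1$ (built into the definition of $\Theta$) produces a bound of the form $C'(e^{-\Theta i}+\rho^i)$, and the same super-polynomial decay of $\|Q_nv_n\|$ follows.

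\emph{Step 3: Conclusion.} Multiplying these two super-polynomially small norms and dividing by $n^\gamma$ proves \eqref{eq:GGT2Toep1}. For \eqref{eq:GGT2Toep2} write $\text{Re}(M)=(M+M^*)/2$: the adjoint of our rank-one operator is again rank one, and Remark \ref{rem:RealPart} provides the analogous Combes--Thomas bounds for $G_\alpha^*$ and $T(\phi)^*$, so the identical estimate applies. The main technical point to verify (rather than a genuine obstacle, as all the necessary ingredients are in place) is that the $\mathcal{O}(n^\gamma)$ blow-up of the resolvents at $\omega_n$ is overpowered by the exponential decay rate $\Theta\sim n^{-\gamma}$ along the projection range; this works precisely because $\beta<1$ keeps the support of $Q_n$ at distance $\sim n$ from index $1$, and $\gamma<1$ ensures $n^{1-\gamma}\to\infty$.
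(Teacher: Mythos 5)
Your proof is correct and, while it starts from the same two observations as the paper (the resolvent identity and the fact that all rows of $G_\alpha-T(\phi)$ beyond the first vanish), it exploits them more directly and cleanly. The paper uses the vanishing of the rows to write $P_{\{n-4mn^\beta,n+4mn^\beta\}}(G_\alpha-T(\phi))=0$, inserts complementary projections, and then cites \cite[Lemma 3.2]{realope} to bound the trace norm of $P_{\{n-2mn^\beta,n+2mn^\beta\}}(G_\alpha-\omega_n)^{-1}(I-P_{\{n-4mn^\beta,n+4mn^\beta\}})$ by $D\, n^{\beta+2\gamma}e^{-D'n^{\beta-\gamma}}$. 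You instead record that $T(\phi)-G_\alpha$ is genuinely rank one, so the sandwiched difference is $2\omega_n\,|Q_n u_n\rangle\langle Q_n v_n|$ and its trace norm is the elementary product $|2\omega_n|\,\|Q_n u_n\|\,\|Q_n v_n\|$; the Combes--Thomas estimate of Proposition \ref{prop:CombsThomas}, together with Remark \ref{rem:RealPart} for the adjoint resolvent, then controls both factors at once, and no auxiliary trace-norm lemma is needed. A by-product is the slightly sharper decay rate $e^{-cn^{1-\gamma}}$ in place of $e^{-cn^{\beta-\gamma}}$, though this is immaterial since both diverge. Two small points: your displayed bound for $v_n$ should retain the $n^\gamma$ prefactor, $|(v_n)_i|\leq C'\, n^\gamma(e^{-\Theta i}+\rho^i)$, which you dropped but which is harmless; and for \eqref{eq:GGT2Toep2} the detour through adjoints and Remark \ref{rem:RealPart} is unnecessary, since $\|A^*\|_1=\|A\|_1$ already gives $\|Q_n\,\textrm{Re}(A)\,Q_n\|_1\leq\|Q_nAQ_n\|_1$, making \eqref{eq:GGT2Toep2} an immediate corollary of \eqref{eq:GGT2Toep1}.
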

\begin{proof}
We note that
\begin{equation} \nonumber
\frac{G+\omega_n}{G-\omega_n}-\frac{T\left(\phi\right)+\omega_n}{T\left(\phi\right)-\omega_n}=
2\omega_n(G-\omega_n)^{-1}(G-T\left(\phi\right))(T\left(\phi\right)-\omega_n)^{-1}
\end{equation}
and note that $\left(G-T(\phi)\right)_{jk}=0$ for any $1<j\in\mathbb{N}$. This implies that for $n$ sufficiently large
\beq \no
P_{\left \{ n-4mn^\beta,n+4mn^\beta \right \}}(G-T(\phi))=0.
\eeq

Thus, letting $\widetilde{P}_2=P_{\left \{ n-2mn^\beta,n+2mn^\beta \right \}}$ and $\widetilde{P}_4=P_{\left \{ n-4mn^\beta,n+4mn^\beta \right \}}$,
\begin{equation} \label{eq:cumulantComparisson2}
\begin{split}
& \widetilde{P}_2 \left(\left(\frac{G+\omega_n}{G-\omega_n}\right) -\left(\frac{T\left(\phi\right)+\omega_n}{T\left(\phi\right)-\omega_n}\right) \right)\widetilde{P}_2\\
&=2\omega_n \widetilde{P}_2 (G-\omega_n)^{-1}  \widetilde{P}_4(G-T\left(\phi\right))(T\left(\phi\right)-\omega_n)^{-1}\widetilde{P}_2 \\
&\quad+2\omega_n \widetilde{P}_2 (G-\omega_n)^{-1}\left(I-\widetilde{P}_4\right)(G-T\left(\phi\right))(T\left(\phi\right)-\omega_n)^{-1}\widetilde{P}_2 \\
&=2\omega_n \widetilde{P}_2 (G-\omega_n)^{-1}\left(I-\widetilde{P}_4\right)(G-T\left(\phi\right))(T\left(\phi\right)-\omega_n)^{-1}\widetilde{P}_2
\end{split}
\end{equation}
for all $n$ large enough.

Now note that by \cite[Lemma 3.2]{realope} and \eqref{eq:SimpleCombesThomas}
\beq \no
\left \|\widetilde{P}_2 (G-\omega_n)^{-1}\left(I-\widetilde{P}_4\right) \right\|_1 \leq D n^{\beta+2\gamma}e^{-D'n^{\beta-\gamma}}
\eeq
for some constants $D, D'>0$, and the same holds for $\left \|\widetilde{P}_2 (G^*-\overline{\omega_n})^{-1}\left(I-\widetilde{P}_4\right) \right\|_1$ by Remark \ref{rem:RealPart}. Using this together with the fact that $\left(\omega_n\right)_n$ and $\left \|\frac{1}{n^\gamma}(G-T\left(\phi\right))(T\left(\phi\right)-\omega_n)^{-1} \right \|$ are bounded, we get that
\begin{equation} \label{eq:cumulantComparisson3}
\left \|\frac{2\omega_n}{n^\gamma} \widetilde{P}_2 (G-\omega_n)^{-1}\left(I-\widetilde{P}_4\right)(G-T\left(\phi\right))(T\left(\phi\right)-\omega_n)^{-1}\widetilde{P}_2 \right \|_1 \leq \widetilde{D} n^{\beta+2\gamma}e^{-\widetilde{D}'n^{\beta-\gamma}}
\end{equation}
and the same holds for the real part. This, by \eqref{eq:cumulantComparisson2}, finishes the proof.
\end{proof}

By \eqref{eq:cumulantComparisson}, Proposition \ref{prop:GGT2Toeplitz} implies
\begin{corollary}
 For any $m\geq 2$ $$\left|C^{(n)}_m\left(\frac{1}{n^\gamma}\left(\frac{G+\omega_n}{G-\omega_n}\right)\right)-C^{(n)}_m\left(\frac{1}{n^\gamma}\left(\frac{T\left(\phi\right)+\omega_n}{T\left(\phi\right)-\omega_n}\right)\right)\right|\rightarrow 0$$ and
$$\left|C^{(n)}_m\left(\frac{1}{n^\gamma}\text{Re}\left(\frac{G+\omega_n}{G-\omega_n}\right)\right)-C^{(n)}_m\left(\frac{1}{n^\gamma}\text{Re}\left(\frac{T\left(\phi\right)+\omega_n}{T\left(\phi\right)-\omega_n}\right)\right)\right|\to 0$$
as $n\to \infty$.
\end{corollary}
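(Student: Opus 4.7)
The plan is to read the corollary as the direct combination of \eqref{eq:cumulantComparisson} with Proposition \ref{prop:GGT2Toeplitz}, so the bulk of the work is verifying that Proposition \ref{th:comparison-general} actually applies to the two sequences
\[
\mathcal{A}^{(n)}=\frac{1}{n^\gamma}\,\frac{G+\omega_n}{G-\omega_n},\qquad \mathcal{B}^{(n)}=\frac{1}{n^\gamma}\,\frac{T(\phi)+\omega_n}{T(\phi)-\omega_n},
\]
and similarly for their real parts.

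First I would check uniform boundedness in operator norm. Since $G$ is unitary, $\|(G-\omega_n)^{-1}\|_\infty \le d(\omega_n,\partial \mathbb{D})^{-1}$, and by the argument at the end of the proof of Proposition \ref{prop:CombsThomas} the same bound (up to a factor $2$) holds for $(T(\phi)-\omega_n)^{-1}$. Since $d(\omega_n,\partial \mathbb{D})=\mathrm{Re}(\eta)\,n^{-\gamma}+\mathcal{O}(n^{-2\gamma})$, the factor $1/n^\gamma$ in front of the resolvent representations cancels the blow-up, so both $\mathcal{A}^{(n)}$ and $\mathcal{B}^{(n)}$ are uniformly bounded in $n$. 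The same holds after taking real parts.

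Second I would extract the off-diagonal decay required by Proposition \ref{th:comparison-general}. Proposition \ref{prop:CombsThomas} gives
\[
\Bigl|\bigl((G-\omega_n)^{-1}(G+\omega_n)\bigr)_{ij}\Bigr|\le \frac{C}{d(\omega_n,\partial\mathbb{D})}\,e^{-\Theta_n|i-j|},
\]
with $\Theta_n=\min\bigl(\tfrac{d(\omega_n,\partial\mathbb{D})}{2(12(1-\rho)^{-2}+3)},\ln\tfrac{1+\rho}{2\rho}\bigr)$; for all large $n$ the minimum is attained by the first term, so $\Theta_n=c\,n^{-\gamma}$ for some $c>0$. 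After multiplying by $1/n^\gamma$ one therefore obtains an estimate of the form $|\mathcal{A}^{(n)}_{ij}|\le C'e^{-\tilde d|i-j|/n^\gamma}$, and exactly the same works for $\mathcal{B}^{(n)}$. By Remark \ref{rem:RealPart} the same decay holds for the real parts as well.

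With these two facts in hand, Proposition \ref{th:comparison-general} applies and yields precisely the inequality \eqref{eq:cumulantComparisson}, together with its analogue for the real parts. But Proposition \ref{prop:GGT2Toeplitz} shows that the trace-norm term on the right of \eqref{eq:cumulantComparisson} tends to $0$ in both the complex and real-part versions, leaving only the $o(1)$ contributed by Proposition \ref{th:comparison-general}. Passing to the limit $n\to \infty$ gives both claims of the corollary. There is no serious obstacle here; the only bookkeeping issue is confirming that the exponent $\Theta_n$ indeed becomes linear in $n^{-\gamma}$ for large $n$ (so that one lands in the regime required by Proposition \ref{th:comparison-general}) and that the real-part versions inherit all the estimates from Remark \ref{rem:RealPart}.
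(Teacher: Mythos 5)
Your proof is correct and follows essentially the same route as the paper: the corollary is simply the combination of the bound \eqref{eq:cumulantComparisson} (obtained by applying Proposition \ref{th:comparison-general}, whose hypotheses the paper verifies just above via Proposition \ref{prop:CombsThomas} and the resolvent-norm estimates) with Proposition \ref{prop:GGT2Toeplitz}. Your bookkeeping remarks about $\Theta_n$ eventually being linear in $n^{-\gamma}$ and Remark \ref{rem:RealPart} covering the real-part versions are the same observations made implicitly in the paper.
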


We are now half-way through our reduction procedure (the first step of the proof of Proposition \ref{prop:PoissonForConstant}). The operator $\frac{T(\phi)+\omega_n}{T(\phi)-\omega_n}$ is an intermediate object on our way to the Toeplitz operator whose cumulants we will actually compute. This Toeplitz operator will arise naturally from inverting $T(\phi)-\omega_n$ using the \emph{Wiener-Hopf} factorization.

A symbol $a:\partial\mathbb{D}\to\mathbb{C}$ is said to have a Weiner-Hopf factorization if $a=a_+\cdot a_-$ where $a_+$ is a function such that both $a_+$ and $a_+^{-1}$ are analytic in $\mathbb{D}$ and both $a_-$ and $a_-^{-1}$ are analytic in $\mathbb{C}\setminus\overline{\mathbb{D}}$. This factorization is particularly handy when one wishes to invert a Toeplitz operator, since in this case (see e.g.\ \cite[Section 1]{introtolttm})
\beq \label{eq:WHGeneral}
\left(T(a)\right)^{-1}=T\left(a_+^{-1}\right)T\left(a_-^{-1}\right).
\eeq

    We start with observing that $T(\phi)-\omega_n I=T(\phi-\omega_n)=T(h)$ where
    $h:\partial\mathbb{D}\to\mathbb{C}$ is the function
    \begin{equation} \nonumber
    \begin{split}
    h(z)&=\rho \cdot z-(\alpha^2+\omega_n)-\alpha^2\sum_{k=1}^\infty\rho^kz^{-k}=\rho
    z-(\alpha^2+\omega_n)-\frac{\alpha^2\rho}{z}\left(\frac{1}{1-\frac{\rho}{z}}\right) \\
    &=\frac{\rho z^2-(1+\omega_n)z+\omega_n \rho}{z-\rho}.
    \end{split}
    \end{equation}

To obtain the Wiener-Hopf factorization of $h$ we define
\beq \label{eq:Pomega}
P_{\omega_n}(z)=z^2-\frac{1+\omega_n}{\rho}z+\omega_n
\eeq
so that
$$h(z)=\frac{\rho P_{\omega_n}(z)}{z-\rho}.$$
We denote the roots of $P_{\omega_n}$ by $$z^{\omega_n}_{\pm}=\frac{1+\omega_n}{2\rho}\pm
\frac{\sqrt{\left({1+\omega_n}\right)^2-4\omega_n\rho^2}}{2\rho}$$ where the branch cut of the square-root is the positive real line (so that for $z=Re^{i\theta}$ with $\theta \in (0,2\pi)$, $\sqrt{z}=\sqrt{R}e^{i\theta/2}$). We shall presently show that for large enough $n$ (recall that $\omega_n=\left(1-\frac{\eta}{n^\gamma}\right)e^{i\theta_0}$)
    \begin{equation} \label{eq:zomega}
    |z^{\omega_n}_+|>1>|z^{\omega_n}_-|,
    \end{equation}
    and so a Weiner-Hopf factorization of $h$, for $n$ large enough, is given by
    \begin{equation} \label{eq:WHFactorization}
    h(z)=  \left(\rho \left(z-z^{\omega_n}_{+}\right)\right)\left(\frac{z-z^{\omega_n}_{-}}{z-\rho}\right)
    \end{equation}
    (the first factor is clearly analytic in $\mathbb{D}$ and the second one in $\mathbb{C}\setminus \overline{\mathbb{D}}$).

   To show \eqref{eq:zomega} we expand the square root to see that
   \beq \label{eq:rootsofPomega}
   z^{\omega_n}_{\pm}=\frac{1+e^{i\theta_0}\pm\sqrt{\left(1+e^{i\theta_0}\right)^2-4\rho^2 e^{i\theta_0}}}{2\rho} \mp
\frac{\eta e^{i\theta_0}}{2\rho n^\gamma}\left(\frac{\left(1+e^{i\theta_0}-2\rho^2\right)}{\sqrt{\left(1+e^{i\theta_0}\right)^2-4\rho^2
e^{i\theta_0}}})\pm1\right)+o(n^{-\gamma})
\eeq
as $n\to \infty $. We denote
$$z^\infty_{\pm}=\frac{1+e^{i\theta_0}\pm\sqrt{\left(1+e^{i\theta_0}\right)^2-4\rho^2e^{i\theta_0}}}{2\rho}$$ so \eqref{eq:rootsofPomega}
becomes \beq z^{\omega_n}_{\pm}=z^\infty_{\pm}\mp\frac{\eta e^{i\theta_0}}{n^\gamma}\frac{z^\infty_{\pm}-
\rho}{\sqrt{\left(1+e^{i\theta_0}\right)^2-4\rho^2e^{i\theta_0}}} +o(n^{-\gamma}). \eeq
Using the relation $z^\infty_{\pm}\cdot z^\infty_{\mp}=
e^{i\theta_0}$ we can write $$z^{\omega_n}_{\pm}=z^\infty_{\pm}\left(1\mp \frac{\eta }{n^\gamma}\frac{e^{i\theta_0}-\rho
z^\infty_{\mp}}{\sqrt{\left(1+e^{i\theta_0}\right)^2-4\rho^2 e^{i\theta_0}}}\right) +o(n^{-\gamma}).$$
It is easy to verify that
$\left|z^\infty_{\pm}\right|=1$ so that
$$\left|z^{\omega_n}_{\pm}\right|=\left| 1\mp \frac{\eta }{n^\gamma}\frac{e^{i\theta_0}-\rho
z^\infty_{\mp}}{\sqrt{\left(1+e^{i\theta_0}\right)^2-4\rho^2 e^{i\theta_0}}} \right| +o(n^{-\gamma}).$$

We note that (recall that the branch cut is the positive real line, thus $e^{-i\frac{\theta_0}{2}}=-\sqrt{e^{-i\theta_0}}$)
\beq
\label{eq:exp(theta/2)}\frac{e^{-i\frac{\theta_0}{2}}}{e^{-i\frac{\theta_0}{2}}}\cdot \frac{e^{i\theta_0}-\rho
z^\infty_{\mp}}{\sqrt{\left(1+e^{i\theta_0}\right)^2-4\rho^2 e^{i\theta_0}}}=\frac{\sin\left(\frac{\theta_0}{2}\right)\mp
\sqrt{\rho^2-\cos^2\left(\frac{\theta_0}{2}\right)}}{2\sqrt{\rho^2-\cos^2\left(\frac{\theta_0}{2}\right)}}\eeq
so (recall $\theta_0\in \supp \mu_\alpha$ and hence $\cos^2\left(\frac{\theta_0}{2}\right)<\rho^2$)
\beq \label{eq:modolusofzpm}\left|z^{\omega_n}_{\pm}\right|^2=\left(1\pm\frac{\text{Re}\eta}{n^\gamma}\cdot\frac{\sin\left(\frac{\theta_0}{2}\right)\mp\sqrt{\rho^2-\cos^2\left(\frac{\theta_0}{2}\right)}}{2\sqrt{\rho^2-\cos^2\left(\frac{\theta_0}{2}\right)}}\right)^2+o(n^{-\gamma}).\eeq
Since $\sin\left(\frac{\theta_0}{2}\right)>\sqrt{\rho^2-\cos^2\left(\frac{\theta_0}{2}\right)}$ (square both sides) and $\text{Re}\eta>0$, we get \eqref{eq:zomega} for all $n$ sufficiently large.

    Now, from \eqref{eq:WHFactorization} and \eqref{eq:WHGeneral} we get
    $$(T(\phi)-\omega_n)^{-1}=\left(T(h)\right)^{-1}=T\left(\frac{1}{\rho
    \left(z-z^{\omega_n}_{+}\right)}\right)T\left(\frac{z-\rho}{z-z^{\omega_n}_{-}}\right).$$
We also have
    $$(T(\phi)+{\omega_n})=T\left(\frac{1}{z-\rho}\left(\rho z^2-(1-{\omega_n})z-\omega_n \rho\right)\right)=
    T\left(\frac{\rho P_{-\omega_n}(z)}{z-\rho} \right)$$
    and so
    $$(T(\phi)-{\omega_n})^{-1}(T(\phi)+\omega_n)=T\left(\frac{1}{\rho\left(z-z^{\omega_n}_{+}\right)}\right)T\left(\frac{z-\rho}{z-z^{\omega_n}_{-}}\right)T\left(\frac{\rho P_{-\omega_n}(z)}{z-\rho} \right),$$
    which by \eqref{eq:tophankid} and the fact that $H\left(\frac{z-\rho}{z-z^{\omega_n}_{-}} \right)=H\left(\frac{1-\frac{\rho}{z}}{1-\frac{z^{\omega_n}_-}{z}} \right)=0$ is \beq \label{eq:Toeplitzmulti}
    T\left(\frac{1}{\rho\left(z-z^{\omega_n}_{+}\right)}\right)T\left(\frac{\rho P_{-{\omega_n}}(z)}{z-z^{\omega_n}_{-}}\right). \eeq
Using \eqref{eq:tophankid} again we therefore see that
\beq \label{eq:toeplitz+henkel}
(T(\phi)-\omega_n)^{-1}(T(\phi)+\omega_n)=T\left(\frac{P_{-\omega_n}(z)}{P_{\omega_n}(z)}\right)-H\left(\frac{1}{\rho\left(z-z^{\omega_n}_{+}\right)}\right)H\left(\widetilde{\frac{\rho P_{-\omega_n}(z)}{z-z^{\omega_n}_{-}}}\right).
\eeq

To complete the first step of the proof of Proposition \ref{prop:PoissonForConstant} we want to show
\begin{prop}\label{prop:nohankel}
For $m\geq2$
\beq \label{eq:CayleyToep2Toep1}
\left|C^{(n)}_m\left(\frac{1}{n^\gamma}\left(\frac{T\left(\phi\right)+\omega_n}{T\left(\phi\right)-\omega_n}\right)\right)-C^{(n)}_m\left(\frac{1}{n^\gamma}T\left(\frac{P_{-\omega_n}}{P_{\omega_n}}\right)\right)\right|\xrightarrow[n\to
\infty ]{}0,
\eeq
and
\beq \label{eq:CayleyToep2Toep2}
\left|C^{(n)}_m\left(\frac{1}{n^\gamma}\textrm{Re}\left(\frac{T\left(\phi\right)+\omega_n}{T\left(\phi\right)-\omega_n}\right)\right)-C^{(n)}_m\left(\frac{1}{n^\gamma}\textrm{Re}\left(T\left(\frac{P_{-\omega_n}}{P_{\omega_n}}\right)\right)\right)\right|\xrightarrow[n\to
\infty ]{}0.
\eeq
\end{prop}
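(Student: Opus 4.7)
The plan is to invoke Proposition~\ref{th:comparison-general} with
\[
\mathcal{A}^{(n)}=\frac{1}{n^\gamma}\frac{T(\phi)+\omega_n}{T(\phi)-\omega_n},\qquad \mathcal{B}^{(n)}=\frac{1}{n^\gamma}T\!\left(\frac{P_{-\omega_n}}{P_{\omega_n}}\right),
\]
in exact analogy with the pair $\bigl(\frac{1}{n^\gamma}\frac{G+\omega_n}{G-\omega_n},\frac{1}{n^\gamma}\frac{T(\phi)+\omega_n}{T(\phi)-\omega_n}\bigr)$ treated in the proof of Proposition~\ref{prop:GGT2Toeplitz}. By \eqref{eq:toeplitz+henkel}, the difference has the convenient factored form
\[
\mathcal{A}^{(n)}-\mathcal{B}^{(n)}=-\frac{1}{n^\gamma}H_1H_2,\quad H_1=H\!\left(\frac{1}{\rho(z-z^{\omega_n}_+)}\right),\ H_2=H\!\left(\widetilde{\frac{\rho P_{-\omega_n}(z)}{z-z^{\omega_n}_-}}\right),
\]
so once the hypotheses of Proposition~\ref{th:comparison-general} are verified, the task reduces to proving $\frac{1}{n^\gamma}\|\widetilde{P}_2H_1H_2\widetilde{P}_2\|_1\to 0$ with $\widetilde{P}_2=P_{\{n-2mn^\beta,n+2mn^\beta\}}$.

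First, I would check that $\mathcal{A}^{(n)}$ and $\mathcal{B}^{(n)}$ have uniformly bounded operator norms and satisfy the almost-banded estimate $|M_{ij}|\leq Ce^{-\tilde d|i-j|/n^\gamma}$. For $\mathcal{A}^{(n)}$ this is Proposition~\ref{prop:CombsThomas} at $z=\omega_n$, together with $d(\omega_n,\partial\mathbb{D})\sim n^{-\gamma}$. For $\mathcal{B}^{(n)}$ I would start from the partial-fraction expansion
\[
\frac{P_{-\omega_n}(z)}{P_{\omega_n}(z)}=1+\frac{2\omega_n/\rho}{z^{\omega_n}_+-z^{\omega_n}_-}\!\left(\frac{z^{\omega_n}_+-\rho}{z-z^{\omega_n}_+}-\frac{z^{\omega_n}_--\rho}{z-z^{\omega_n}_-}\right)
\]
and expand each simple pole in a geometric series on $|z|=1$ (the pole at $z^{\omega_n}_+$, which lies outside the unit disk, contributing to $\widehat f_k$ for $k\geq 0$, and the pole at $z^{\omega_n}_-$, which lies inside, to $\widehat f_k$ for $k\leq -1$). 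This gives $|\widehat f_k|\leq C\max\bigl(|z^{\omega_n}_+|^{-1},|z^{\omega_n}_-|\bigr)^{|k|}$, and by \eqref{eq:modolusofzpm} both bases equal $1-c/n^\gamma+o(n^{-\gamma})$ for some $c>0$. Producing this uniform-in-$n$ positive constant $c$ is the main technical obstacle, and it is precisely where the support condition $\cos^2(\theta_0/2)<\rho^2$, i.e.\ $\theta_0\in\supp(\mu_\alpha)^\circ$, is used.

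The core step is then the Hankel estimate. By H\"older's inequality for trace ideals,
\[
\frac{1}{n^\gamma}\|\widetilde{P}_2H_1H_2\widetilde{P}_2\|_1\leq \frac{1}{n^\gamma}\|\widetilde{P}_2H_1\|_2\,\|H_2\widetilde{P}_2\|_2,
\]
and the key observation is that Hankel entries $H(f)_{ij}=\widehat f_{i+j}$ are indexed by the \emph{sum} $i+j$, so restricting $i\in[n-2mn^\beta,n+2mn^\beta]$ forces $i+j\geq n-2mn^\beta$. Direct partial-fraction computation yields $|(H_1)_{ij}|\leq \rho^{-1}|z^{\omega_n}_+|^{-(i+j+1)}$ and $|(H_2)_{ij}|\leq C|z^{\omega_n}_-|^{i+j-1}$; summing geometrically over $j\geq 0$ and the window in $i$, together with $1-|z^{\omega_n}_+|^{-2},\ 1-|z^{\omega_n}_-|^2\sim c/n^\gamma$, one obtains
\[
\|\widetilde{P}_2H_1\|_2^2,\ \|H_2\widetilde{P}_2\|_2^2\lesssim n^{2\gamma}e^{-cn^{1-\gamma}+Cmn^{\beta-\gamma}}.
\]
Since $\beta<1$ gives $n^{\beta-\gamma}=o(n^{1-\gamma})$, the right side decays faster than any polynomial in $n$, and so does its product divided by $n^\gamma$, proving \eqref{eq:CayleyToep2Toep1}. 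Finally, the real-part statement \eqref{eq:CayleyToep2Toep2} follows by applying the same argument to $\Real(M)=(M+M^*)/2$ and invoking Remark~\ref{rem:RealPart}, which guarantees that the almost-banded decay estimates for both $\mathcal{A}^{(n)}$ and $\mathcal{B}^{(n)}$ pass to their adjoints.
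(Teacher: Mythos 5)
Your proposal is correct and follows essentially the same path as the paper: invoke Proposition~\ref{th:comparison-general}, use the decomposition \eqref{eq:toeplitz+henkel} to reduce the windowed trace norm of the difference to a product of Hankel operators, verify the almost-banded decay of $T(P_{-\omega_n}/P_{\omega_n})$ via the partial-fraction/geometric-series expansion (the content of Lemma~\ref{lemma:symbolanalysis}), and then bound the Hankel product by Cauchy--Schwarz using the fact that Hankel entries depend only on $i+j$, together with the estimates $|z_+^{\omega_n}|^{-1}, |z_-^{\omega_n}| \leq e^{-c n^{-\gamma}}$. The only cosmetic difference is that the paper keeps the window projection on $H_1$ alone, bounding $n^{-2\gamma}\|H_2\|_2^2$ by a constant and $\|\widetilde P_2 H_1\|_2^2$ by a superpolynomially small quantity, whereas you project both Hankel factors; both versions go through.
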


We would like to use Proposition \ref{th:comparison-general} again. In order to do so, we need to show that $T\left(\frac{P_{-\omega_n}}{P_{\omega_n}} \right)$ satisfies the conditions of that proposition.

\begin{lemma} \label{lemma:symbolanalysis}
Let $$\frac{P_{-\omega_n}(z)}{P_{\omega_n}(z)}=\sum_{k=-\infty}^\infty \hat{\varphi}^{(n,\omega)}_k\cdot z^k$$ be the Fourier expansion of $\frac{P_{-\omega_n}(z)}{P_{\omega_n}(z)}$. Then there exist $D,d>0$ $($independent of $n)$ such that for any $k\in \mathbb{Z}$
\beq \label{eq:FourierBound}
\left|\hat{\varphi}^{(n,\omega)}_k\right|\leq D
\cdot \exp{\left(-\frac{d}{n^\gamma}|k|\right)}.
\eeq
In particular
\beq \label{eq:0prop}
\left|T\left(\frac{P_{-\omega_n}}{P_{\omega_n}}\right)_{l,m}\right|\leq D
\cdot \exp{\left(-\frac{d}{n^\gamma}|l-m|\right)},
\eeq
and
\beq \label{eq:01prop}
\left|T\left(\text{Re}\frac{P_{-\omega_n}}{P_{\omega_n}}\right)_{l,m}\right|\leq D
\cdot \exp{\left(-\frac{d}{n^\gamma}|l-m|\right)}.
\eeq

In addition, there exist $C >0 $ such that
\beq \label{eq:1stprop}
\frac{1}{n^\gamma}\sum_{k\in \mathbb{Z}} \left|\widehat{\varphi}^{(n,\omega)}_k\right|<C,
\eeq
and $c >0$  such that for any $m\in \mathbb{N}$
\beq \label{eq:2ndprop}
\frac{1}{n^\gamma}\sum_{k\geq m} \left|\widehat{\varphi}^{(n,\omega)}_k\right|<\exp({-cmn^{-\gamma}}).
\eeq
Also,
\beq \label{eq:3rdprop}                  \lim_{n\to\infty}\frac{1}{n^{2\gamma}}\sum_{k\in\mathbb{Z}}|k|\left|\widehat{\varphi}_k^{(n,\omega)}\right|^2<\infty,
\eeq
and $\frac{1}{n^\gamma}\frac{P_{-\omega_n}(z)}{P_{\omega_n}(z)}$ is uniformly bounded in $\partial\bbD$ $($and hence, by \eqref{eq:symbolisbdd}, $\frac{1}{n^\gamma}T\left(\frac{P_{-\omega_n}}{P_{\omega_n}}\right)$ is uniformly bounded in $\partial\bbD)$.
\end{lemma}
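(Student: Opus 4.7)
The plan is to expand $P_{-\omega_n}(z)/P_{\omega_n}(z)$ in partial fractions using the roots $z^{\omega_n}_\pm$ already isolated in the Wiener--Hopf step, and then to read off the Fourier coefficients from geometric series that converge on $\partial\bbD$. Since $P_{\pm\omega_n}$ are monic of degree $2$ and a direct computation gives $P_{-\omega_n}-P_{\omega_n}=(2\omega_n/\rho)(z-\rho)$, one has
\[
\frac{P_{-\omega_n}(z)}{P_{\omega_n}(z)}=1+\frac{2\omega_n(z-\rho)}{\rho\left(z-z^{\omega_n}_+\right)\left(z-z^{\omega_n}_-\right)}=1+\frac{A_n}{z-z^{\omega_n}_+}+\frac{B_n}{z-z^{\omega_n}_-}
\]
with explicit constants $A_n,B_n$. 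Because $z^{\omega_n}_\pm\to z^\infty_\pm$, $|z^\infty_\pm|=1$, and $z^\infty_+\neq z^\infty_-$ (the latter following from $\rho^2>\cos^2(\theta_0/2)$, which holds on $\supp(\mu_\alpha)^\circ$ and makes the discriminant nonzero), the constants $A_n,B_n$ are bounded uniformly in $n$. Using \eqref{eq:zomega} I would then expand each simple-pole term as a geometric series convergent on $\partial\bbD$,
\[
\frac{A_n}{z-z^{\omega_n}_+}=-\sum_{k\ge 0}\frac{A_n}{\left(z^{\omega_n}_+\right)^{k+1}}z^k,\qquad \frac{B_n}{z-z^{\omega_n}_-}=\sum_{k\ge 0}B_n\left(z^{\omega_n}_-\right)^k z^{-k-1},
\]
which yields an explicit formula for $\widehat{\varphi}^{(n,\omega)}_k$ for every $k\in\bbZ$ (with an obvious additive correction of $1$ in the $k=0$ entry).

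The main quantitative input is \eqref{eq:modolusofzpm}. Combined with $\textrm{Re}\,\eta>0$ and the uniform positivity of $\sin(\theta_0/2)\mp\sqrt{\rho^2-\cos^2(\theta_0/2)}$ on the interior of $\supp\mu_\alpha$, it yields, for all $n$ sufficiently large,
\[
|z^{\omega_n}_+|\ge 1+\frac{c}{n^\gamma},\qquad |z^{\omega_n}_-|\le 1-\frac{c}{n^\gamma},
\]
for some fixed $c>0$ independent of $n$. Since $\log(1+x)\ge x/2$ on $(0,1)$, this forces $|z^{\omega_n}_+|^{-(k+1)}\le De^{-dk/n^\gamma}$ and $|z^{\omega_n}_-|^k\le De^{-dk/n^\gamma}$, and combined with the uniform bound on $A_n,B_n$ this gives \eqref{eq:FourierBound}. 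The matrix-entry estimate \eqref{eq:0prop} then follows immediately from $T(\psi)_{l,m}=\widehat{\psi}_{m-l}$, and \eqref{eq:01prop} from $|\widehat{\textrm{Re}(g)}_k|\le\tfrac12\bigl(|\widehat{g}_k|+|\widehat{g}_{-k}|\bigr)$.

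The remaining bounds reduce to routine geometric-series computations. For \eqref{eq:1stprop} I use $\sum_{k\in\bbZ}e^{-d|k|/n^\gamma}=\mathcal{O}(n^\gamma)$; for \eqref{eq:2ndprop}, $\sum_{k\ge m}e^{-dk/n^\gamma}\le (n^\gamma/d)e^{-dm/n^\gamma}$; for \eqref{eq:3rdprop}, $\sum_{k\in\bbZ}|k|e^{-2d|k|/n^\gamma}=\mathcal{O}(n^{2\gamma})$. In each case, the prefactor $1/n^\gamma$ or $1/n^{2\gamma}$ produces the asserted bound. For the uniform boundedness of $n^{-\gamma}P_{-\omega_n}(z)/P_{\omega_n}(z)$ on $\partial\bbD$, I would write this as $n^{-\gamma}+\frac{2\omega_n(z-\rho)}{\rho\, n^\gamma P_{\omega_n}(z)}$ and observe that on $|z|=1$ each of the factors $|z-z^{\omega_n}_\pm|$ can be as small as $c/n^\gamma$, but they cannot be simultaneously small because $z^\infty_+\neq z^\infty_-$; hence $|P_{\omega_n}(z)|\ge C/n^\gamma$ uniformly on $\partial\bbD$, which is exactly what is needed.

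The only essentially nontrivial step is securing the lower bound $|z^{\omega_n}_+|\ge 1+c/n^\gamma$ (and its twin for $|z^{\omega_n}_-|$) with a constant $c>0$ uniform in $n$. This is where the hypothesis $\theta_0\in\supp(\mu_\alpha)^\circ$ is used crucially: it makes $\sin(\theta_0/2)-\sqrt{\rho^2-\cos^2(\theta_0/2)}$ bounded away from zero, controls the $o(n^{-\gamma})$ error in \eqref{eq:modolusofzpm}, and simultaneously guarantees that $A_n,B_n$ are bounded and that $|P_{\omega_n}|$ does not vanish quickly on $\partial\bbD$. Once this is established, the rest is standard bookkeeping with geometric sums.
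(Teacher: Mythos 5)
Your proof proposal is correct and follows essentially the same route as the paper: write $P_{-\omega_n}/P_{\omega_n}=1+\frac{2\omega_n}{\rho}\frac{z-\rho}{P_{\omega_n}(z)}$ via the identity $P_{-\omega_n}-P_{\omega_n}=\frac{2\omega_n}{\rho}(z-\rho)$, decompose into partial fractions at the roots $z^{\omega_n}_\pm$, expand the simple poles into geometric series converging on $\partial\bbD$ (using $|z^{\omega_n}_+|>1>|z^{\omega_n}_-|$), and then deduce the exponential decay of the Fourier coefficients from the estimates $|z^{\omega_n}_+|\ge 1+c/n^\gamma$ and $|z^{\omega_n}_-|\le 1-c/n^\gamma$ coming from \eqref{eq:modolusofzpm}, together with the uniform boundedness of the partial-fraction coefficients, which is exactly $|P_{-\omega_n}(z^{\omega_n}_\pm)/(z^{\omega_n}_--z^{\omega_n}_+)|\le D$ as in the paper. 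The remaining estimates \eqref{eq:1stprop}--\eqref{eq:3rdprop} and the uniform bound on the symbol then follow by summing geometric series exactly as you describe, matching the paper's derivation.
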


\begin{proof}
Using (e.g.)
\beq \label{eq:Pmin-Pplus} P_{-\omega_n}(z)=P_{\omega_n}(z)+\frac{2\omega_n}{\rho}\left(z-\rho\right),\eeq
it is not hard to verify that
$$\hat{\varphi}^{(n,\omega)}_k=\begin{cases}&\frac{1}{\left(z_{-}^{\omega_n}-z_{+}^{\omega_n}\right)\left(z_{+}^{\omega_n}\right)^{k+1}}\left(P_{-{\omega_n}}\left(z_{+}^{\omega_n}\right)\right)
=\frac{\left(P_{-{\omega_n}}\left(z_{+}^{\omega_n}\right)\right)}{\left(z_{-}^{\omega_n}-z_{+}^{\omega_n}\right)}\left(z_{+}^{\omega_n}\right)^{-|k|-1}
\qquad k>1 \\
& \frac{1}{\left(z_{-}^{\omega_n}-z_{+}^{\omega_n}\right)\left(z_{-}^{\omega_n}\right)^{k+1}}\left(P_{-{\omega_n}}\left(z_{-}^{\omega_n}\right)\right) =\frac{\left(P_{-{\omega_n}}\left(z_{-}^{\omega_n}\right)\right)}{\left(z_{-}^{\omega_n}-z_{+}^{\omega_n}\right)}\left(z_{-}^{\omega_n}\right)^{|k|-1}\qquad k<-1
\end{cases}$$

By \eqref{eq:modolusofzpm} we have  $$|z_{-}^{\omega_n}|=\left|1-\frac{\text{Re}\eta}{n^{\gamma}}\frac{\sin\left(\frac{\theta_0}{2}\right)+\sqrt{\rho^2-\cos^2\left(\frac{\theta_0}{2}\right)}}{\sqrt{\rho^2-\cos^2\left(\frac{\theta_0}{2}\right)}}\right|+o(n^{-\gamma})$$
hence there exists $\widetilde{d}>0$ (independent of $n$) so that
\beq\label{eq:zminus_modulus}
|z_{-}^{\omega_n}|\leq \exp{\left(-\frac{\widetilde{d}}{n^\gamma}\right)}\eeq
Moreover,
\beq \label{zp-zm}
\lim_{n\to  \infty}\left|z_{-}^{\omega_n}-z_{+}^{\omega_n}\right|=\lim_{n\to  \infty}\left|e^{-i\theta_0/2}\left(z_{-}^{\omega_n}-z_{+}^{\omega_n}\right)\right|=\frac{2\sqrt{\rho^2-\cos^2\left(\frac{\theta_0}{2}\right)}}{\rho}>0\eeq
and by \eqref{eq:Pmin-Pplus}
$$\lim_{n\to  \infty}\left|P_{-{\omega_n}}\left(z_{\pm}^{\omega_n}\right)\right|=\frac{2}{\rho}\left|z^\infty_\pm-\rho\right|$$
therefore there exists $D>0$ such that  $$\left|\frac{P_{-{\omega_n}}\left(z_{\pm}^{\omega_n}\right)}{\left(z_{-}^{\omega_n}-z_{+}^{\omega_n}\right)}\right|\leq D$$.

We thus get
\beq
\left|\frac{\left(P_{-{\omega_n}}\left(z_{-}^{\omega_n}\right)\right)}{\left(z_{-}^{\omega_n}-z_{+}^{\omega_n}\right)}\left(z_{-}^{\omega_n}\right)^{|k|-1}\right|\leq D\cdot \exp{\left(-\frac{\widetilde{d}|k|}{n^\gamma}\right)}.
\eeq
In addition, we note that $z_{+}^{\omega_n}z_{-}^{\omega_n}=\omega_n$ (this is the constant coefficient of $P_{\omega_n})$, hence   $|z_{+}^{\omega_n}|^{-1}=|z_{-}^{\omega_n}|+\mathcal{O}(n^{-\gamma})$, which means, using \eqref{eq:zminus_modulus} that there exists $d'>0$ (independent of $n$) such that
\beq\label{eq:zp_modulus} |z_{+}^{\omega_n}|^{-1}\leq \exp{\left(-\frac{d'}{n^\gamma}\right)} \eeq
therefore
\beq
\left|\frac{\left(P_{-{\omega_n}}\left(z_{+}^{\omega_n}\right)\right)}{\left(z_{-}^{\omega_n}-z_{+}^{\omega_n}\right)}\left(z_{+}^{\omega_n}\right)^{-|k|-1}\right|\leq D\cdot \exp{\left(-\frac{\widetilde{d}|k|}{n^\gamma}\right)}.
\eeq
Which means that we can choose $d>0$ so that
$$\left|\hat{\varphi}^{(n,\omega)}_k\right|\leq D
\cdot \exp{\left(-\frac{d}{n^\gamma}|k|\right)}$$
as desired. The rest of the properties in the statement follow instantly from this estimate since $$\left|\frac{P_{-\omega_n}(z)}{P_{\omega_n}(z)}\right|\leq\sum_{k\in \mathbb{Z}} \left|\widehat{\varphi}^{(n,\omega)}_k\right|=\mathcal{O}\left(n^{\gamma}\right)$$
and
$$\sum_{k\in \mathbb{Z}}|k|\left|\widehat{\varphi}^{(n,\omega)}_k\right|=\mathcal{O}\left(n^{2\gamma}\right).$$

As for the real part, note that the $k$'th Fourier coefficient of $\text{Re}\frac{P_{-\omega_n}(\cdot)}{P_{\omega_n}(\cdot)}$ is $\frac{1}{2}\left(\hat{\varphi}^{(n,\omega)}_k+\overline{\hat{\varphi}^{(n,\omega)}_{-k}}\right)$
so \eqref{eq:01prop} follows from \eqref{eq:FourierBound} as well.
\end{proof}

We can now prove Proposition \ref{prop:nohankel}
\begin{proof}[Proof of Proposition \ref{prop:nohankel}]
We obtain for $0<\gamma<\beta<1$
\beq \begin{split}
  &
  \left|C^{(n)}_m\left(\frac{1}{n^\gamma}T\left(\frac{P_{-\omega_n}(\cdot)}{P_{\omega_n}(\cdot)}\right)\right)-C^{(n)}_m\left(\frac{1}{n^\gamma}\left(\frac{T\left(\phi\right)+\omega_n}{T\left(\phi\right)-\omega_n}\right)\right)\right|\\
  & \quad \leq \left\|P_{\left \{ n-2mn^\beta,n+2mn^\beta
  \right\}}\left(\frac{1}{n^\gamma}T\left(\frac{P_{-\omega_n}(\cdot)}{P_{\omega_n}(\cdot)}\right)-\frac{1}{n^\gamma}\left(\frac{T\left(\phi\right)+\omega_n}{T\left(\phi\right)-\omega_n}\right)\right)P_{\left
  \{ n-2mn^\beta,n+2mn^\beta \right\}}\right\|_1 +o\left(n^{-\gamma}\right)\\
  & \quad = \frac{1}{n^\gamma}\Big\|P_{\left \{ n-2mn^\beta,n+2mn^\beta \right\}}H\left(\frac{1}{\rho\left(\cdot-z^{\omega_n}_{+}\right)}\right)\cdot H\left(\widetilde{\frac{\rho P_{-\omega_n}(\cdot)}{\cdot-z^{\omega_n}_{-}}}\right)
  P_{\left\{ n-2mn^\beta,n+2mn^\beta \right\}}\Big\|_1\\
&\qquad \quad +o\left(n^{-\gamma}\right)
\end{split}
\eeq
as $n\to \infty$.

By Cauchy-Schwartz
\beq \label{eq:hankeltracenorm} \begin{split}
    & \left\|P_{\left \{ n-2mn^\beta,n+2mn^\beta
    \right\}}H\left(\frac{1}{n^\gamma}\frac{1}{\rho\left(\cdot-z^{\omega_n}_{+}\right)}\right)H\left(\widetilde{\frac{\rho P_{-\omega_n}(\cdot)}{\cdot-z^{\omega_n}_{-}}}\right)P_{\left \{ n-2mn^\beta,n+2mn^\beta \right\}}\right\|_1^2\\
    & \quad \leq \frac{1}{n^{2\gamma}}\left\|P_{\left \{ n-2mn^\beta,n+2mn^\beta
    \right\}}H\left(\frac{1}{\rho\left(\cdot-z^{\omega_n}_{+}\right)}\right)\right\|_2^2 \cdot\left\|H\left(\widetilde{\frac{\rho P_{-\omega_n}(\cdot)}{\cdot-z^{\omega_n}_{-}}}\right)\right\|_2^2.
\end{split} \eeq

It is easy to verify that $$\frac{\rho P_{-\omega_n}(z)}{z-z^{\omega_n}_{-}}=\rho z+\rho z^{\omega_n}_{-}-(1-{\omega_n})+\rho\sum_{k=1}^\infty \left(z^{\omega_n}_{-}\right)^{k-1}P_{-{\omega_n}}\left(z^{\omega_n}_{-}\right)z^{-k}$$
so that 
\beq \begin{split}
    &\left\|H\left(\widetilde{\frac{\rho P_{-{\omega_n}}(z)}{\cdot-z^{\omega_n}_{-}}}\right)\right\|_2^2\\
    &\quad=\rho^2 \left|P_{-{\omega_n}}\left(z^{\omega_n}_{-}\right)\right|^2\sum_{k=1}^\infty k\cdot  \left|z^{\omega_n}_{-}\right|^{2(k-1)}=\rho^2
    \left|P_{-{\omega_n}}\left(z^{\omega_n}_{-}\right)\right|^2\frac{1}{\left(1-\left|z^{\omega_n}_{-}\right|^2\right)^2}. \end{split}\eeq
In addition 
$$\left\|P_{\left \{ n-2mn^\beta,n+2mn^\beta \right\}}H\left(\frac{1}{\rho\left(\cdot-z^{\omega_n}_{+}\right)}\right)\right\|^2_2\leq
    \sum_{k=1}^\infty {\frac{1}{\rho^2} k\left|z^{\omega_n}_{+}\right|^{-2\left(k+1+\left[n-2mn^\beta\right]\right)}}.$$

For the sake of convenience, from now on we denote 
\begin{equation} \label{eq:A} 
A=\sqrt{\rho^2-\cos^2\left(\frac{\theta_0}{2}\right)}.
\end{equation}
We have by \eqref{eq:modolusofzpm}
$$\frac{1}{\left(1-
\left|z^{{\omega_n}}_{\pm}\right|^2\right)^2}=\frac{1}{\left(\frac{\text{Re}\eta}{n^\gamma}\cdot\frac{\sin\left(\frac{\theta_0}{2}\right)\mp
A}{A}\right)^2+o(n^{-2\gamma})}$$ 
which means 
$$\frac{1}{n^{2\gamma}}\cdot\left\|H\left(\widetilde{\frac{\rho P_{-{\omega_n}}(\cdot)}{\cdot-z^{\omega_n}_{-}}}\right)\right\|^2_2$$ is uniformly bounded (in $n$).\\
Therefore, in order to show that
    \beq \nonumber
\frac{1}{n^\gamma}\Big\|P_{\left \{ n-2mn^\beta,n+2mn^\beta \right\}}H\left(\frac{1}{\rho\left(\cdot-z^{\omega_n}_{+}\right)}\right)
    \cdot H\left(\widetilde{\frac{\rho P_{-{\omega_n}}(\cdot)}{\cdot-z^{\omega_n}_{-}}}\right)P_{\left \{ n-2mn^\beta,n+2mn^\beta \right\}}\Big\|_1
    \eeq
    vanishes as $n\to \infty $, it suffices to prove $$\sum_{k=1}^\infty {\frac{1}{\rho^2} k\left|z^{\omega_n}_{+}\right|^{-2\left(k+1+\left[n-2mn^\beta\right]\right)}}\to
0$$ as $n\to \infty.$
To see that, using \eqref{eq:zp_modulus}, we see
$$\sum_{k=1}^\infty {\frac{1}{\rho^2}
k\left|z^{\omega_n}_{+}\right|^{-2\left(k+1+\left[n-2mn^\beta\right]\right)}}\leq\sum_{k=1}^\infty \frac{1}{\rho^2}
k\exp\left(\frac{2\widetilde{d}}{n^\gamma}\cdot \left(k+1+\left[n-2mn^\beta\right]\right)\right)$$
which by standard estimates is $\mathcal{O}\left(n^{2\gamma}\exp\left(-\widetilde{d}\cdot\left( n^{1-\gamma}-2mn^{\beta-\gamma}\right)\right)\right)$ as $n \to \infty$.
This completes the proof. \end{proof}


\subsubsection{Cumulant Analysis of the Toeplitz Operator}
We have reduced our problem to the analysis of a Toeplitz operator. In order to prove Proposition \ref{prop:PoissonForConstant} for $\alpha \neq 0$ we need to show
$$\lim_{n\rightarrow \infty}C^{(n)}_m\left(\frac{1}{n^\gamma}T\left(\text{Re}\left(\frac{P_{-{\omega_n}}}{P_{{\omega_n}}}\right)\right)\right) =\left \{ \begin{array}{cc} \sigma^2 & \textrm{ if } m=2 \\ 0 & \textrm{ if } m>2 \end{array} \right..$$ 

We write 
$$\kappa^{(n,\omega)}(z)=\sum_{k=-\infty}^\infty
\frac{\hat{\varphi}_k+\overline{\hat{\varphi}_{-k}}}{2n^\gamma}\cdot z^k$$
and we get
$$\frac{1}{n^\gamma}T\left(\text{Re}\left(\frac{P_{-{\omega_n}}}{P_{{\omega_n}}}\right)\right)=T\left(\kappa^{(n,\omega)}\right).$$

We want to use
\begin{multline}\label{eq:cumultoepl}
\exp\left(\sum_{m=2}^\infty t^m C_m \left(T \left(\kappa^{(n,\omega)}\right)\right)  \right) \\
 =\det\left(I+ P_n\left({\rm e}^{ t T\left(\kappa^{(n,\omega)}\right)}-I \right)P_n\right)  {\rm e}^{-t \Tr P_n T
 \left(\kappa^{(n,\omega)}\right)}
\end{multline}
(recall \eqref{eq:momentgenFreddet1}).

Now, let $\kappa^{(n,\omega)}_-=\sum_{k=-\infty}^{-1}
\frac{\hat{\varphi}_k+\overline{\hat{\varphi}_{-k}}}{2n^\gamma}\cdot z^k$, and $\kappa^{(n,\omega)}_+=\sum_{k=0}^{\infty}
\frac{\hat{\varphi}_k+\overline{\hat{\varphi}_{-k}}}{2n^\gamma}\cdot z^k$, and note that \eqref{eq:tophankid2}, \eqref{eq:3rdprop} imply that $$\left[T\left(\kappa^{(n,\omega)}_+\right),T\left(\kappa^{(n,\omega)}_-\right)\right]=-H(\kappa^{(n,\omega)})H(\widetilde \kappa^{(n,\omega)})$$ is trace class. Thus, by invoking \cite[Lemma 4.4]{realope} (with $\phi=\kappa^{(n,\omega)}$; note that \eqref{eq:3rdprop} says the lemma applies) we see that
\beq\begin{split}\label{eq:Toeplitztickdet1}
&\det\left(I+ P_n({\rm e}^{ t T\left(\kappa^{(n,\omega)}\right)}-I)P_n\right)  {\rm e}^{-\mathrm{Tr} P_ntT\left(\kappa^{(n,\omega)}\right)}\\
&\quad ={\rm e}^{\frac{t^2}{2}\Tr H(\kappa^{(n,\omega)})H(\widetilde \kappa^{(n,\omega)})}\det(I+Q_n (R(t,\kappa^{(n,\omega)})^{-1}-I)).
\end{split}
\eeq
 where $$R\left(t,\kappa^{(n,\omega)}\right)={\rm e}^{-t T\left(\kappa^{(n,\omega)}_{+}\right) }{\rm e}^{ t
 T\left(\kappa^{(n,\omega)}\right)}{\rm e}^{-t T\left(\kappa^{(n,\omega)}_{-}\right)}.$$ and $$Q_n=I-P_n.$$
 
The first factor on the RHS of \eqref{eq:Toeplitztickdet1} is the moment generating function of a Gaussian variable. Since the Fredholm determinant is continuous with respect to the trace norm, we thus only need to show  $$\left\|Q_n (R(t,\kappa^{(n,\omega)})^{-1}-I)\right\|_1\to 0$$ as $n\to \infty$, uniformly for $t$ in some neighborhood of $0$, and
 \beq\label{eq:trHankels}
 \lim_{n\to \infty}\text{Tr}H\left(\kappa^{(n,\omega)}\right)H\left(\widetilde \kappa^{(n,\omega)}\right)=\sigma^2=\frac{2}{\left(\eta+\overline{\eta}\right)^2}.\eeq
 
The fact that $$\left\|Q_n (R(t,\kappa^{(n,\omega)})^{-1}-I)\right\|_1\to 0$$ as $n\to \infty$ uniformly in $t$ in a neighborhood of $0$, follows from the proof of \cite[Lemma 4.5]{realope}. Indeed, replacing $\phi^{(n)}$ in that lemma with $\kappa^{(n,\omega)}$ we see we only need
 $$
 \lim_{n\to\infty}\sum_{k\in
 \mathbb{Z}}|k|\left|\widehat{\kappa}_k^{(n,\omega)}\right|^2<\infty
 $$
 and
 $$
 \sum_{k\in \mathbb{Z}}
 \left|\widehat{\kappa}^{(n,\omega)}_k\right|<C
 $$
which hold in our case by \eqref{eq:1stprop}, and \eqref{eq:3rdprop}.

We are thus left with showing that 
$$
\lim_{n\to \infty}\text{Tr}\: H\left(\kappa^{(n,\omega)}\right)H\left(\widetilde \kappa^{(n,\omega)}\right)=\sigma^2.
$$
We compute:
\beq 
\begin{split}
&\Tr H\left(\kappa^{(n,\omega)}\right)H\left(\widetilde \kappa^{(n,\omega)}\right)\\
&\quad =\frac{1}{4n^{2\gamma}}\sum_{k=1}^\infty k\cdot \left|\hat{\varphi}_k+\overline{\hat{\varphi}_{-k}}\right|^2\\
&\quad = \frac{1}{4n^{2\gamma}} \sum_{k=1}^\infty k\cdot \left|\hat{\varphi}_k\right|^2+\frac{1}{4n^{2\gamma}}\sum_{k=1}^\infty k\cdot
\left|\hat{\varphi}_{-k}\right|^2+\frac{1}{2n^{2\gamma}}\sum_{k=1}^\infty k\cdot \text{Re}\left(\hat{\varphi}_k{\hat{\varphi}_{-k}}\right).
\end{split}
\eeq 
One can see that  
\beq 
\begin{split}
& \sum_{k=1}^\infty k\cdot\hat{\varphi}_k\hat{\varphi}_{-k} \\
& \quad =\sum_{k=1}^\infty
k\cdot\frac{P_{-{\omega_n}}(z_{-}^{\omega_n})P_{-{\omega_n}}(z_{-}^{\omega_n})}{\left(z_{-}^{\omega_n}-z_{+}^{\omega_n}\right)^2{\omega_n}}\cdot\frac{\left(z_{-}^{\omega_n}\right)^{2k}}{{\omega_n}^{k}}\\
& \quad = \frac{P_{-{\omega_n}}(z_{-}^{\omega_n})P_{-{\omega_n}}(z_{-}^{\omega_n})(z_{-}^{\omega_n})^2\cdot }{\left(z_{-}^{\omega_n}-z_{+}^{\omega_n}\right)^2{\omega_n}^2}
\cdot \frac{1}{\left(1-\frac{(z_{-}^{\omega_n})^2}{{\omega_n}}\right)^2}.
\end{split} 
\eeq
Taking $n\to \infty$, we have $\frac{(z_{-}^{\omega_n})^2}{{\omega_n}}\to \frac{(z_{-}^\infty)^2}{e^{i\theta_0}}$. Recall that $\theta_0\in \supp
\mu_\alpha$, which means that $\frac{(z_{-}^\infty)^2}{e^{i\theta_0}}\not=1$ (write $\frac{(z_{-}^\infty)^2}{e^{i\theta_0}}=\left(
\frac{z_{-}^\infty}{e^{i\theta_0/2}}\right)^2$ and perform a similar computation as in \eqref{eq:exp(theta/2)}). Thus
$$
\frac{1}{2n^{2\gamma}}\sum_{k=1}^\infty k\cdot \text{Re}\left(\hat{\varphi}_k{\hat{\varphi}_{-k}}\right)=o(1)
$$ 
as $n\to \infty$, and so
\beq 
\begin{split}
&\lim_{n\to \infty}\Tr H\left(\kappa^{(n,\omega)}\right)H\left(\widetilde \kappa^{(n,\omega)}\right)\\
&\quad = \lim_{n\to \infty}\left(\frac{1}{4n^{2\gamma}} \sum_{k=1}^\infty k\cdot
\left|\hat{\varphi}_k\right|^2+\frac{1}{4n^{2\gamma}}\sum_{k=1}^\infty k\cdot \left|\hat{\varphi}_{-k}\right|^2\right).
\end{split}
\eeq 

Now, a straightforward computation shows that  
\beq 
\begin{split}
& \frac{1}{4n^{2\gamma}} \sum_{k=1}^\infty k\cdot \left|\hat{\varphi}_k\right|^2+\frac{1}{4n^{2\gamma}}\sum_{k=1}^\infty k\cdot
\left|\hat{\varphi}_{-k}\right|^2\\
& \quad =\frac{1}{4n^{2\gamma}\cdot\left|z_{-}^{\omega_n}-z_{+}^{\omega_n}\right|^2}\cdot \left(\sum_{k=1}^\infty
k\cdot\frac{\left|P_{-{\omega_n}}(z_{+}^{\omega_n})\right|^2}{\left|z_{+}^{\omega_n}\right|^{2(k+1)}}+\sum_{k=1}^\infty
k\cdot\left|P_{-{\omega_n}}(z_{-}^{\omega_n})\right|^2\cdot \left|z_{-}^{\omega_n}\right|^{2(k-1)}\right)\\
& \quad=\frac{\left|P_{-{\omega_n}}(z^{\omega_n}_+)\right|^2}{4n^{2\gamma}\cdot\left|z_{-}^{\omega_n}-z_{+}^{\omega_n}\right|^2}\cdot \left(\sum_{k=1}^\infty
    k\cdot\frac{1}{\left|z_{+}^{\omega_n}\right|^{2(k+1)}}\right)+\frac{\left|P_{-{\omega_n}}(z^{\omega_n}_-)\right|^2}{4n^{2\gamma}\cdot\left|z_{-}^{\omega_n}-z_{+}^{\omega_n}\right|^2}\cdot
    \left(\sum_{k=1}^\infty k\cdot{\left|z_{-}^{\omega_n}\right|^{2(k-1)}}\right)\\
    &\quad=\frac{\left|P_{-{\omega_n}}(z^{\omega_n}_+)\right|^2}{4n^{2\gamma}\cdot\left|z_{-}^{\omega_n}-z_{+}^{\omega_n}\right|^2}\cdot\frac{1}{\left(\left|z_{+}^{\omega_n}\right|^2-1\right)^2}+\frac{\left|P_{-{\omega_n}}(z^{\omega_n}_-)\right|^2}{4n^{2\gamma}\cdot\left|z_{-}^{\omega_n}-z_{+}^{\omega_n}\right|^2}\cdot\frac{1}{\left(\left|z_{-}^{\omega_n}\right|^2-1\right)^2}.
 \end{split}
\eeq
From \eqref{eq:Pmin-Pplus} we obtain (recall \eqref{eq:A})
$$
\lim_{n\to
\infty}\left|P_{-{\omega_n}}(z^{\omega_n}_\pm)\right|^2=\frac{4}{\rho^2}\left|z^\infty_\pm-\rho\right|^2=
\frac{4}{\rho^2}\left|e^{i\frac{-\theta_0}{2}}\left(z^\infty_\pm-\rho\right)\right|^2=\frac{4}{\rho^2}\left(A\mp\sin\left(\frac{\theta_0}{2}\right)\right)^2,
$$
so by \eqref{eq:modolusofzpm} and \eqref{zp-zm}, we see that 
\beq \no
\begin{split}
    &\frac{\left|P_{-{\omega_n}}(z^{\omega_n}_+)\right|^2}{4n^{2\gamma}\cdot\left|z_{-}^{\omega_n}-z_{+}^{\omega_n}\right|^2}\cdot\frac{1}{\left(\left|z_{+}^{\omega_n}\right|^2-1\right)^2}+\frac{\left|P_{-{\omega_n}}(z^{\omega_n}_-)\right|^2}{4n^{2\gamma}\cdot\left|z_{-}^{\omega_n}-z_{+}^{\omega_n}\right|^2}\cdot\frac{1}{\left(\left|z_{-}^{\omega_n}\right|^2-1\right)^2}\\
 &\quad =\frac{\left|P_{-{\omega_n}}(z^{\omega_n}_+)\right|^2}{4\left|z_{-}^{\omega_n}-z_{+}^{\omega_n}\right|^2}\cdot
 \frac{1}{\frac{\left(\text{Re}\eta\right)^2\left(A-\sin{\left(\frac{\theta_0}{2}\right)}\right)^2}{A^2}+o(1)}+\frac{\left|P_{-{\omega_n}}(z^{\omega_n}_-)\right|^2}{4\left|z_{-}^{\omega_n}-z_{+}^{\omega_n}\right|^2}\cdot\frac{1}{\frac{\left(\text{Re}\eta\right)^2\left(A+\sin{\left(\frac{\theta_0}{2}\right)}\right)^2}{A^2}+o(1)}\\
 &\quad =\frac{\frac{4}{\rho^2}\left(A-\sin\left(\frac{\theta_0}{2}\right)\right)^2}{\frac{16A^2}{\rho^2}}\cdot
 \frac{1}{\frac{\left(\text{Re}\eta\right)^2\left(A-\sin{\left(\frac{\theta_0}{2}\right)}\right)^2}{A^2}}+\frac{\frac{4}{\rho^2}\left(A+\sin\left(\frac{\theta_0}{2}\right)\right)^2}{\frac{16A^2}{\rho^2}}\cdot\frac{1}{\frac{\left(\text{Re}\eta\right)^2\left(A+\sin{\left(\frac{\theta_0}{2}\right)}\right)^2}{A^2}}+o(1)\\
 & \qquad \xrightarrow[n\to\infty]{} \frac{1}{2\cdot \left(\text{Re}\eta\right)^2}=\frac{2}{\left(\eta+\overline{\eta}\right)^2}.
\end{split}
\eeq
Combining the above computations, we conclude 
\beq \label{eq:limitofvar}
\lim_{n \to \infty}\text{Tr}H\left(\kappa^{(n,{\omega})}\right)H\left(\widetilde \kappa^{(n,\omega)}\right)=\sigma^2
\eeq

We summarize the discussion above in the 
\begin{proof}[Proof of Proposition \ref{prop:PoissonForConstant}, the case $\alpha \neq 0$]

Recall that
$$
C^{(n)}_m\left(X^{\theta_0}_{\Psi_{n,\omega}}\right)=C^{(n)}_m\left(\frac{1}{n^\gamma}\left(\text{Re}\frac{G+{\omega_n}}{G-{\omega_n}}\right)\right)
$$
and by Proposition \ref{prop:GGT2Toeplitz} and Proposition \ref{prop:nohankel} we have 
$$
\lim_{n\to\infty}C^{(n)}_m\left(\frac{1}{n^\gamma}\left(\text{Re}\frac{G+{\omega_n}}{G-{\omega_n}}\right)\right)
=\lim_{n\to\infty}C^{(n)}_m\left(\frac{1}{n^\gamma}\left(\text{Re}\frac{T\left(\phi\right)+{\omega_n}}{T\left(\phi\right)-{\omega_n}}\right)\right)
=\lim_{n\to\infty}C_m\left(\frac{1}{n^\gamma}T\left(\text{Re}\frac{P_{-{\omega_n}}(z)}{P_{{\omega_n}}(z)}\right)\right).
$$
But the computations following \eqref{eq:Toeplitztickdet1} show that
\beq \no 
\begin{split}
\lim_{n\to\infty}\det\left(I+ P_n\left({\rm e}^{ t T\left(\kappa^{(n,\omega)}\right)}-I \right)P_n\right)  {\rm e}^{-t \Tr P_n T
\left(\kappa^{(n,\omega)}\right)}=\exp\left(\frac{t^2}{2}\cdot \frac{2}{\left(\eta+\overline{\eta}\right)^2}\right)
\end{split}
\eeq
which means, by \eqref{eq:momentgenFreddet1}, that 
\beq \no
\lim_{n\to\infty}C^{(n)}_m\left(\frac{1}{n^\gamma}T\left(\text{Re}\frac{P_{-{\omega_n}}(z)}{P_{{\omega_n}}(z)}\right)\right)=\left \{\begin{array}{cc}
\frac{2}{\left(\eta+\overline{\eta}\right)^2} &\quad m=2 \\
0 &\quad m>2 \end{array} \right ..
\eeq
Thus, we have
$$
X^{\theta_0}_{\Psi_{n,\omega}}-\bbE
X^{\theta_0}_{\Psi_{n,\omega}}\xrightarrow[n\to \infty]{\mathcal{D}}\mathcal{N}\left(0,\frac{2}{\left(\eta+\overline{\eta}\right)^2}\right),
$$
which concludes the proof.
\end{proof}


\subsection{The case $\alpha=0$.\\}
As mentioned above The case of $\alpha_n \equiv 0$ corresponds to $d\mu(\theta)=d\theta$, in which case the OPE is the eigenvalue process of the CUE. Mesoscopic fluctuations for Schwartz functions were already studied in this case by Soshnikov \cite{soshnikov}. However, Soshnikov's work does not contain a CLT for the linear statistic $X_{\Psi_{n,\gamma,\eta}}^{\theta_0}$, so for completeness we consider it here.

We want to use a strategy similar to the one we used in the $\alpha \neq 0$ case, but by the discussion in Section 2.1 the orthogonal polynomials are not dense in $L_2(d\theta)$. Thus, we use the CMV basis in order to get the matrix representation \eqref{eq:cmvcumulant} for the cumulants. 

Recall $\omega_n=\omega_n(\eta, \gamma, \theta_0)$ was defined in \eqref{eq:omega_Definition}. Letting $\mathcal{C}$ be the corresponding CMV matrix, a direct computation shows that
\begin{equation} \label{eq:CMVLinearStatistic}
\begin{split}
\frac{1}{n^\gamma}\textrm{Re}\frac{\mathcal{C}+\omega_n}{\mathcal{C}-\omega_n}&=\frac{1}{n^\gamma}\begin{pmatrix}
1&\omega_n&\overline{\omega_n}&\omega_n^2&\overline{\omega_n}^2&\omega_n^3&\overline{\omega_n}^3&\omega_n^4&\ldots\\
\overline{\omega_n}&1&\overline{\omega_n}^2&\omega_n&\overline{\omega_n}^3&\omega_n^2&\overline{\omega_n}^4&\omega_n^3&\ldots\\
\omega_n&\omega_n^2&1&\omega_n^3&\overline{\omega_n}&\omega_n^4&\overline{\omega_n}^2&\omega_n^5&\ldots\\
\overline{\omega_n}^2&\overline{\omega_n}&\overline{\omega_n}^3&1&\overline{\omega_n}^4&\omega_n&\overline{\omega_n}^5&\omega_n^2&\ldots\\
\omega_n^2&\omega_n^3&\omega_n&\omega_n^4&1&\omega_n^5&\overline{\omega_n}&\omega_n^6&\ldots\\
\overline{\omega_n}^3&\overline{\omega_n}^2&\ \overline{\omega_n}^4&\overline{\omega_n}&\overline{\omega_n}^5&1&\overline{\omega_n}^6&\omega_n&\ldots\\
\omega_n^3&\omega_n^4&\omega_n^2&\omega_n^5&\omega_n&\omega_n^6&1&\omega_n^7&\ldots\\
\overline{\omega_n}^4&\overline{\omega_n}^3&\overline{\omega_n}^5&\overline{\omega_n}^2&\overline{\omega_n}^6&\overline{\omega_n}&\overline{\omega_n}^7&1&\ldots \\
\vdots&\vdots&\vdots&\vdots&\vdots&\vdots&\vdots&\vdots&\ddots
\end{pmatrix} \\
&=\mathcal{T}^{(n)}+\mathcal{H}^{(n)},
\end{split}
\end{equation}
where
\begin{equation} \nonumber
\begin{split}
\mathcal{T}_n  = \frac{1}{n^\gamma}\begin{pmatrix}
1&0&\overline{\omega_n}&0&\overline{\omega_n}^2&0&\overline{\omega_n}^3&0&\ldots\\
0&1&0&\omega_n&0&\omega_n^2&0&\omega_n^3&\ldots\\
\omega_n&0&1&0&\overline{\omega_n}&0&\overline{\omega_n}^2&0&\ldots\\
0&\overline{\omega_n}&0&1&0&\omega_n&0&\omega_n^2&\ldots\\
\omega_n^2&0&\omega_n&0&1&0&\overline{\omega_n}&0&\ldots\\
0&\overline{\omega_n}^2&0&\overline{\omega_n}&0&1&0&\omega_n&\ldots\\
\omega_n^3&0&\omega_n^2&0&\omega_n&0&1&0&\ldots\\
0&\overline{\omega_n}^3&0&\overline{\omega_n}^2&0&\overline{\omega_n}&0&1&\ldots \\
\vdots&\vdots&\vdots&\vdots&\vdots&\vdots&\vdots&\vdots&\ddots
\end{pmatrix},\\
\end{split}
\eeq
and
\beq \nonumber
\begin{split}
&\mathcal{H}_n=\frac{1}{n^\gamma}\begin{pmatrix}
0&\omega_n&0&\omega_n^2&0&\omega_n^3&0&\omega_n^4&\ldots\\
\overline{\omega_n}&0&\overline{\omega_n}^2&0&\overline{\omega_n}^3&0&\overline{\omega_n}^4&0&\ldots\\
0&\omega_n^2&0&\omega_n^3&0&\omega_n^4&0&\omega_n^5&\ldots\\
\overline{\omega_n}^2&0&\overline{\omega_n}^3&0&\overline{\omega_n}^4&0&\overline{\omega_n}^5&0&\ldots\\
0&\omega_n^3&0&\omega_n^4&0&\omega_n^5&0&\omega_n^6&\ldots\\
\overline{\omega_n}^3&0&\ \overline{\omega_n}^4&0&\overline{\omega_n}^5&0&\overline{\omega_n}^6&0&\ldots\\
0&\omega_n^4&0&\omega_n^5&0&\omega_n^6&0&\omega_n^7&\ldots\\
\overline{\omega_n}^4&0&\overline{\omega_n}^5&0&\overline{\omega_n}^6&0&\overline{\omega_n}^7&0&\ldots \\
\vdots&\vdots&\vdots&\vdots&\vdots&\vdots&\vdots&\vdots&\ddots
\end{pmatrix} \\
\end{split}
\end{equation}
so $\mathcal{T}^{(n)}$ is a $2\times 2$ Block Toeplitz Matrix (see Section 2) and $\mathcal{H}^{(n)}$ is a $2\times 2$ Block Hankel Matrix. 

We denote
$$\Xi_+=\begin{bmatrix}
\omega_n & 0 \\
0 & \overline{\omega_n}
\end{bmatrix}, \quad 
\Xi_-=\begin{bmatrix}
\overline{\omega_n} & 0 \\
0 & \omega_n
\end{bmatrix}, \quad \textbf{0}=\begin{bmatrix}
0& 0 \\
0 & 0
\end{bmatrix},
$$
and 
$$\widehat{\phi}^{(n,\alpha,\eta,\theta_0)}_j=   \left\{
\begin{array}{ll}
      \frac{1}{n^{\alpha}}\left(\Xi_{-}\right)^{-j}=\frac{1}{n^\gamma}\begin{bmatrix}
\overline{\omega_n}^{-j} & 0 \\
0 & \omega_n^{-j} \\
\end{bmatrix} & j\leq 0 \\
\quad\\
     \frac{1}{n^{\alpha}}\left(\Xi_{+}\right)^j= \frac{1}{n^\gamma}\begin{bmatrix}
\omega_n^j & 0 \\
0 & \overline{\omega_n}^j
\end{bmatrix} & j> 0 \\
\end{array}
\right.
$$
and the matrix valued symbols
\begin{equation} \nonumber
\phi^{(n,\alpha,\eta,\theta_0)}_+(\theta)=\sum_{j=1}^\infty
\widehat{\phi}_j^{(n,\alpha,\eta,\theta_0)}e^{ij\theta}, \quad 
\phi^{(n,\alpha,\eta,\theta_0)}_-(\theta)=\sum_{j=0}^{\infty}
\widehat{\phi}_{-j}^{(n,\alpha,\eta,\theta_0)}e^{-ij\theta},
\end{equation}
and
$$
\phi^{(n,\alpha,\eta,\theta_0)}=\phi^{(n,\alpha,\eta,\theta_0)}_++\phi^{(n,\alpha,\eta,\theta_0)}_-.
$$
Thus
\begin{equation} \nonumber
\mathcal{T}^{(n)}=T\left(\phi^{(n)}\right)=T\left(\phi^{(n)}_-\right)+T\left(\phi^{(n)}_+\right),
\end{equation}
and 
$$\mathcal{H}^{(n)}=H\left(\phi^{(n)} \times \begin{pmatrix}
0&1\\
1&0
\end{pmatrix} \right),
$$
where we have omitted the $(\alpha,\eta,\theta_0)$ notation for simplicity as we shall do henceforth. 

\begin{lemma}\label{lem:3lemma}
The following properties hold:
\begin{enumerate}

\item There exists $C_\eta \in \mathbb{R}_{>0} $ such that
$\sum_{k\in \mathbb{Z}} \left\|\widehat{\phi}^{(n)}_k\right\|_1<C_\eta$. 
In particular, the sequence of functions $\left\{\phi^{(n)} \right \}$ is uniformly bounded on the unit circle.

\item There exist $ d, \widetilde{d} \in \mathbb{R}_{>0}$ such that
$\left\|\widehat{\phi}^{(n)}_k\right\|_1<\frac{\widetilde{d}}{n^\gamma}\exp({-d|k|n^{-\gamma}})$. In particular there exist constants $C,\widetilde{d}\in
\mathbb{R}$ which are independent of n,r,s such that  
$$\left|\left(\frac{1}{n^\gamma}\textrm{Re}\frac{\mathcal{C}+\omega_n}{\mathcal{C}-\omega_n} \right)_{r,s}\right|\leq
Ce^{\widetilde{d}\frac{|r-s|}{n^\gamma}}.$$
\item There exist $ D_1,D_2 \in \mathbb{R}_{>0}$  such that  for any $m\in \mathbb{N}$, 
$$\sum_{k\geq m}
    \left\|\widehat{\phi}^{(n)}_k\right\|_1<D_1\exp({-D_2mn^{-\gamma}}).$$
\item  $\lim_{n\to\infty}\sum_{k\in \mathbb{Z}}|k|\left\|\widehat{\phi}_k^{(n)}\right\|^2_1=\frac{2}{(\text{Re}(\eta))^2}.$
\end{enumerate}
\end{lemma}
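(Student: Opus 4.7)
The plan is to reduce all four claims to elementary computations with geometric series, once the correct quantitative control on $|\omega_n|$ is in place. The starting point is the observation that the explicit diagonal form of the Fourier coefficients $\widehat{\phi}^{(n)}_k$ yields
\begin{equation*}
\left\|\widehat{\phi}^{(n)}_k\right\|_1 = \frac{2|\omega_n|^{|k|}}{n^\gamma} \qquad \text{for every } k \in \mathbb{Z},
\end{equation*}
so the Fourier coefficient norms are completely controlled by a single quantity, $|\omega_n|$. The essential input is the elementary identity
\begin{equation*}
|\omega_n|^2 = \left|1 - \frac{\eta}{n^\gamma}\right|^2 = 1 - \frac{2\operatorname{Re}\eta}{n^\gamma} + \frac{|\eta|^2}{n^{2\gamma}}.
\end{equation*}
Since $\operatorname{Re}\eta > 0$, for all sufficiently large $n$ we obtain constants $c_1, c_2 > 0$ depending only on $\eta$ such that $c_1/n^\gamma \leq 1 - |\omega_n| \leq c_2/n^\gamma$, and consequently $\log|\omega_n| \leq -c_1/n^\gamma$.

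Given these two inputs, (1) follows immediately by summing the geometric series,
\begin{equation*}
\sum_{k\in\mathbb{Z}} \left\|\widehat{\phi}^{(n)}_k\right\|_1 = \frac{2}{n^\gamma}\cdot \frac{1+|\omega_n|}{1-|\omega_n|},
\end{equation*}
in which the $n^\gamma$ in the denominator of $1-|\omega_n|$ exactly cancels the $1/n^\gamma$ prefactor, leaving a bound depending only on $\eta$; the uniform boundedness of $\phi^{(n)}$ on $\partial\mathbb{D}$ then follows from absolute convergence. For (2) I would use $|\omega_n|^{|k|} \leq e^{-c_1 |k|/n^\gamma}$ directly. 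The "in particular" part requires passing from the decay in block index $k$ to decay in matrix index $|r-s|$: inspection of the explicit matrix \eqref{eq:CMVLinearStatistic} shows that each entry of $\mathcal{T}^{(n)} + \mathcal{H}^{(n)}$ is of the form $\omega_n^p/n^\gamma$ or $\overline{\omega_n}^p/n^\gamma$ with $|p| \geq (|r-s|-1)/2$, so the decay rate is simply halved by the $2\times 2$ block structure.

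For (3) I would sum a tail geometric series using the pointwise bound from (2):
\begin{equation*}
\sum_{k\geq m}\left\|\widehat{\phi}^{(n)}_k\right\|_1 \leq \frac{2}{n^\gamma}\cdot \frac{|\omega_n|^m}{1-|\omega_n|},
\end{equation*}
and the cancellation of $1/n^\gamma$ together with $|\omega_n|^m \leq e^{-c_1 m/n^\gamma}$ produces the required bound. Finally, for (4) the computation is
\begin{equation*}
\sum_{k\in\mathbb{Z}} |k|\left\|\widehat{\phi}^{(n)}_k\right\|_1^2 = \frac{8}{n^{2\gamma}}\sum_{k=1}^\infty k |\omega_n|^{2k} = \frac{8|\omega_n|^2}{n^{2\gamma}\left(1-|\omega_n|^2\right)^2},
\end{equation*}
and substituting $(1-|\omega_n|^2)^2 = \frac{4(\operatorname{Re}\eta)^2}{n^{2\gamma}} + O(n^{-3\gamma})$ together with $|\omega_n|^2 \to 1$ yields the limit $2/(\operatorname{Re}\eta)^2$.

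There is no real obstacle here; the entire argument is bookkeeping on top of the explicit diagonal form of $\widehat{\phi}^{(n)}_k$ and the asymptotics of $|\omega_n|$. The only point that deserves a moment's care is the translation between block-index decay and matrix-entry decay in (2), but this just costs a factor of two in the exponential rate.
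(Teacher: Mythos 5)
Your proposal is correct and follows essentially the same route as the paper: reduce all four items to geometric-series bookkeeping once $\left\|\widehat{\phi}^{(n)}_k\right\|_1 = 2|\omega_n|^{|k|}/n^\gamma$ and the asymptotics $1-|\omega_n| \asymp n^{-\gamma}$ are in hand. Your elaboration of the ``in particular'' step in (2) (translating block-index decay into entrywise decay, losing a factor of two in the rate) is slightly more explicit than the paper's terse ``follows immediately,'' but the argument is the same.
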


\begin{proof}
\begin{enumerate}
    \item We write (for $n$ sufficiently large so that $|\omega_n|<1$)
\begin{equation*}
    \begin{split}
&\quad \sum_{k\in \mathbb{Z}} \left\|\widehat{\phi}^{(n)}_k\right\|_1 \\
&\qquad \leq \frac{4}{n^\gamma}\sum_{k\geq 0}
|\omega_n|^k=\frac{4}{n^\gamma}\left(\frac{1}{1-|\omega_n|}\right)=\frac{4}{n^\gamma}\left(\frac{1+|\omega_n|}{1-|\omega_n|^2}\right) \\
& \qquad =\frac{4}{n^\gamma}\left(\frac{2+\mathcal{O}(n^{-\gamma})}{
    2\frac{\text{Re}(\eta)}{n^\gamma}+o(n^{-\gamma})}\right)=\frac{4 +\mathcal{O}(n^{-\gamma})}{\text{Re}(\eta)+o(1)}<C_\eta
\end{split}
\end{equation*}
which is independent of $n$ (but $\eta$ dependent). 
\item
Since
$\left\|\widehat{\phi}^{(n)}_k\right\|_1=\frac{1}{n^\gamma}2|\omega_n|^{|k|},$
the claim follows from the fact that  $$|\omega_n|<1-\frac{d_1}{n^\gamma}<\exp(-d_2n^{-\gamma})$$ for some $n$ independent constants $d_1,d_2>0$ (recall $\text{Re}\eta>0$).
The second part follows immediately from \eqref{eq:CMVLinearStatistic}.
\item This follows immediately from (2) of this lemma.
\item
We compute: $$\left\|\widehat{\phi}_{k}^{(n)}\right\|^2_1=\frac{4}{n^{2\alpha}}|\omega_n|^{2|k|}.$$
For $n$ sufficiently large so that $|\omega_n|<1$, we get
\begin{equation*}
 \begin{split}
&\quad \sum_{k\in \mathbb{Z}}|k|\left\|\widehat{\phi}_k^{(n)}\right\|^2_1 \\
&\qquad =\sum_{k\in \mathbb{Z}}|k|\frac{4}{n^{2\alpha}}\left(|\omega_n|^2\right)^{|k|}=\frac{8|\omega_n|^2}{n^{2\alpha}}\sum_{k\geq
1}k\left(|\omega_n|^2\right)^{k-1}=
    \frac{8|\omega_n|^2}{n^{2\alpha}}\left(\frac{1}{(1-|\omega_n|^2)^2}\right) \\
& \qquad
=\frac{8\left[\left(1-\frac{\text{Re}(\eta)}{n^\gamma}\right)^2+\left(\frac{\text{Im}(\eta)}{n^\gamma}\right)^2\right]}{n^{2\alpha}}\left(\frac{1}{\left(1-\left(1-\frac{\text{Re}(\eta)}{n^\gamma}\right)^2-\left(\frac{\text{Im}(\eta)}{n^\gamma}\right)^2\right)^2}\right)\\
& \qquad
=\frac{8\left[\left(1-\frac{\text{Re}(\eta)}{n^\gamma}\right)^2+\left(\frac{\text{Im}(\eta)}{n^\gamma}\right)^2\right]}{n^{2\alpha}}\left(\frac{1}{\left(2\frac{\text{Re}(\eta)}{n^\gamma}-\left(\frac{\text{Re}(\eta)}{n^\gamma}\right)^2-\left(\frac{\text{Im}(\eta)}{n^\gamma}\right)^2\right)^2}\right)\\
& \qquad =
8\left[\left(1-\frac{\text{Re}(\eta)}{n^\gamma}\right)^2+\left(\frac{\text{Im}(\eta)}{n^\gamma}\right)^2\right]\frac{1}{\left(2\text{Re}(\eta)-\frac{\text{Re}(\eta)^2}{n^\gamma}-\frac{\text{Im}(\eta)^2}{n^\gamma}\right)^2}.
\end{split}
\end{equation*}
and the result follows from taking $n\rightarrow \infty$.
 \end{enumerate}
\end{proof}

Lemma \ref{lem:3lemma} now says that we may use Proposition \ref{th:comparison-general} with $\beta=\gamma+\epsilon$ (where $0<\epsilon<1-\gamma$) to deduce for all $m\geq 2$
\begin{equation} \label{eq:reducingtotop}
\begin{split}
& \left|C_m^{(n)} \left(X^{\theta_0}_{\Psi_{n,\alpha,\eta}} \right) - C_m^{(n)} \left( T\left(\phi^{(n)}\right) \right) \right| \\
&\quad \leq C(m,\beta) \left \| P_{\left \{n-2mn^\beta,n+2mn^\beta \right\}} \left( H\left(\phi^{(n)} \times \begin{pmatrix}
0&1\\
1&0
\end{pmatrix} \right)\right) P_{\left \{n-2mn^\beta,n+2mn^\beta\right \}} \right \|_1+o(1)\\
& \quad \leq \widetilde{C}(m)n^{2\beta}e^{d_2n^\epsilon}e^{-d_2n^{1-\gamma}}+o(1)=o(1).
\end{split}
\end{equation}

Having reduced the analysis to that of a block Toeplitz operator, the proof follows essentially the same lines of \cite[Section 4]{realope}. We sketch the procedure beginning with a fundamental
\begin{lemma}
The commutator $\left[T\left(\phi^{(n)}_{+}\right) ,T\left(\phi^{(n)}_{-}\right) \right]$ is a trace class operator with a uniform (in $n$) bound on its trace.
\end{lemma}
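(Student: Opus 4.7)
My approach is to derive an explicit algebraic formula for the commutator as a difference of products of block Hankel operators, and then estimate its trace norm via Hilbert--Schmidt norms.

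The crucial algebraic observation is that $\Xi_+$ and $\Xi_-$ are both diagonal $2\times 2$ matrices and hence commute. Since each value $\phi_\pm^{(n)}(e^{i\theta})$ is a (convergent) series in powers of $\Xi_\pm$, this gives the pointwise commutativity $\phi_+^{(n)}(e^{i\theta})\phi_-^{(n)}(e^{i\theta}) = \phi_-^{(n)}(e^{i\theta})\phi_+^{(n)}(e^{i\theta})$ for all $\theta$, so $T(\phi_+^{(n)}\phi_-^{(n)}) = T(\phi_-^{(n)}\phi_+^{(n)})$. Applying the block Toeplitz--Hankel identity \eqref{eq:tophankid} to both sides and subtracting yields
\beq \no
[T(\phi_+^{(n)}), T(\phi_-^{(n)})] = H(\phi_-^{(n)})H(\widetilde{\phi_+^{(n)}}) - H(\phi_+^{(n)})H(\widetilde{\phi_-^{(n)}}),
\eeq
the block analog of \eqref{eq:tophankid2}.

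Next I would bound each of the two products in trace norm using $\|AB\|_1 \leq \|A\|_2\|B\|_2$. For a block Hankel with matrix-valued symbol $\psi$ and matrix Fourier coefficients $\{\widehat{\psi}_k\}$, the Hilbert--Schmidt norm satisfies an identity of the form $\|H(\psi)\|_2^2 = \sum_{k\geq 0}(k+1)\|\widehat{\psi}_k\|_{\mathrm{HS}}^2$ (computed entrywise from the block structure displayed before Lemma \ref{lem:3lemma}). Since $\phi_+^{(n)}$ has matrix Fourier coefficients supported on $k\geq 1$ (equal to $\widehat{\phi}_k^{(n)}$), while the reflection $\widetilde{\phi_-^{(n)}}(z)=\phi_-^{(n)}(1/z)$ has matrix Fourier coefficients $\widehat{\phi}_{-k}^{(n)}$ on $k\geq 0$, and analogously for the other pair, each of the four Hilbert--Schmidt norms is controlled by a universal constant multiple of $\sum_{k\in\mathbb{Z}}|k|\,\|\widehat{\phi}_k^{(n)}\|_\mathbf{1}^2$ (using that on $2\times 2$ matrices the Hilbert--Schmidt norm and the entrywise $\ell^1$ norm are equivalent, together with $\|\widehat{\phi}_0^{(n)}\|_\mathbf{1}=O(n^{-\gamma})$ to absorb the $k=0$ term). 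By Lemma \ref{lem:3lemma}(4), this sum converges to $\tfrac{2}{(\mathrm{Re}\,\eta)^2}$ and is therefore uniformly bounded in $n$, which delivers a uniform trace norm bound on the commutator.

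The main potential obstacle here is bookkeeping rather than analysis: verifying the block form of \eqref{eq:tophankid} with the correct reflection convention $\widetilde{\psi}(z)=\psi(1/z)$ for matrix symbols (where, in general, $a(\theta)b(\theta)\neq b(\theta)a(\theta)$), and keeping the block indexing consistent with the $2\times 2$ Toeplitz/Hankel decomposition \eqref{eq:CMVLinearStatistic}. Once the identity is in place, the trace norm estimate reduces directly to Lemma \ref{lem:3lemma}(4), and no further input is needed.
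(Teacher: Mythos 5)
Your proof is correct and follows essentially the same route as the paper: the diagonality of the block Fourier coefficients gives pointwise commutativity of $\phi_+^{(n)}$ and $\phi_-^{(n)}$, the block Toeplitz--Hankel identity \eqref{eq:tophankid} then expresses the commutator as (a difference of) products of block Hankel operators, and the trace norm is bounded by Hilbert--Schmidt norms controlled by Lemma \ref{lem:3lemma}(4). The paper simply collapses the two Hankel products into the single term $-H(\phi^{(n)})H(\widetilde{\phi}^{(n)})$ using that $H(\widetilde{\phi}_+^{(n)})$ and $H(\phi_-^{(n)})$ contribute nothing, but this is a cosmetic simplification of the same argument.
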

\begin{proof}
We note that both $\phi^{(n)}_{+}$ and $\phi^{(n)}_{-}$ are matrix valued functions whose matrix valued Fourier coefficients $\widehat{\phi}_{k}^{(n)}$, are diagonal matrices. Thus $$\phi^{(n)}_{+}\cdot\phi^{(n)}_{-}=\phi^{(n)}_{-}\cdot\phi^{(n)}_{+}$$ so
$$T\left(\phi^{(n)}_{+}\cdot\phi^{(n)}_{-}\right)=T\left(\phi^{(n)}_{-}\cdot\phi^{(n)}_{+}\right).$$
This, together with \eqref{eq:tophankid} leads us to
$$\left[T\left(\phi^{(n)}_{+}\right) ,T\left(\phi^{(n)}_{-}\right) \right]=-H\left(\phi^{(n)}\right)H\left(\widetilde{\phi}^{(n)}\right).$$
which implies
\begin{equation*}
\begin{split}
 \left\|\left[T\left(\phi^{(n)}_{+}\right) ,T\left(\phi^{(n)}_{-}\right) \right]\right\|_1 & \leq  \left\|-H\left(\phi^{(n)}\right)\right\|_2\left\|H\left(\widetilde{\phi}^{(n)}\right)\right\|_2\\
 &\quad=\left(\sum_{k=1}^\infty k\left\|\widehat{\phi}^{(n)}_{-k}\right\|_2^2\right)^{1/2}\left(\sum_{k=1}^\infty
 k\left\|\widehat{\phi}^{(n)}_{k}\right\|_2^2\right)^{1/2}
\end{split}
\end{equation*}
which, by (4) of Lemma \ref{lem:3lemma}, is uniformly bounded. 
\end{proof}

As in the argument below \eqref{eq:cumultoepl}, we consider the moment generating function and note that by \eqref{eq:momentgenFreddet1}, Lemma \ref{lem:3lemma} , and \cite[Lemma~4.4]{realope} (note that although we are dealing with block Toeplitz matrices, the blocks here are diagonal so all computations follow as in \cite[Section 4]{realope})
\begin{multline}\label{eq:boroku}
\exp\left(\sum_{m=2}^\infty t^m C^{(n)}_m \left(T \left(\phi^{(n)}\right)\right)  \right) \\
 =\det\left(I+ P_n\left({\rm e}^{ t T\left(\phi^{(n)}\right)}-I \right)P_n\right)  {\rm e}^{-t \Tr P_n T \left(\phi^{(n)}\right)}\\
={\rm e}^{\frac{t^2}{2}\Tr
 H(\phi^{(n)})H(\widetilde \phi^{(n)})}\det(I+Q_n (R(t,\phi^{(n)})^{-1}-I))
\end{multline}
 where $$R(t,\phi^{(n)})={\rm e}^{-t T(\phi^{(n)}_{+} ) }{\rm e}^{ t T(\phi^{(n)})}{\rm e}^{-t T(\phi^{(n)}_{-} ) }$$
 and $$Q_n=I-P_n.$$

Note that by \cite[Lemma~4.3]{realope}
\begin{multline}\label{eq:expexpand}
    {\rm e}^{-t T(\phi^{(n)}_{+} ) }{\rm e}^{ t T(\phi^{(n)})}{\rm e}^{-t T(\phi^{(n)}_{-} ) }-I \\
    \qquad= \sum_{m_1,m_2,m_3=0}^\infty \sum_{j=0}^{m_2-1} \frac{(-1)^{m_1+m_3} T(\phi^{(n)}_{+}) ^{m_1} (T(\phi^{(n)} )^j [T(\phi^{(n)}_{+})
    ,T(\phi^{(n)}_{-}) ] (T(\phi^{(n)} )^{m_2-j-1} T(\phi^{(n)}_{-}) ^{m_3}}{m_1! m_2! m_3! (m_1+m_2+m_3+1)},
\end{multline}
and since $\left[T\left(\phi^{(n)}_{+}\right) ,T\left(\phi^{(n)}_{-}\right) \right]$ is a trace class operator, so is
${\rm e}^{-t T\left(\phi^{(n)}_{+} \right) }{\rm e}^{ t T\left(\phi^{(n)}\right)}{\rm e}^{-t T\left(\phi^{(n)}_{-} \right) }-I $.

As above, by the continuity of the Fredholm determinant with respect to the trace norm, convergence to the moment generating function to that of a normal random variable now follows from
\begin{prop} \label{prop:qnto0}
$$\|Q_n (R(t,\phi^{(n)})^{-1}-I)\|_1\to 0 \text{ as } n\to \infty$$
\end{prop}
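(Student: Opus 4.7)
The plan is to adapt the argument of \cite[Lemma 4.5]{realope} to the $2\times 2$ block-Toeplitz setting at hand. The key algebraic fact that makes this adaptation immediate is that all matrix-valued Fourier coefficients $\widehat{\phi}^{(n)}_k$ are diagonal, so that $\phi^{(n)}_+ \cdot \phi^{(n)}_- = \phi^{(n)}_- \cdot \phi^{(n)}_+$ as matrix-valued functions on $\partial\mathbb{D}$. By the block version of \eqref{eq:tophankid}, this commutation forces $[T(\phi^{(n)}_+), T(\phi^{(n)}_-)]$ to equal a product of block Hankel operators built from $\phi^{(n)}$ and $\widetilde{\phi}^{(n)}$, so the scalar-case computation from \cite{realope} carries through essentially verbatim.

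Starting from \eqref{eq:expexpand} and its analogue for $R(t,\phi^{(n)})^{-1}-I$, one writes $Q_n(R(t,\phi^{(n)})^{-1}-I)$ as a series whose general summand is of the form
\[
\frac{t^{m_1+m_2+m_3}}{m_1!\,m_2!\,m_3!\,(m_1+m_2+m_3+1)}\, Q_n\, T(\phi^{(n)}_{\pm})^{m_1} T(\phi^{(n)})^j\, H(\phi^{(n)})H(\widetilde{\phi}^{(n)})\, T(\phi^{(n)})^{m_2-j-1} T(\phi^{(n)}_{\mp})^{m_3},
\]
up to signs. Applying $\|AB\|_1\le\|A\|_2\|B\|_2$ and splitting at the central $H(\phi^{(n)})H(\widetilde{\phi}^{(n)})$, the right-hand half is dominated by $\|H(\widetilde{\phi}^{(n)})\|_2 \cdot C_\eta^{\,m_2+m_3-j-1}$, which is uniform in $n$ by parts (1) and (4) of Lemma \ref{lem:3lemma}. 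The task therefore reduces to showing that $\|Q_n T(\phi^{(n)}_{\pm})^{m_1} T(\phi^{(n)})^j H(\phi^{(n)})\|_2\to 0$ as $n\to\infty$, at a rate that survives summation against the factorial weights.

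For this key estimate, the strict block-triangularity of $T(\phi^{(n)}_\pm)$ allows one to commute $Q_n$ past $T(\phi^{(n)}_\pm)^{m_1}$; splitting $H(\phi^{(n)}) = P_{\lfloor n/2\rfloor} H(\phi^{(n)}) + Q_{\lfloor n/2\rfloor} H(\phi^{(n)})$, the second piece is controlled by the rapid Fourier decay in part (2) of Lemma \ref{lem:3lemma}, which gives
\[
\|Q_{\lfloor n/2\rfloor} H(\phi^{(n)})\|_2^2 \;=\; \sum_{k\ge \lfloor n/2\rfloor}(k-\lfloor n/2\rfloor+1)\|\widehat{\phi}^{(n)}_k\|_2^2 \;=\; O\!\bigl(\exp(-c\,n^{1-\gamma})\bigr),
\]
while the first piece is handled via the exponential off-diagonal decay of the matrix elements of $T(\phi^{(n)})^j$, inherited inductively from Lemma \ref{lem:3lemma} part (2) together with the uniform bound $\|T(\phi^{(n)})\|_\infty \le C_\eta$. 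The main obstacle is making the resulting estimate summable uniformly in $n$ over $(m_1,m_2,m_3,j)$; this is achieved by the factorial denominators in \eqref{eq:expexpand} together with the uniform operator-norm bounds from part (1) of Lemma \ref{lem:3lemma}, exactly as in \cite[Proof of Lemma 4.5]{realope}. One thereby concludes $\|Q_n(R(t,\phi^{(n)})^{-1}-I)\|_1\to 0$ uniformly for $t$ in a neighborhood of $0$.
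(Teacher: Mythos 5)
Your strategy matches the paper's in essence: expand $R^{-1}-I$ via the block analogue of \cite[Lemma~4.3]{realope}, dominate the series by $C^{m_1+m_2+m_3}/(m_1!\,m_2!\,m_3!)$ uniformly in $n$ using Lemma~\ref{lem:3lemma}(1),(4), apply dominated convergence, isolate the block-Hankel product with Cauchy--Schwarz, and exploit the exponential Fourier decay of Lemma~\ref{lem:3lemma}(2). You also correctly identify the key algebraic input, namely that the diagonal structure of the $\widehat{\phi}^{(n)}_k$ makes $[T(\phi^{(n)}_{+}),T(\phi^{(n)}_{-})]=-H(\phi^{(n)})H(\widetilde{\phi}^{(n)})$.

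Where your route genuinely differs is in handling the outer Toeplitz powers, and there you need to be careful on two counts. First, $T(\phi^{(n)}_{+})$ is strictly \emph{lower} block-triangular, so $Q_n T(\phi^{(n)}_{+})P_n\neq 0$ and $Q_n$ does \emph{not} pass through it; the valid identity is $Q_n T(\phi^{(n)}_{-})^{m_1}=Q_n T(\phi^{(n)}_{-})^{m_1}Q_n$, because $T(\phi^{(n)}_{-})$ is upper block-triangular, and that is indeed the factor sitting on the left in the expansion of $R^{-1}-I$. So your shortcut works, but you must state it for $T(\phi^{(n)}_{-})$ rather than a generic $T(\phi^{(n)}_{\pm})$. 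Second, after that commutation you still face $Q_n T(\phi^{(n)})^j P_{\lfloor n/2\rfloor}$, and the off-diagonal decay of $T(\phi^{(n)})^j$ is not ``inherited'' for free: each multiplication convolves the decay profile, producing a $j$-dependent prefactor, so the estimate is not uniform in $j$. The paper sidesteps this by peeling one factor at a time with the nested projections $Q_n\mapsto Q_{n-n^\beta}\mapsto Q_{n-2n^\beta}\mapsto\cdots$, losing only $\mathcal{O}(e^{-Dn^{\beta-\gamma}})$ in operator norm per step and invoking only the decay of a \emph{single} factor of $T(\phi^{(n)})$ --- and it treats all $m_1+j$ outer factors this way, without any triangularity assumption. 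Your first-piece bound does tend to zero for each fixed $(m_1,m_2,m_3,j)$, which is all that dominated convergence requires, but the $j$-dependence of the constants should be made explicit rather than asserted inductively. (A small typo: the exponent of $t$ in your displayed summand should be $m_1+m_2+m_3+1$.)
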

\begin{proof}
The proof uses the same strategy of the proof of \cite[Lemma 4.5]{realope}. We sketch the main points. 
By using the expansion in \eqref{eq:expexpand} we get
\begin{equation*}
\begin{split}
&\left \|Q_n\left({\rm e}^{-t T\left(\phi_+^{(n)}\right ) }{\rm e}^{t T\left(\phi^{(n)}\right))}{\rm e}^{-t
T\left(\phi^{(n)}_{-}\right)}-I\right)\right \|_1\\
&\quad=\sum_{m_1,m_2,m_3=0}^\infty \sum_{j=0}^{m_2-1} \frac{(-1)^{m_1+m_3} t^{m_1+m_2+m_3+1}  }{m_1! m_2! m_3! (m_1+m_2+m_3+1)}\\
&\qquad\times \left \| Q_n T\left(\phi_+^{(n)}\right)^{m_1}T\left(\phi^{(n)}\right)^j
\left[T\left(\phi_+^{(n)}\right),T\left(\phi_-^{(n)}\right)\right] T\left(\phi^{(n)}\right)^{m_2-j-1}
T\left(\phi_-^{(n)}\right)^{m_3}\right\|_1.
\end{split}
\end{equation*}
By Lemma \ref{lem:3lemma}
\begin{equation}\label{eq:dominatedconv}
\begin{split}
   &\left \| Q_n T\left(\phi_+^{(n)}\right)^{m_1}T\left(\phi^{(n)}\right)^j \left[T\left(\phi_+^{(n)}\right),T\left(\phi_-^{(n)}\right)\right]
   T\left(\phi^{(n)}\right)^{m_2-j-1} T\left(\phi_-^{(n)}\right)^{m_3}\right\|_1 \\
    &\quad \leq \left \| Q_n \right \|_\infty  \left\|T\left(\phi_+^{(n)}\right)\right \|_\infty^{m_1} \left \|T\left(\phi^{(n)}\right)\right
    \|_\infty^j   \left\|\left[T\left(\phi_+^{(n)}\right),T\left(\phi_-^{(n)}\right)\right]\right\|_1 \left \|T\left(\phi^{(n)}\right)\right
    \|_\infty^{m_2-j-1} \left \|T\left(\phi_-^{(n)}\right)\right \|_\infty^{m_3} \\
& \quad \leq C^{m_1+m_2+m_3}
\end{split}
\end{equation}
for some constant $C>0$. Hence, by the dominated convergence theorem, in order to prove 
$$\lim_{n\to \infty}\|Q_n (R(t,\phi^{(n)})^{-1}-I)\|_1=0,$$ 
it is sufficient to show that for all $\{m_1,m_2,m_3\}\subseteq \mathbb{N}\cup \{0\}$ and \mbox{$j\in \{1,\ldots,m_2-1\}$}
    $$\lim_{n\to \infty}\left \| Q_n T\left(\phi_+^{(n)}\right)^{m_1}T\left(\phi^{(n)}\right)^j
    \left[T\left(\phi_+^{(n)}\right),T\left(\phi_-^{(n)}\right)\right] T\left(\phi^{(n)}\right)^{m_2-j-1}
    T\left(\phi_-^{(n)}\right)^{m_3}\right\|_1 =0.$$

    Let $\gamma<\beta<1$, for $n$ such that   $n^\beta(m_1+j)<n$ we split $Q_n T\left(\phi_+^{(n)}\right)$ into $$Q_n
    T\left(\phi_+^{(n)}\right)=Q_n T\left(\phi_+^{(n)}\right)P_{n-{n^\beta}}+Q_n T\left(\phi_+^{(n)}\right)Q_{n-n^\beta}$$
    where we abuse the notation and write for $M\in \mathbb{R}$, $P_M=P_{\lceil M \rceil}$.\\
    We now may use Lemma \ref{lem:3lemma} to deduce
    $$\left\|Q_n T\left(\phi_+^{(n)}\right)P_{n-n^\beta}\right\|_\infty\leq \sum_{\frac{n^\beta}{2} \leq k}
    \left\|\widehat{\phi}^{(n)}_k\right\|_1\leq D_1 \exp\left(-\frac{D_2}{2}n^{\beta-\gamma}\right) $$
    and thus
    \begin{equation*}
    \begin{split}
       & \left \| Q_n T\left(\phi_+^{(n)}\right)^{m_1}T\left(\phi^{(n)}\right)^j
       \left[T\left(\phi_+^{(n)}\right),T\left(\phi_-^{(n)}\right)\right] T\left(\phi^{(n)}\right)^{m_2-j-1}
       T\left(\phi_-^{(n)}\right)^{m_3}\right\|_1\\
        &\quad\leq C\left \|Q_{n-n^\beta}T\left(\phi_+^{(n)}\right)^{m_1-1}T\left(\phi^{(n)}\right)^j
         \left[T\left(\phi_+^{(n)}\right),T\left(\phi_-^{(n)}\right)\right] T\left(\phi^{(n)}\right)^{m_2-j-1}
         T\left(\phi_-^{(n)}\right)^{m_3}\right\|_1\\
         &\qquad+\mathcal{O}\left(\exp\left(-\frac{D}{2}n^{\beta-\gamma}\right)\right).
     \end{split}
    \end{equation*}
    We iterate this procedure and use standard inequalities as above to obtain
    \begin{equation}
    \begin{split}
    & \left \| Q_n T\left(\phi_+^{(n)}\right)^{m_1}T\left(\phi^{(n)}\right)^j
    \left[T\left(\phi_+^{(n)}\right),T\left(\phi_-^{(n)}\right)\right] T\left(\phi^{(n)}\right)^{m_2-j-1}
    T\left(\phi_-^{(n)}\right)^{m_3}\right\|_1\\
&\leq \quad C^{m_1+m_2+m_3-1}\left \|Q_{n-(m_1+j)n^\beta} \left[T\left(\phi_+^{(n)}\right),T\left(\phi_-^{(n)}\right)\right]\right\|_1\\
    &\quad+\mathcal{O}\left(\exp\left(-\frac{D}{2}n^{\beta-\gamma}\right)\right).\\
    \end{split}
\end{equation}

It is therefore sufficient to prove that
$$\lim _{n\to \infty}\left \|Q_{n-(m_1+j)n^\beta} \left[T\left(\phi_+^{(n)}\right),T\left(\phi_-^{(n)}\right)\right]\right\|_1=0.$$
For this, write
\begin{equation*}
\begin{split}
        &\left\|Q_{n-(m_1+j)n^\beta}\left[T\left(\phi_+^{(n)}\right),T\left(\phi_-^{(n)}\right)\right]\right\|_1 \leq\left\|Q_{n-(m_1+j)n^\beta}H\left(\phi^{(n)}\right)\right\|_2\left\|H\left(\widetilde{\phi}^{(n)}\right)\right\|_2,
\end{split}
\end{equation*}
and note that 
$$\left\|H\left(\widetilde{\phi}^{(n)}\right)\right\|^2_2=\sum_{k\in \mathbb{N}}|k|\left\|\widehat{\phi}_k^{(n)}\right\|^2_2$$
is bounded by Lemma \ref{lem:3lemma}.
Finally, by the estimates in the proof of that lemma,
$$\left\|Q_{n-(m_1+j)n^\beta}H\left(\phi^{(n)}\right)\right\|^2_2 \leq \sum_{h=1}^\infty h \left
\|\widehat{\phi}^{(n)}_{\frac{n-1-(m_1+j)n^\beta}{2}+h}\right\|^2_2\rightarrow 0$$ 
 as $n\to \infty$. This completes the proof of the proposition.
\end{proof}

\begin{proof}[Proof of Proposition \ref{prop:PoissonForConstant}, the case $\alpha= 0$]
From \eqref{eq:boroku} and Proposition \ref{prop:qnto0} we have uniformly for $t$ in a neighborhood of 0
\begin{equation} \nonumber
 \lim_{n\to \infty}\det\left(I+ P_n({\rm e}^{ t T(\phi^{(n)})}-I)P_n\right)  {\rm e}^{-\mathrm{Tr}
    P_ntT(\phi^{(n)})}=\lim_{n \to \infty}{\rm e}^{\frac{t^2}{2}\Tr
 H(\phi^{(n)})H(\widetilde \phi^{(n)})}.
\end{equation}

Recall
$$H\left(\phi^{(n)}\right)=\begin{pmatrix}
\widehat{\phi}^{(n)}_1 & \widehat{\phi}^{(n)}_2 & \widehat{\phi}^{(n)}_3&\ldots\\
\widehat{\phi}^{(n)}_2 & \widehat{\phi}^{(n)}_3 & \widehat{\phi}^{(n)}_4&\ldots\\
\widehat{\phi}^{(n)}_3 & \widehat{\phi}^{(n)}_4 & \widehat{\phi}^{(n)}_5&\ldots\\
\vdots&\vdots&\vdots&\ddots
\end{pmatrix} \quad H\left(\widetilde{\phi}^{(n)}\right)=\begin{pmatrix}
\widehat{\phi}^{(n)}_{-1} & \widehat{\phi}^{(n)}_{-2} & \widehat{\phi}^{(n)}_{-3}&\ldots\\
\widehat{\phi}^{(n)}_{-2} & \widehat{\phi}^{(n)}_{-3} & \widehat{\phi}^{(n)}_{-4}&\ldots\\
\widehat{\phi}^{(n)}_{-3} & \widehat{\phi}^{(n)}_{-4} & \widehat{\phi}^{(n)}_{-5}&\ldots\\
\vdots&\vdots&\vdots&\ddots
\end{pmatrix}$$
so by a straightforward computation
\begin{equation*}
    \begin{split}
&\text{Tr}\left(H\left(\phi^{(n)}\right)H\left(\widetilde{\phi}^{(n)}\right)\right)\\
&\quad =\sum_{j=1}^\infty\text{Tr}\left(\sum_{k=0}^\infty\widehat{\phi}^{(n)}_{k+j}\widehat{\phi}^{(n)}_{-k-j}\right)\\
&\quad=\frac{2|\omega_n|^2}{n^{2\alpha}\left(1-|\omega_n|^2\right)^2}=\frac{2\left(\left(1-\frac{\text{Re}(\eta)}{n^\gamma}\right)^2+\left(\frac{\text{Im}(\eta)}{n^\gamma}\right)^2\right)}{\left(2\text{Re}(\eta)-\frac{(\text{Re}(\eta))^2}{n^\gamma}-\frac{(\text{Im}(\eta))^2}{n^\gamma}\right)^2}\xrightarrow[n\to
\infty]{} \frac{2}{\left(\eta+\overline{\eta}\right)^2}.
  \end{split}
\end{equation*}
Combining the above with  \eqref{eq:reducingtotop} implies that
$$\lim_{n\to \infty}C^{(n)}_m\left(X^{\theta_0}_{\Psi_{n,\alpha,\eta}}\right)=\lim_{n\to \infty}C^{(n)}_m\left(T\left(\phi^{(n)}\right)\right)=\left\{
\begin{array}{ll}
      \frac{2}{(\eta+\overline{\eta})^2} & m= 2 \\
\quad\\
      0 & m> 2 \\
\end{array}
\right.$$
which yields
$$X^{\theta_0}_{\Psi_{n,\alpha,\eta}}-\mathbb{E}\left[X^{\theta_0}_{\Psi_{n,\alpha,\eta}}\right]\xrightarrow[n\to \infty]{\mathcal{D}}
\mathcal{N}\left(0,\sigma^2\right)$$
where $$\sigma^2=\frac{2}{(\eta+\overline{\eta})^2}.$$
This completes the proof.

\end{proof}


\section{Proof of Theorems \ref{thm:UniversalityPoissonIntro} and \ref{thm:CLTPoissonIntro}}

In this section we shall prove Theorem \ref{thm:UniversalityPoissonIntro} and its corollary Theorem \ref{thm:CLTPoissonIntro}. These two theorems deal with the linear statistic $\Psi_{n, \gamma, \eta}^{(\theta_0)}$, but we shall extend our discussion here to consider also linear combinations. Namely, for
$\zeta_1,\ldots,\zeta_l\in\mathbb{R}$, $0<\gamma<1$ and $\eta_1,\ldots,\eta_l\in\mathbb{C}$ with $\text{Re}\eta_j>0$ we would like to consider
\begin{equation} \label{eq:LinearCombinations}
\Psi_{n,\Omega}\left(e^{i\theta}\right)=\frac{1}{n^\gamma}\sum_{j=1}^l\zeta_j\text{Re}\left(\frac{e^{i\theta}+\left(1-\frac{\eta_j}{n^\gamma}\right)e^{i\theta_0}}
{e^{i\theta}-\left(1-\frac{\eta_j}{n^\gamma}\right)e^{i\theta_0}}\right)=\sum_{j=1}^l\zeta_j\Psi_{n,\omega_j}
\end{equation}
where $\omega_j=\left(1-\frac{\eta}{n^\gamma}\right)e^{i\theta_0}$ and $\Omega=\left(\omega_1,\ldots,\omega_l\right)$.

First, it is not hard to see that the analysis of Section 3 extends to linear combinations as well so that in the case of $\alpha_n \equiv \alpha \in (-1,1)$,
\beq
X^{\theta_0}_{\sum_{k=1}^l\zeta_k\Psi_{n,\Omega_k}}-\bbE X^{\theta_0}_{\sum_{k=1}^l\zeta_k\Psi_{n,\Omega_k}}\xrightarrow[n\to
\infty]{\mathcal{D}}\mathcal{N}\left(0,\sigma^2_{\eta_1,\ldots,\eta_l}\right).
\eeq
where $$\sigma^2_{\eta_1,\ldots,\eta_l}=\lim_{n\to \infty}\text{Tr}H\left(\sum_{j=1}^n\zeta_j\kappa^{(n,\omega_j)}\right)H\left(
\sum_{j=1}^n\zeta_j\widetilde\kappa^{(n,\omega_j)}\right)=\sum_{1\leq k,j\leq
l}\zeta_j\zeta_k\text{Re}\frac{2}{\left(\eta_k+\overline{\eta_j}\right)^2}.$$

We want to show that under the assumption of Theorem \ref{thm:UniversalityPoissonIntro}, we have for any $m\in\bbN$
\beq\label{eq:conv-in-moments}\left|\bbE\left(X^{\theta_0}_{\Psi_{n,\Omega}}-\bbE
X^{\theta_0}_{\Psi_{n,\Omega}}\right)^m-\bbE_0\left(X^{\theta_0}_{\Psi_{n,\Omega}}-\bbE_0X^{\theta_0}_{\Psi_{n,\Omega}}\right)^m\right|\xrightarrow[n\to\infty]{}0\eeq
where recall that $\mathbb{E}$ (respectively $\mathbb{E}_0$) signify taking expectations with respect to OPE's with the measure $\mu$ (resp.\ $\mu_0$). For this, it suffices to show that 
\begin{equation} \nonumber
\left|C_m^{(n)}\left(X^{\theta_0}_{\Psi_{n,\Omega}}\right)-C^{(n)}_{m,0}\left(X^{\theta_0}_{\Psi_{n,\Omega}}\right)\right|\xrightarrow[n\to\infty]{}0.
\end{equation}

We will work in this section exclusively with the CMV representation of the cumulants. In particular,
we study the operator 
$$\Psi_{n,\omega}\left(\mathcal{C}\right)=\frac{1}{n^\gamma}\text{Re}\left(\mathcal{C}+\omega\right)\left(\mathcal{C}-\omega\right)^{-1}$$
so on the operator theory side, we would like to prove
$$\left|C_m^{(n)}\left(\Psi_{n,\omega}\left(\mathcal{C}\right)\right)-C^{(n)}_{m,0}\left(\Psi_{n,\omega}\left(\mathcal{C}_0\right)\right)\right|\xrightarrow[n\to\infty]{}0.$$
where $\mathcal{C},\mathcal{C}_0$ are the CMV operators associated with $\mu$ and $\mu_0$ respectively.

Our first step in this analysis is the observation that the ``almost banded" structure that we have obtained in Section 3 for
$\frac{1}{n^\gamma}\text{Re}\left(G+\omega\right)\left(G-\omega\right)^{-1}$ (i.e\ Proposition \ref{prop:CombsThomas}) is a known fact for CMV
matrices, that is, by \cite[Theorem 10.14.1]{simonopuc}, we have 
$$\left(\mathcal{C}-z\right)^{-1}_{nm}\leq D\exp\left({-\Delta|n-m|}\right)$$
where $\Delta=\min\left\{\frac{1}{3},\frac{d(z,\partial\mathbb{D})}{6e}\right\}$ and $D>0$ is independent of $n,m,z$.

The five diagonal shape of $\mathcal{C}+z$ implies that
\beq 
\label{eq:comb-tom}\left(\mathcal{C}+z\right)\left(\mathcal{C}-z\right)^{-1}_{nm} \leq D'\exp\left({-\Delta|n-m|}\right)
\eeq 
and in particular $$\left|\Psi_{n,\Omega}\left(\mathcal{C}\right)\right|_{kj}\leq D''\exp\left(-\frac{d|k-j|}{n^\gamma}\right)$$
for $D'',d>0$ independent of $k,j$, where $\Psi_{n,\Omega}\left(\mathcal{C}\right)$ is the obvious linear combination. 

We need the following unit circle analog of \cite[Proposition 3.3]{realope}.
\begin{prop} \label{prop:CMV-perturbation}
Let $\mathcal{C}_0$ and $\mathcal{C}$ be two bounded CMV matrices and fix $\eta\in\bbC$ with $\text{Re}\eta>0$ and $\theta_0 \in (-\pi,\pi)$.
For $1>\gamma>0$ let
\begin{equation*}
    \omega_n=\left(1-\frac{\eta}{n^\gamma}\right)e^{i\theta_0}
\end{equation*}
and let $F_0(\omega_n)=\left(\mathcal{C}_0+\omega_n\right)\left(\mathcal{C}_0-\omega_n\right)^{-1}$ and
$F(\omega_n)=\left(\mathcal{C}+\omega_n\right)\left(\mathcal{C}-\omega_n\right)^{-1}$.
Fix $m$ and assume that there exists $1>\beta>\gamma$ such that
\begin{equation}\label{eq:measure-condition}
\begin{split}
\left(\sum_{j=[n-4mn^\beta]}^{[n+4mn^\beta]}\text{Re} \left( F_0(\omega_n) \right)_{j,j} \right)=\mathcal{O}(n^\beta),
\end{split}
\end{equation}
and
\begin{equation}
    \label{eq:jacobi-perturb-cond2}
\left \| \left(\mathcal{C}-\mathcal{C}_0 \right) P_{\left \{ n-6mn^\beta,n+6mn^\beta \right \} } \right \|_\infty = o \left (n^{-\beta}
\right).
\end{equation}
Then
\begin{equation}
    \label{eq:preconclusion1}
\frac{1}{n^\gamma} \left \|  P_{\left \{ n-2mn^\beta,n+2mn^\beta\right\}} \left( F(\omega_n)-F_0(\omega_n)\right) P_{\left
\{n-2mn^\beta,n+2mn^\beta\right \}} \right \|_1 \rightarrow 0
\end{equation}
as $n \rightarrow \infty$.
\end{prop}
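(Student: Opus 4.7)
The plan is to adapt the argument of \cite[Proposition 3.3]{realope} to the CMV setting. I start from the second resolvent identity
\begin{equation*}
F(\omega_n)-F_0(\omega_n)=2\omega_n\,(\mathcal{C}-\omega_n)^{-1}(\mathcal{C}_0-\mathcal{C})(\mathcal{C}_0-\omega_n)^{-1},
\end{equation*}
and localize by inserting $I=\widetilde{P}_4+(I-\widetilde{P}_4)$ on both sides of $(\mathcal{C}_0-\mathcal{C})$, with $\widetilde{P}_k:=P_{\{n-kmn^\beta,\,n+kmn^\beta\}}$. In each cross term a resolvent bridges the support of $\widetilde{P}_2$ with the complement of $\widetilde{P}_4$, which are separated by at least $2mn^\beta$. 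Since the Combes--Thomas bound \eqref{eq:comb-tom} has decay rate of order $n^{-\gamma}$, combining it with \cite[Lemma~3.2]{realope} to pass from pointwise to trace-norm control gives each cross term a trace norm of $\mathcal{O}(n^c\exp(-c'n^{\beta-\gamma}))$, which is $o(1)$ since $\beta>\gamma$.

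This reduces the task to the main term
\begin{equation*}
\mathcal{M}_n:=2\omega_n\,\widetilde{P}_2(\mathcal{C}-\omega_n)^{-1}\widetilde{P}_4(\mathcal{C}_0-\mathcal{C})\widetilde{P}_4(\mathcal{C}_0-\omega_n)^{-1}\widetilde{P}_2.
\end{equation*}
Via the bound $\|ABC\|_1\leq\|A\|_2\|B\|_\infty\|C\|_2$, the middle operator norm is $o(n^{-\beta})$ by hypothesis \eqref{eq:jacobi-perturb-cond2}. For the Hilbert--Schmidt factor involving $\mathcal{C}_0$, unitarity of $\mathcal{C}_0$ supplies the operator identity
\begin{equation*}
\text{Re}\,F_0(\omega_n)=(1-|\omega_n|^2)(\mathcal{C}_0-\omega_n)^{-1}(\mathcal{C}_0^*-\bar\omega_n)^{-1},
\end{equation*}
and therefore
\begin{equation*}
\|\widetilde{P}_4(\mathcal{C}_0-\omega_n)^{-1}\widetilde{P}_2\|_2^2\leq\frac{1}{1-|\omega_n|^2}\sum_{j\in\mathrm{supp}\,\widetilde{P}_4}\text{Re}\,(F_0(\omega_n))_{jj}=\mathcal{O}(n^{\beta+\gamma}),
\end{equation*}
using $1-|\omega_n|^2\sim 2\,\text{Re}(\eta)/n^\gamma$ together with hypothesis \eqref{eq:measure-condition}.

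The main technical obstacle is an analogous Hilbert--Schmidt bound for $\|\widetilde{P}_2(\mathcal{C}-\omega_n)^{-1}\widetilde{P}_4\|_2$, since \eqref{eq:measure-condition} is only assumed for $\mathcal{C}_0$. I plan to bootstrap via the resolvent identity
\begin{equation*}
\widetilde{P}_2(\mathcal{C}-\omega_n)^{-1}\widetilde{P}_4=\widetilde{P}_2(\mathcal{C}_0-\omega_n)^{-1}\widetilde{P}_4-\widetilde{P}_2(\mathcal{C}-\omega_n)^{-1}(\mathcal{C}-\mathcal{C}_0)(\mathcal{C}_0-\omega_n)^{-1}\widetilde{P}_4,
\end{equation*}
followed by another Combes--Thomas localization of the perturbation, bounding the Hilbert--Schmidt norm of the correction term by a constant times $\|(\mathcal{C}-\omega_n)^{-1}\|_\infty\cdot\|(\mathcal{C}-\mathcal{C}_0)\widetilde{P}_6\|_\infty\cdot\|(\mathcal{C}_0-\omega_n)^{-1}\widetilde{P}_4\|_2=\mathcal{O}(n^\gamma)\cdot o(n^{-\beta})\cdot\mathcal{O}(n^{(\beta+\gamma)/2})$, which is a sub-leading correction to the $\mathcal{O}(n^{(\beta+\gamma)/2})$ leading term. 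Assembling all factors,
\begin{equation*}
\frac{1}{n^\gamma}\|\mathcal{M}_n\|_1\leq\frac{1}{n^\gamma}\cdot\mathcal{O}(n^{(\beta+\gamma)/2})\cdot o(n^{-\beta})\cdot\mathcal{O}(n^{(\beta+\gamma)/2})=o(1),
\end{equation*}
which together with the cross-term estimates finishes the proof.
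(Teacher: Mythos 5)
Your proposal is correct and mirrors the paper's proof in all essentials: the same resolvent identity $F-F_0 = 2\omega_n\mathcal{G}(\mathcal{C}_0-\mathcal{C})\mathcal{G}_0$, the same localization via the Combes--Thomas bound and \cite[Lemma 3.2]{realope}, the same Cauchy--Schwarz split into Hilbert--Schmidt and operator-norm factors, and the same identity $\mathrm{Re}\,F_0(\omega_n)=(1-|\omega_n|^2)\mathcal{G}_0(\omega_n)\mathcal{G}_0(\omega_n)^*$ converting hypothesis \eqref{eq:measure-condition} into a Hilbert--Schmidt estimate of order $n^{(\beta+\gamma)/2}$. The single organizational difference is in handling the factor involving $\mathcal{G}(\omega_n)$: you bound $\|\widetilde{P}_2\mathcal{G}\widetilde{P}_4\|_2$ directly by bootstrapping to $\mathcal{G}_0$ with a sub-leading correction, whereas the paper writes $\mathcal{G}=\mathcal{G}_0\bigl(I+(\mathcal{C}_0-\mathcal{C})\mathcal{G}\bigr)$, keeps both Hilbert--Schmidt factors on $\mathcal{G}_0$, and shows $\bigl(I+(\mathcal{C}_0-\mathcal{C})\mathcal{G}\bigr)\widetilde{P}_n^2$ is bounded in operator norm---an equivalent accounting that closes the identical exponent budget.
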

\begin{proof}

Using the following identity
\begin{equation} \label{eq:resolvent-formula}
F(\omega_n) -F_0(\omega_n)=2\omega_n\mathcal{G}(\omega_n)\left(\mathcal{C}_0-\mathcal{C} \right)\mathcal{G}_0(\omega_n)
\end{equation} where \begin{equation*}
    \mathcal{G}(\omega_n)=\left(\mathcal{C}-\omega_n\right)^{-1}, \quad \mathcal{G}_0(\omega_n)=\left(\mathcal{C}_0-\omega_n\right)^{-1}
\end{equation*} we obtain \begin{multline} \label{eq:resolvent-consequence}
\widetilde{P}_n^1  \left(F(\omega_n) -
F_0(\omega_n)\right)\widetilde{P}_n^1  =2\omega_n\widetilde{P}_n^1   \mathcal{G}(\omega_n) \left(\mathcal{C}_0-\mathcal{C} \right) \mathcal{G}_0(\omega_n)
\widetilde{P}_n^1 \\
=2\omega_n\widetilde{P}_n^1\mathcal{G}(\omega_n) \widetilde{P}_n^2 \left(\mathcal{C}_0-\mathcal{C} \right) \widetilde{P}_n^2 \mathcal{G}_0(\omega_n)
\widetilde{P}_n^1 +\mathcal{R}_n.
\end{multline} where again, as in Proposition \ref{prop:GGT2Toeplitz}, we denote 
\begin{equation} \nonumber
\widetilde{P}_n^1=P_{\left\{n-2mn^\beta,n+{2mn}^\beta\right\}}, \quad
\widetilde{P}_n^2=P_{\left\{n-4mn^\beta,n+{4mn}^\beta\right\}}.
\end{equation}

By \cite[Lemma 3.2]{realope}
\beq \label{eq:remainder-estimate}
\left \| \mathcal{R}_n \right \|_1 \leq D_1 n^{D_2} e^{-D_3 n^{\beta-\gamma}}.
\eeq 
for some constants $D_1,D_2,D_3>0$ independent of $n$, so we are left with showing 
\beq \label{eq:onenormresolvent1}
\begin{split}
& \frac{1}{n^\gamma}\left \| \widetilde{P}_n^1 \mathcal{G}(\omega_n) \widetilde{P}_n^2 \left(\mathcal{C}_0-\mathcal{C} \right) \widetilde{P}_n^2
\mathcal{G}_0(\omega_n)\widetilde{P}_n^1 \right \|_1 \rightarrow 0
\end{split}
\eeq
as $n \rightarrow \infty$.

By the resolvent identity we have
\begin{multline}\label{eq:2ndresolventAA}
 \widetilde{P}_n^1 \mathcal{G}(\omega_n) \widetilde{P}_n^2 \left(\mathcal{C}_0-\mathcal{C} \right) \widetilde{P}_n^2\ \mathcal{G}_0(\omega_n)\widetilde{P}_n^1\\=
  \widetilde{P}_n^1 \mathcal{G}_0(\omega_n)(I+\left(\mathcal{C}_0-\mathcal{C} \right) \mathcal{G}(\omega_n)) \widetilde{P}_n^2 \left(\mathcal{C}_0-\mathcal{C}
  \right) \widetilde{P}_n^2 \mathcal{G}_0(\omega_n)\widetilde{P}_n^1
\end{multline}
Therefore, by Cauchy-Schwarz,
\begin{multline}\label{eq:2ndresolventA}
 \left\|\widetilde{P}_n^1 \mathcal{G}(\omega_n) \widetilde{P}_n^2 \left(\mathcal{C}_0-\mathcal{C} \right) \widetilde{P}_n^2\
 \mathcal{G}_0(\omega_n)\widetilde{P}_n^1\right\|_1\\\qquad \leq
 \left\| \widetilde{P}_n^1 \mathcal{G}_0(\omega_n)\right \|_2 \left\|(I+\left(\mathcal{C}_0-\mathcal{C} \right)\mathcal{G}(\omega_n))
 \widetilde{P}_n^2\right\|_\infty \left\| \left(\mathcal{C}_0-\mathcal{C} \right) \widetilde{P}_n^2\right\|_\infty\left\|  \mathcal{G}_0(\omega_n)
 \widetilde{P}_n^1\right\|_2.
\end{multline}

We recall $\widetilde{P}_n^3=P_{\left\{n-6mn^\beta,n+{6mn}^\beta\right\}}$ and write
\beq \no \left(\mathcal{C}_0-\mathcal{C} \right)G(\omega_n) \widetilde{P}_n^2=\left(\mathcal{C}_0-\mathcal{C}
\right)\widetilde{P}_n^3G(\omega_n) \widetilde{P}_n^2+ \widetilde{ \mathcal R_n},
\eeq
 where, again by \cite[Lemma 3.2]{realope}, $$\left \|\widetilde{\mathcal R}_n \right \|_\infty \leq D_1 n^{D_2} e^{-D_3 n^{\beta-\gamma}}.$$  
This implies 
\begin{equation}  \no \left\|\left(\mathcal{C}_0-\mathcal{C} \right)G(\omega_n) \widetilde{P}_n^2\right\|_\infty \leq
\left\|\left(\mathcal{C}_0-\mathcal{C} \right)\widetilde{P}_n^3\right\|_\infty \left\|G(\omega_n)
\widetilde{P}_n^2\right\|_\infty+o(1)=o(1),
\end{equation}
as $n \to \infty$.

Now, since $\mathcal{C},\mathcal{C}_0$ are unitary 
\begin{equation*}
    \left\|\mathcal{G}(\omega_n)\right\|_\infty,\left\|\mathcal{G}_0(\omega_n)\right\|_\infty \leq \frac{1}{d(\omega_n,\partial \mathbb{D})}=\mathcal{O}(n^\gamma)
\end{equation*}
so by condition \eqref{eq:jacobi-perturb-cond2}, we find
\beq \no
\left\|(I+\left(\mathcal{C}_0-\mathcal{C} \right)\mathcal{G}(\omega_n)) \widetilde{P}_n^2\right\|_\infty \leq D
\eeq
for some constant $D>0$. Thus
\begin{multline}\label{eq:2ndresolventAB}
 \left\|\widetilde{P}_n^1 \mathcal{G}(\omega_n) \widetilde{P}_n^2 \left(\mathcal{C}_0-\mathcal{C} \right) \widetilde{P}_n^2\
 \mathcal{G}_0(\omega_n)\widetilde{P}_n^1\right\|_1\\=
D \left\| \widetilde{P}_n^1 \mathcal{G}_0(\omega_n)\right\|_2 \left\| \left(\mathcal{C}_0-\mathcal{C} \right) \widetilde{P}_n^2\right\|_\infty\left\|
\mathcal{G}_0(\omega_n) \widetilde{P}_n^1\right\|_2.
\end{multline}
Moreover, we have that
\beq
\left \|  \widetilde{P}_n^1 \mathcal{G}_0(\omega_n)  \right \|_2^2= \left \|   \mathcal{G}_0(\omega_n)  \widetilde{P}_n^1 \right \|_2^2 =
\sum_{j=[n-2mn^\beta]}^{[n+2mn^\beta]}\sum_{k=1}^\infty  \left| \left( \mathcal{G}_0(\omega_n)  \right)_{k,j} \right|^2.
\eeq

Now note that
\begin{equation} \nonumber
\begin{split}
2\text{Re}\left(F_0(\omega_n)\right)_{jj}&=\left(\left(\mathcal{C}_0+\omega_n \right)\left( \mathcal{C}_0-\omega_n \right)^{-1}+\left(\mathcal{C}_0^{-1}+\overline{\omega_n} \right)\left( \mathcal{C}_0^{-1}-\overline{\omega_n} \right)^{-1}\right)_{jj}\\
&=2\left(\textrm{Id}-\left| \omega_n\right|^2 \right) \left( \left(\mathcal{C}_0^{-1}-\overline{\omega_n} \right)^{-1} \left( \mathcal{C}_0-\omega_n \right)^{-1} \right)_{jj}
\end{split}
\end{equation}
so 
\begin{equation*}
\sum_{k=1}^\infty \left|\mathcal{G}_0(\omega_n)_{jk}\right|^2 =\left(1-\left|\omega_n\right|^2\right)^{-1}\text{Re}\left(F_0(\omega_n)\right)_{jj}.
\end{equation*}
Therefore
$$\left \|  \widetilde{P}_n^1 \mathcal{G}_0(\omega_n)  \right
\|_2^2=\frac{n^\gamma}{2\Re\eta+\frac{|\eta|^2}{n^\gamma}}\sum_{j=[n-2mn^\beta]}^{[n+2mn^\beta]}\Re \left( F_0(\omega_n)  \right)_{j,j}.$$
Putting all the above together we get \beq \label{eq:esssss}
\begin{split}
& \left \| \widetilde{P}_n^1 \mathcal{G}(\omega_n)\widetilde{P}_n^2\left(\mathcal{C}_0-\mathcal{C}
\right)\widetilde{P}_n^2 \mathcal{G}_0(\omega_n)\widetilde{P}_n^1 \right \|_1 \\
&\quad \leq \frac{Dn^\gamma}{2\Re\eta+\frac{|\eta|^2}{n^\gamma}} \left\|    \left(\mathcal{C}_0-\mathcal{C} \right) \widetilde{P}_n^2\right
\|_\infty  \left ( \sum_{j=[n-2mn^\beta]}^{[n+2mn^\beta]}\Re \left( F_0(\omega_n) \right)_{j,j} \right)
\end{split}
\eeq
which, when combined with \eqref{eq:measure-condition} and \eqref{eq:jacobi-perturb-cond2}, implies \eqref{eq:onenormresolvent1}. 
\end{proof}

We shall prove the following generalization of Theorem \ref{thm:UniversalityPoissonIntro} to $X^{\theta_0}_{\Psi_{n,\Omega}}$.

\begin{theorem} \label{thm:UniversalityPoissonGeneral}
Let $\mu$ and $\mu_0$ be two probability measures and denote by $\{\alpha_n\}_{n=1}^\infty$ and $\{\alpha_n^0\}_{n=1}^\infty$ the respective
associated recurrence coefficients. Let $\theta_0 \in (-\pi,\pi)$ be such that there exists a neighborhood $\theta_0 \in I$ on which the
following two conditions are satisfied: \\
(i) $\mu_0$ restricted to $I$ is absolutely continuous with respect to Lebesgue measure and its Radon-Nikodym derivative is bounded there.  \\
(ii) The orthonormal polynomials for $\mu_0$ are uniformly bounded on $I$.\\
Assume further that
\beq \label{eq:decay-rate-condition}
\alpha_n-\alpha_n^0=\mathcal O(n^{-\beta})
\eeq
as $n \to \infty$ for some $1>\beta>0$.

Then for any $0<\gamma<\beta$, $\zeta_1,\ldots,\zeta_l\in\mathbb{R}$, and $\eta_1,\ldots,\eta_l\in\mathbb{C}$ with $\text{Re}\eta_j>0$, 
$$\left|C_m^{(n)}\left(X^{\theta_0}_{\Psi_{n,\Omega}}\right)-C^{(n)}_{m,0}\left(X^{\theta_0}_{\Psi_{n,\Omega}}\right)\right|\xrightarrow[n\to\infty]{}0$$
for any $m\in\bbN$, where $\Psi_{n, \Omega}$ is defined in \eqref{eq:LinearCombinations}.
\end{theorem}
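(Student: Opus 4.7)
The proof reduces, via the CMV cumulant representation \eqref{eq:cmvcumulant}, to establishing
$$
\left|C_m^{(n)}\bigl(\Psi_{n,\Omega}(\mathcal{C})\bigr) - C_m^{(n)}\bigl(\Psi_{n,\Omega}(\mathcal{C}_0)\bigr)\right| \xrightarrow[n\to\infty]{} 0,
$$
where $\Psi_{n,\Omega}(\mathcal{C})=\sum_{j=1}^l \zeta_j \tfrac{1}{n^\gamma}\Re\bigl((\mathcal{C}+\omega_j)(\mathcal{C}-\omega_j)^{-1}\bigr)$ and similarly for $\mathcal{C}_0$. The plan is: (a) invoke the general comparison result Proposition \ref{th:comparison-general} to localize the cumulant difference to a window of size $n^{\beta'}$ around the diagonal; (b) reduce the window trace norm to individual terms $F(\omega_j)-F_0(\omega_j)$ by linearity; (c) apply Proposition \ref{prop:CMV-perturbation} to each individual term.

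The input to Proposition \ref{th:comparison-general} is the almost banded structure of both operators, which is immediate from \eqref{eq:comb-tom}: both $\Psi_{n,\omega_j}(\mathcal{C})$ and $\Psi_{n,\omega_j}(\mathcal{C}_0)$ obey off-diagonal decay of the form $D\exp(-\widetilde d\,|r-s|/n^\gamma)$ uniformly in $n,r,s$. Choose any $\beta'$ with $\gamma<\beta'<\beta$ and apply the proposition with parameter $\beta'$. Since $\alpha_n-\alpha_n^0=\mathcal O(n^{-\beta})=o(n^{-\beta'})$ and the CMV entries are polynomials in $(\alpha_k,\rho_k)$ of bounded degree in each entry, one gets
$$
\bigl\|(\mathcal{C}-\mathcal{C}_0)P_{\{n-6mn^{\beta'},n+6mn^{\beta'}\}}\bigr\|_\infty=\mathcal O(n^{-\beta})=o(n^{-\beta'}),
$$
which is the hypothesis \eqref{eq:jacobi-perturb-cond2} of Proposition \ref{prop:CMV-perturbation}.

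The remaining task is to verify hypothesis \eqref{eq:measure-condition}, namely
$$
\sum_{j=[n-4mn^{\beta'}]}^{[n+4mn^{\beta'}]} \Re\bigl(F_0(\omega_n)\bigr)_{j,j}=\mathcal O(n^{\beta'}),
$$
which will follow from the pointwise bound $\Re\bigl(F_0(\omega_n)\bigr)_{j,j}=\mathcal O(1)$, uniformly in $j$ and $n$. By the spectral theorem, the $j$-th diagonal entry is the Poisson integral at $\omega_n$ of the $\mathcal{C}_0$-spectral measure of $\chi_j^{(0)}$, namely
$$
\Re\bigl(F_0(\omega_n)\bigr)_{j,j}=\int_{\partial\mathbb{D}} \Re\!\left(\frac{e^{i\theta}+\omega_n}{e^{i\theta}-\omega_n}\right)\bigl|\chi_j^{(0)}(e^{i\theta})\bigr|^2 d\mu_0(\theta).
$$
Splitting the integral at the boundary of $I$: on $I$ the Poisson kernel is integrable (up to the boundary $\omega_n\to e^{i\theta_0}$) against Lebesgue measure, so the contribution is controlled by $\sup_I (d\mu_0/d\theta)\cdot\sup_I |\chi_j^{(0)}|^2$, which is bounded uniformly in $j$ by hypotheses (i) and (ii). Off $I$, the Poisson kernel $\Re(e^{i\theta}+\omega_n)/(e^{i\theta}-\omega_n)$ is uniformly bounded since $\omega_n$ stays within a fixed distance of $\partial\mathbb{D}\setminus I$, so the contribution is controlled by that supremum times the total mass $\int|\chi_j^{(0)}|^2 d\mu_0=1$. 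Thus $\Re(F_0(\omega_n))_{j,j}=\mathcal O(1)$ uniformly, and summing gives the claim. Once all hypotheses are verified, Proposition \ref{prop:CMV-perturbation} yields the required trace norm vanishing for each $\omega_j$, and the theorem follows by the triangle inequality.

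The main obstacle is the verification of \eqref{eq:measure-condition}, since the CMV basis functions $\chi_j^{(0)}$ need not be uniformly bounded on all of $\partial\mathbb{D}$ (only on $I$); the off-$I$ contribution must be handled via the uniform boundedness of the Poisson kernel away from the pole, using the fact that $\omega_n$ is anchored near $e^{i\theta_0}\in I^\circ$. Everything else is a book-keeping exercise combining the already-established propositions with the freedom to choose an auxiliary $\beta'\in(\gamma,\beta)$.
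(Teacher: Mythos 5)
Your proof follows the same route as the paper's: reduce via Proposition \ref{th:comparison-general} to localized trace norms, split by linearity into the individual $\omega_j$'s, verify the two hypotheses of Proposition \ref{prop:CMV-perturbation}, and handle \eqref{eq:measure-condition} by splitting the Poisson integral over $I$ and its complement. One small gap in the step you yourself flag as the main obstacle: hypothesis (ii) gives uniform boundedness on $I$ of the orthonormal polynomials $\varphi_j$, not of the CMV basis functions $\chi_j$, yet your on-$I$ estimate invokes $\sup_I|\chi_j^{(0)}|^2<\infty$ and attributes it directly to (i) and (ii). To close this you need the standard OPUC identities $\chi_{2k}=z^{-k}\varphi_{2k}^*$ and $\chi_{2k-1}=z^{-k+1}\varphi_{2k-1}$ together with $|\varphi_j^*|=|\varphi_j|$ on $\partial\mathbb{D}$, which give $|\chi_j(e^{i\theta})|=|\varphi_j(e^{i\theta})|$ pointwise on the circle; the paper makes this identification explicit before splitting the integral. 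With that one line added, your argument matches the paper's.
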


\begin{proof} 
Recall that $\Omega=\left(\omega_1,\ldots,\omega_l\right)$ where $\omega_j=\left(1-\frac{\eta}{n^\gamma}\right)e^{i\theta_0}$.

Fix $m$ and $\beta>\beta'>\gamma$. By Proposition \ref{th:comparison-general}
\beq \no
\begin{split}
& \left|C_m^{(n)}\left(X^{\theta_0}_{\Psi_{n,\Omega}}\right)-C^{(n)}_{m,0}\left(X^{\theta_0}_{\Psi_{n,\Omega}}\right)\right|| \\
&\quad \leq \frac{C(m,\beta')}{n^\gamma} \left \| P_{\left \{n-2mn^{\beta'},n+2mn^{\beta'} \right\}} \left( \sum_{j=1}^l \zeta_j \Re F(\omega_{n,j})-\sum
\zeta_j\Re F_0(\omega_{n,j}) \right) P_{\left \{n-2mn^{\beta'},n+2mn^{\beta'}\right \}} \right \|_1\\
&\qquad+o(1)\\
&\quad \leq \frac{C(m,\beta')}{n^\gamma} \sum_{j=1}^l|\zeta_j| \left \| P_{\left \{n-2mn^{\beta'},n+2mn^{\beta'} \right\}} \left( \Re F(\omega_{n,j})-\Re
F_0(\omega_{n,j}) \right) P_{\left \{n-2mn^{\beta'},n+2mn^{\beta'}\right \}} \right \|_1\\
&\qquad+o(1)\\
\end{split}
\eeq 
We shall show that \eqref{eq:measure-condition} and \eqref{eq:jacobi-perturb-cond2} hold for $\beta'$, thereby proving the theorem by Proposition \ref{prop:CMV-perturbation}.

To prove \eqref{eq:jacobi-perturb-cond2} note that by the conditions of the theorem
\beq \no
n^{\beta'} \left \|  \left(\mathcal{C}-\mathcal{C}_0 \right) P_{\left \{ n-6mn^{\beta'},n+6mn^{\beta'} \right \} } \right \| \leq C(m) n^{\beta'}
\left(n-6mn^{\beta'} \right)^{-\beta} \rightarrow 0
\eeq
as $n \rightarrow \infty$.

To prove \eqref{eq:measure-condition}, let $\left (\varphi_j\right)_{j=0}^\infty$ be the orthonormal polynomials with respect to $\mu_0$, recall the CMV basis introduced in Section 2.1, $\left( \chi_j \right)_{j=0}^\infty$, and note
the following relation (see, e.g., \cite[Section 4]{simonopuc} for a proof)
\begin{align*}
    \chi_{2k}=z^{-k}\varphi^*_{2k}\\
    \chi_{2k-1}=z^{-k+1}\varphi_{2k-1}
\end{align*}
thus
\beq \label{eq:boundednessgfunction1}
\begin{split}
\Re \left( F_0(\lambda) \right)_{j,j}&= \Re \int
|\chi_j(\theta)|^2\left(\frac{e^{i\theta}+\left(1-\frac{\eta}{n^\gamma}\right)e^{i\theta_0}}{e^{i\theta}-\left(1-\frac{\eta}{n^\gamma}\right)e^{i\theta_0}}\right)\textrm{d}
\mu_0(\theta)\\
&=  \int
|\varphi_j(\theta)|^2 \Re \left(\frac{e^{i\theta}+\left(1-\frac{\eta}{n^\gamma}\right)e^{i\theta_0}}{e^{i\theta}-\left(1-\frac{\eta}{n^\gamma}\right)e^{i\theta_0}}\right)\textrm{d}
\mu_0(\theta) \\
&= \int_{(-\pi,\pi) \setminus I}+ \int_{I}.
\end{split}
\eeq
For the first integral in the sum we write
\begin{equation} \nonumber
\begin{split}
&\left| \int_{(-\pi,\pi) \setminus I}|\varphi_j(\theta)|^2 \Re \left(\frac{e^{i\theta}+\left(1-\frac{\eta}{n^\gamma}\right)e^{i\theta_0}}{e^{i\theta}-\left(1-\frac{\eta}{n^\gamma}\right)e^{i\theta_0}}\right)\textrm{d}
\mu_0(\theta) \right| \\
&\quad \leq \left( \frac{\Re \eta}{n^\gamma}+\frac{|\eta|^2}{2n^{2\gamma}} \right)  \int_{(-\pi,\pi) \setminus I} \frac{|\varphi_j(\theta)|^2\textrm{d}\mu_0(\theta)}{(1-\cos(\theta_0-\theta))+\Re\frac{\eta}{n^{\gamma}}\left( e^{i(\theta_0-\theta)}-1 \right)+\frac{|\eta|^2}{2n^{2\gamma}}} \\
&\quad \leq C(I) \left( \frac{\Re \eta}{n^\gamma}+\frac{|\eta|^2}{2n^{2\gamma}} \right)  \int_{(-\pi,\pi) \setminus I} |\varphi_j(\theta)|^2\textrm{d}\mu_0(\theta)\\
&\quad \leq C(I) \left(\frac{\Re \eta}{n^\gamma}+\frac{|\eta|^2}{2n^{2\gamma}} \right)  \int_{(-\pi,\pi)} |\varphi_j(\theta)|^2\textrm{d}\mu_0(\theta)=C(I) \left( \frac{\Re \eta}{n^\gamma}+\frac{|\eta|^2}{2n^{2\gamma}} \right)
\end{split}
\end{equation}
where  $C(I)$ is a constant that depends on $I$, and the last equality follows by the fact that $\phi_j$ is normalized.

For the second integral in \eqref{eq:boundednessgfunction1} we note that by condition $(ii)$ in the theorem
\begin{equation} \nonumber
\begin{split}
&\int_I
|\varphi_j(\theta)|^2 \Re \left(\frac{e^{i\theta}+\left(1-\frac{\eta}{n^\gamma}\right)e^{i\theta_0}}{e^{i\theta}-\left(1-\frac{\eta}{n^\gamma}\right)e^{i\theta_0}}\right)\textrm{d}
\mu_0(\theta)  \\
& \quad \leq C \int_I \Re \left(\frac{e^{i\theta}+\left(1-\frac{\eta}{n^\gamma}\right)e^{i\theta_0}}{e^{i\theta}-\left(1-\frac{\eta}{n^\gamma}\right)e^{i\theta_0}}\right) \chi_I(\theta) \textrm{d}
\mu_0(\theta)
\end{split}
\end{equation}
which is bounded as the real part of the Cayley transform of $\chi_I(\theta)\textrm{d}\mu_0(\theta)$ which by condition $(i)$ of the
theorem, is an absolutely continuous measure with bounded derivative. This proves \eqref{eq:measure-condition} and completes the proof.
\end{proof}

\begin{proof}[Proof of Theorem \ref{thm:CLTPoissonIntro}]
Fix $\alpha \in (-1,1)$. By Proposition \ref{prop:PoissonForConstant}, \eqref{eq:CLTPoisson} and \eqref{eq:VariancePoisson} hold for the OPE associated with $\mu_\alpha$. Now, if $\mu$ has Veblunsky coefficients $(\alpha_n )_{n=0}^\infty$ satisfying $\lim_{n\rightarrow \infty}\alpha_n=\alpha$ then $\supp (\mu)^\circ=\supp(\mu_\alpha)^\circ$. To see this, note that if $\mathcal{C}$ (resp.\ $\mathcal{C}_\alpha$) is the CMV matrix associated with $\mu$ (resp.\ $\mu_\alpha$) then $\mathcal{C}$ is a compact perturbation of $\mathcal{C}_\alpha$ so that $\left( \supp(\mu) \setminus \supp (\mu_\alpha) \right)\cup\left( \supp(\mu_\alpha) \setminus \supp (\mu) \right) $ is a discrete set. 

Now, clearly, for any $\theta_0 \in \supp(\mu_\alpha)^\circ$, there exists a neighborhood $\theta_0 \in I$ such that $(i)$ of Theorem \ref{thm:UniversalityPoissonGeneral} is satisfied. By \cite[equation (2.8)]{Golinskii} $I$ can be chosen so that $(ii)$ of that theorem holds as well. Thus, the result follows from Theorem \ref{thm:UniversalityPoissonGeneral}.
\end{proof}

\begin{remark} \label{Rem:VaryingEta}
For the next section it is useful to note that all the previous results are obtained with virtually no change when replacing $\eta$ by a sequence $(\eta_n)$ satisfying $\eta_n=\eta+O(n^{-\gamma})$ with $\Re \eta>0$. The proofs go through with almost no modification.  
\end{remark}


\section{Extending Theorems \ref{thm:CLTgeneral} and \ref{thm:main-result} for\\ $f\in C_c^1(\partial\bbD)$.}

In this section we want to extend the results of the previous sections to general $C^1$ functions supported on an arc. As remarked in the Introduction, it is not straightforward to scale a function on the circle and we shall do so using the map $\mathcal{M}$ defined in \eqref{eq:Mobius}, $ \mathcal{M}(z)=\frac{i-z}{i+z}$, which will also allow us to exploit the density arguments of \cite[Section 5]{realope}. We shall also extensively use the map $\mathcal{M}^{-1}:\overline{\mathbb{D}}\rightarrow \mathbb{H}_+ \cup\{\infty\}$ given by
\beq \no 
\mathcal{M}^{-1}(\omega)=i\frac{1-\omega}{1+\omega},
\eeq 
(for $\omega=e^{i\theta}\in\partial\mathbb{D}$, $\mathcal{M}^{-1}(e^{i\theta})=\tan(\theta/2)$ but we shall not use this).

The first step is a necessary slight modification to the function $\Psi_{n, \gamma,\eta}$. Let $\widetilde{\eta}\in \bbC$ be such that $\text{Im}\widetilde{\eta}>0$ and let
\beq \no 
\omega_n=\mathcal{M}\left( \frac{\widetilde{\eta}}{n^\gamma} \right)=\left(1+\frac{2i\widetilde{\eta}}{n^\gamma-i\widetilde{\eta}}\right).
\eeq 
Consider the function 
$$g(x)=\text{Im}\frac{1}{x-\widetilde{\eta}}$$ 
and let $h:\partial\bbD\to\bbR$ be defined by $h=g\circ \mathcal{M}^{-1}$ (with $h(-1)=0$) and note that, for $0 \leq \theta_0<2\pi$,
\beq \begin{split}
&\widetilde{h}_n\left(e^{i(\theta-\theta_0)}\right)\\
&\quad \equiv h\left(\mathcal{M}\left(n^\gamma\left(\mathcal{M}^{-1}\left(e^{i\left(\theta-\theta_0\right)}\right)\right)\right)\right)\\
&\quad =\frac{1}{n^\gamma}\text{Im}\frac{1}{\mathcal{M}^{-1}(e^{i(\theta-\theta_0)})-\frac{\widetilde{\eta}}{n^{\gamma}}}=\frac{1}{n^\gamma}\text{Im}\frac{1}{\mathcal{M}^{-1}(e^{i(\theta-\theta_0)})-\mathcal{M}^{-1}(\omega_n)}\\
&\quad =\frac{\left(1+\cos\left(\theta-\theta_0\right)\right)}{2n^\gamma}\Re\left(\frac{e^{i\theta}+\omega_n e^{i\theta_0}}{e^{i{\theta}}-\omega_n e^{i\theta_0}}\right).
    \end{split}
\eeq
Denote
$$\Psi_{\omega_n,n}\left(e^{i\theta}\right)=\frac{1}{n^\gamma}\text{Re}\left(\frac{e^{i\theta}+\omega_n}{e^{i\theta}-\omega_n}\right).$$

By essentially the same argument proving \cite[Lemma 2.2]{realope} we see that for any measure on $\partial \mathbb{D}$ and any two bounded functions $f_1, f_2:\partial \mathbb{D}\rightarrow \mathbb{R}$
\begin{multline} \label{eq:diff-of-momentgen1}
\left|\bbE \left[{\rm e}^{ t \left(X_{f_1,n}-\bbE X_{f_1,n}\right)} \right]-\bbE \left[{\rm e}^{ t
\left(X_{f_2,n}-\bbE X_{f_2,n}\right)} \right]\right|\\
\leq |t|\left( \Var X_{(f_1-f_2),n} \right)^{1/2} {\rm e}^{C |t|^2\left(\Var X_{f_1,n}+\Var X_{(f_1-f_2),n}\right)}
\end{multline}
for any $|t|>0$ sufficiently small. 
In particular, the linear statistics associated with $\widetilde{h}_n$ and $\Psi_{\omega_n,n}$ satisfy, for any $\theta_0$,
\begin{multline} \label{eq:diff-of-momentgen}
\left|\bbE \left[{\rm e}^{ t \left(X^{\theta_0}_{\Psi_{\omega_n,n}}-\bbE X^{\theta_0}_{\Psi_{\omega_n,n}}\right)} \right]-\bbE \left[{\rm e}^{ t
\left(X_{\widetilde{h}_n,n,\gamma}^{\theta_0}-\bbE X_{\widetilde{h}_n,n,\gamma}^{\theta_0}\right)} \right]\right|\\
\leq |t|\left( \Var X^{\theta_0}_{\Psi_{\omega_n,n}-\widetilde{h}_n} \right)^{1/2} {\rm e}^{C |t|^2\left(\Var X^{\theta_0}_{\Psi_{\omega_n,n}}+\Var
X^{\theta_0}_{\Psi_{\omega_n,n}-\widetilde{h}_n}\right)},
\end{multline}
for small enough $|t|$. Thus, to show that the asymptotics of the fluctuations of
$X_{\widetilde{h}_n,n,\gamma}^{\theta_0}$ are the same as those of $X^{\theta_0}_{\Psi_{\omega_n,n}}$ we need to control $\Var X^{\theta_0}_{\Psi_{\omega_n,n}-\widetilde{h}_n}$.

\begin{lemma} \label{lemma:poisso-poisson} 
For any probability measure, $\mu$, on the unit circle, and any $0 \leq \theta_0 <2\pi$, we have
\begin{equation} \label{eq:Var-Difference}
\Var X^{\theta_0}_{\Psi_{n,\omega}-\widetilde{h}_n}\xrightarrow[n\to
\infty]{}0.
\end{equation}
\end{lemma}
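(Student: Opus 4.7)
Plan: The key algebraic observation is that the explicit formulas derived just before the statement give
\[
\widetilde h_n(e^{i(\theta-\theta_0)}) = \cos^2\bigl((\theta-\theta_0)/2\bigr)\,\Psi_{\omega_n,n}(e^{i(\theta-\theta_0)}),
\]
so, writing $\beta := \omega_n e^{i\theta_0}$, the difference $\Phi_n(e^{i\theta}) := \Psi_{\omega_n,n}(e^{i(\theta-\theta_0)}) - \widetilde h_n(e^{i(\theta-\theta_0)})$ equals $\sin^2((\theta-\theta_0)/2)\,\Psi_{\omega_n,n}(e^{i(\theta-\theta_0)})$, and a direct computation using $4\sin^2((\theta-\theta_0)/2) = |e^{i\theta}-e^{i\theta_0}|^2$ together with $\Psi_{\omega_n,n}(e^{i(\theta-\theta_0)}) = (1-|\omega_n|^2)/(n^\gamma|e^{i\theta}-\beta|^2)$ yields
\[
\Phi_n(e^{i\theta}) = \frac{1-|\omega_n|^2}{4n^\gamma}\left|1-\frac{e^{i\theta_0}(1-\omega_n)}{e^{i\theta}-\beta}\right|^2.
\]
Since $|1-\omega_n|,\,1-|\omega_n|^2 = O(n^{-\gamma})$ while $|e^{i\theta}-\beta|^{-1} \leq (1-|\omega_n|)^{-1} = O(n^\gamma)$, the bracketed factor is $O(1)$ uniformly in $\theta$; in particular $\|\Phi_n\|_\infty = O(n^{-2\gamma})$.

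In the CMV basis of Section~2.4 the variance may be written
\[
\Var X^{\theta_0}_{\Psi_{n,\omega} - \widetilde h_n} = \|P_n \Phi_n(\mathcal{C})(I-P_n)\|_2^2,
\]
where $\Phi_n(\mathcal{C})$ is multiplication by $\Phi_n$ on $L^2(\mu)$ and $P_n$ is the projection onto the first $n$ CMV basis vectors. Expanding $|1-X|^2 = 1 - X - \bar X + |X|^2$ in the formula above, and identifying $|e^{i\theta}-\beta|^{-2}$ with the operator $(\mathcal{C}-\beta)^{-1}(\mathcal{C}^*-\bar\beta)^{-1}$ (legitimate since $|\beta|<1$ keeps both resolvents bounded), one obtains the operator decomposition
\begin{align*}
\Phi_n(\mathcal{C}) = \frac{1-|\omega_n|^2}{4n^\gamma}\Bigl[&I - e^{i\theta_0}(1-\omega_n)(\mathcal{C}-\beta)^{-1} - e^{-i\theta_0}(1-\bar\omega_n)(\mathcal{C}^*-\bar\beta)^{-1}\\
&+ |1-\omega_n|^2(\mathcal{C}-\beta)^{-1}(\mathcal{C}^*-\bar\beta)^{-1}\Bigr].
\end{align*}
The virtue of this rewriting is that the Poisson singularity is now localised in the resolvents (of operator norm $O(n^\gamma)$) while the compensating factors $(1-\omega_n)$ and $(1-|\omega_n|^2)$ are explicit.

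It then remains to bound $\|P_n T(I-P_n)\|_2$ for each operator $T$ in the bracket. The $I$-term gives nothing since $P_n(I-P_n)=0$. For $T=(\mathcal{C}-\beta)^{-1}$, the Combes--Thomas bound for CMV matrices already invoked in the paper (via \cite[Theorem 10.14.1]{simonopuc}), $|T_{ij}| \leq D\exp(-c|i-j|/n^\gamma)$ with $D,c>0$ independent of $n$, gives
\[
\|P_n T(I-P_n)\|_2^2 \leq D^2 \sum_{k\geq 1}\min(k,n)\,e^{-2ck/n^\gamma} = O(n^{2\gamma}),
\]
and the same bound applies to $(\mathcal{C}^*-\bar\beta)^{-1}$. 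Hence each single-resolvent term contributes $O(n^{-3\gamma})\cdot O(n^\gamma) = O(n^{-2\gamma})$ to $\|P_n\Phi_n(\mathcal{C})(I-P_n)\|_2$. For the double-resolvent term, the splitting $P_n AB(I-P_n) = P_n AP_n B(I-P_n) + P_n A(I-P_n)B(I-P_n)$ combined with $\|(\mathcal{C}-\beta)^{-1}\|_\infty,\|(\mathcal{C}^*-\bar\beta)^{-1}\|_\infty = O(n^\gamma)$ and the previous HS estimate yields $\|P_n(\mathcal{C}-\beta)^{-1}(\mathcal{C}^*-\bar\beta)^{-1}(I-P_n)\|_2 = O(n^{2\gamma})$, which against the coefficient $O(n^{-4\gamma})$ also contributes $O(n^{-2\gamma})$. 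Summing, $\|P_n\Phi_n(\mathcal{C})(I-P_n)\|_2 = O(n^{-2\gamma})$, so $\Var X^{\theta_0}_{\Psi_{n,\omega}-\widetilde h_n} = O(n^{-4\gamma}) \to 0$. The main technical point is locating the algebraic identity for $\Phi_n(\mathcal{C})$ which makes visible the cancellations that upgrade the merely pointwise bound $\|\Phi_n\|_\infty = O(n^{-2\gamma})$ into the required variance decay; the naive bound $\Var \leq 2n\|\Phi_n\|_\infty^2 = O(n^{1-4\gamma})$ would otherwise only suffice for $\gamma > 1/4$.
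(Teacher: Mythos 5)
Your proof is correct, and it takes a genuinely different route from the paper's. The paper reduces the claim, via the CD formula $K_n(\theta,\phi)=\bigl(\varphi_n^*(e^{i\theta})\overline{\varphi_n^*(e^{i\phi})}-\varphi_n(e^{i\theta})\overline{\varphi_n(e^{i\phi})}\bigr)/(1-e^{i(\theta-\phi)})$ and the normalization of $\varphi_n,\varphi_n^*$, to the purely analytic statement $\lVert(\Psi_{n,\omega}-\widetilde h_n)'\rVert_\infty\to 0$, which it then establishes by a somewhat delicate two--regime derivative estimate (splitting at $|\theta_n|\asymp n^{-\varepsilon}$ for $\tfrac{2}{3}\gamma<\varepsilon<\tfrac34\gamma$). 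You instead isolate the explicit Poisson-kernel identity
$\Phi_n(e^{i\theta})=\frac{1-|\omega_n|^2}{4n^\gamma}\bigl|1-\tfrac{e^{i\theta_0}(1-\omega_n)}{e^{i\theta}-\beta}\bigr|^2$, pass to the operator formula $\Var X_f=\lVert P_n M_f(I-P_n)\rVert_2^2$ in the CMV basis, expand $\Phi_n(\mathcal C)$ into resolvent terms $(\mathcal C-\beta)^{-1}$, $(\mathcal C^*-\bar\beta)^{-1}$, and control the off-block Hilbert--Schmidt norms by the CMV Combes--Thomas bound. The paper's method is self-contained and elementary but requires the careful case analysis for the derivative; yours makes the vanishing transparent through scaling of the resolvents, and naturally yields a quantitative rate, but imports the Combes--Thomas machinery. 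One minor caveat: the constant $D$ in the CMV Combes--Thomas bound of \cite[Theorem 10.14.1]{simonopuc} in fact scales like $1/d(z,\partial\mathbb D)=O(n^\gamma)$ (the paper's phrasing ``independent of $z$'' is imprecise), so your Hilbert--Schmidt estimates degrade by a factor of $n^\gamma$; the final bound becomes $O(n^{-\gamma})$ rather than $O(n^{-2\gamma})$, which is still sufficient. Your closing remark correctly identifies why the naive bound $\Var\leq n\lVert\Phi_n\rVert_\infty^2=O(n^{1-4\gamma})$ is inadequate and what structure the resolvent decomposition exploits to close the gap; in the paper the analogous gain comes from the vanishing of the Lipschitz quotient rather than the boundedness of $\Phi_n$ itself.
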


\begin{proof}
Recall that
\beq
\label{eq:varianceformula}
\begin{split}
&\Var X^{\theta_0}_{\Psi_{n,\omega}-\widetilde{h}_n}\\
&\quad=\frac{1}{2}\iint_{\partial\mathbb{D}^2}\left|\left(\Psi_{n,\omega}-\widetilde{h}_n\right)\left(\theta-\theta_0\right)-\left(\Psi_{n,\omega}-\widetilde{h}_n\right)\left(\phi-\theta_0\right)\right|^2\left|K_n(\theta,\phi)\right|^2d\mu(\theta
)d\mu(\phi)
\end{split} 
\eeq
where $K_n(\theta,\phi)=\sum_{j=0}^{n-1}\varphi_j\left(e^{i\theta}\right)\overline{\varphi_j\left(e^{i\phi}\right)}$ is the Christoffel-Darboux kernel associated with the measure $\mu$ (recall Section 2.2). Since both $\mu$ and  $\theta_0$ are arbitrary we may assume (by rotating $\mu$) that $\theta_0=0$. Furthermore, the CD formula \cite[Theorem 2.2.7]{simonopuc} says that
\begin{equation} \label{eq:CDFormula}
K_n(\theta,\phi)=\frac{\varphi_n^*(e^{i\theta}) \overline{\varphi_n^*(e^{i\phi})}-\varphi_n(e^{i\theta}) \overline{\varphi_n(e^{i\phi})}}{1-e^{i(\theta-\phi)}}
\end{equation}
so that
\begin{equation} \nonumber
\begin{split}
&\Var X_{\Psi_{n,\omega}-\widetilde{h}_n}\\
&\quad=\iint_{\partial\mathbb{D}^2}\frac{\left|\left(\Psi_{n,\omega}-\widetilde{h}_n\right)\left(\theta\right)-\left(\Psi_{n,\omega}-\widetilde{h}_n\right)\left(\phi\right)\right|^2}{2\left|e^{i\phi}-e^{i\theta}\right|^2}
\left| \varphi_n^*(e^{i\theta}) \overline{\varphi_n^*(e^{i\phi})}-\varphi_n(e^{i\theta}) \overline{\varphi_n(e^{i\phi})} \right|^2d\mu(\theta)d\mu(\phi).
\end{split} 
\end{equation}

Thus, by the fact that $\varphi_n$ and $\varphi_n^*$ are normalized, we see that it suffices to show that 
$$
\lim_{n\rightarrow \infty}\left|\frac{\left(\Psi_{n,\omega}-\widetilde{h}\right)\left(\theta\right)-\left(\Psi_{n,\omega}-\widetilde{h}\right)\left(\phi\right)}{e^{i\phi}-e^{i\theta}}\right|=0
$$
uniformly in $\theta, \phi$. This will of course be implied by
\beq \label{eq:Derivative-Decay}
\lim_{n\rightarrow \infty}\sup_{\theta \in \partial \mathbb{D}}\left|\Psi'_{n,\omega}-\widetilde{h}'\right|(\theta)=0.
\eeq

Write
\beq \no 
\omega_n=r_ne^{i\phi_n}, \qquad  \theta_n=\theta-\phi_n
\eeq
and note that $r_n=1-\Delta_n$ with $c/n^\gamma\leq \Delta_n \leq C/n^\gamma$ and $\phi_n=O\left(\frac{1}{n^\gamma}\right)$. Now, by a straightforward computation 
\beq \nonumber
\begin{split}
\left(\Psi'_{n,\omega}-\widetilde{h}'\right)(\theta)&= A_n(\theta)+B_n(\theta)
\end{split}
\eeq
where
\beq \nonumber
A_n(\theta)=\frac{r_n}{n^\gamma}\frac{(1-\cos\theta)(1-r_n^2)\sin\theta_n}{(2r_n(1-\cos\theta_n)+\Delta_n^2)^2}
\eeq
and
\beq \nonumber
B_n(\theta)=\frac{\sin \theta}{2n^\gamma}\frac{1-r_n^2}{2r_n(1-\cos\theta_n)+\Delta_n^2}.
\eeq

To bound $\|A_n\|_\infty$, let $\frac{2}{3}\gamma<\varepsilon<\frac{3}{4}\gamma$ and note that for $|\theta_n|<\frac{1}{n^\varepsilon}$
\beq \no
|A_n(\theta)|\leq \frac{C}{n^{2\gamma}}\frac{(1-\cos\theta)|\sin\theta_n|}{\Delta_n^4}\leq \frac{C'}{n^{2\gamma+3\varepsilon-4\gamma}}=\frac{C'}{n^{3\varepsilon-2\gamma}}
\eeq
where $C, C'$ are some constants and the second inequality follows since if $|\theta_n|<\frac{1}{n^\varepsilon}$ then $|\theta|<\frac{2}{n^\varepsilon}$ for $n$ large enough (recall $\phi_n=O(n^{-\gamma})$).
For $|\theta_n|\geq \frac{1}{n^\varepsilon}$ write 
\beq \no 
\begin{split}
|A_n(\theta)| & \leq \frac{C}{n^{2\gamma}}\frac{(1-\cos\theta)}{(1-\cos\theta_n)^2}= \frac{C}{n^{2\gamma}}\frac{(1-\cos\theta_n)+O(n^{-\gamma})}{(1-\cos\theta_n)^2} \\
& \leq \frac{C}{n^{2\gamma}}\frac{1}{1-\cos\theta_n}+\frac{C'}{n^{3\gamma}} \frac{1}{(1-\cos\theta_n)^2}\leq \frac{C}{n^{2\gamma-2\varepsilon}}+\frac{C'}{n^{3\gamma-4\varepsilon}}.
\end{split}
\eeq
It follows that $\|A_n\|_\infty \rightarrow 0$ as $n \rightarrow \infty$.

For $B_n$ let $0<\varepsilon<\gamma$ and note that for $|\theta_n|<\frac{1}{n^\varepsilon}$
\beq \no 
|B_n(\theta)|\leq \frac{C}{n^{2\gamma}}\frac{|\sin(\theta_n)|}{\Delta_n^2}\leq \frac{C'}{n^{2\gamma+\varepsilon-2\gamma}}=\frac{C'}{n^\varepsilon}
\eeq
and for $|\theta_n|\geq \frac{1}{n^\varepsilon}$
\beq \no
|B_n(\theta)| \leq \frac{C}{n^{2\gamma}}\frac{1}{(1-\cos\theta_n)}\leq \frac{C'}{n^{2\gamma-2\varepsilon}}
\eeq
so that also $\|B_n\|_\infty \rightarrow 0$ as $n \rightarrow \infty$. This finishes the proof of the lemma.
\end{proof}

It follows by \eqref{eq:diff-of-momentgen} that the conclusions of Theorems \ref{thm:CLTPoissonIntro} and \ref{thm:UniversalityPoissonIntro} hold for $X_{\widetilde{h}_n,n,\gamma}^{\theta_0}$ (and linear combinations) with the limiting variance now equal to  
$$\frac{2}{\left(2i\eta+2\overline{i\eta}\right)^2}=-\frac{1}{2}\text{Re}\left(\frac{1}{\left(\eta-\overline{\eta}\right)^2}\right)=\frac{1}{ 4\pi^2} \iint_{\mathbb{R}^2} \left( \frac{g(x)-g(y)}{x-y} \right)^2 {\rm d} x
{\rm d} y$$
where the last equality follows from residue calculus (see for ex. \cite[equation 4.24]{realope}).

In \cite[Section 5]{realope} a density argument for functions of the type $\sum_j c_j \text{Im}\frac{1}{x-\widetilde{\eta_j}}$, was used to extend mesoscopic universality and CLT results to continuously differentiable compactly supported functions on $\mathbb{R}$. We want to import this argument to our setting. We start by recalling the space $\mathcal{L}_\omega$ (used in \cite{realope, 17 in bd}): we say that $f\in \mathcal{L}_\omega$ if it is a real function which vanishes at infinity and
$\sup_{x,y\in \bbR} \sqrt{1+x^2}\sqrt{1+y^2} \left|\frac{f(x)-f(y)}{x-y}\right| <\infty.$ The space $\mathcal{L}_\omega$ is a normed space with
the weighted Lipschitz norm 
$$\|f\|_{\mathcal L_w}:=\sup_{x,y\in \bbR} \sqrt{1+x^2}\sqrt{1+y^2} \left|\frac{f(x)-f(y)}{x-y}\right| .$$ 
Its importance for our study comes from the fact that this norm controls the variance. That this is true for orthogonal polynomial ensembles on $\mathbb{R}$ was shown in \cite[Proposition 5.1]{realope}. The following is the unit circle analog of that proposition.

\begin{prop}  \label{prop:varlomega1}
Let $\mathfrak{g}(x)=(x-i)^{-1}$ and let $\mathfrak{h} = \mathfrak{g}\circ \mathcal{M}^{-1}$. Then for any $ f\in C_c^1((-\pi,\pi))$ we have for any $0\leq\theta_0< 2\pi$
\beq
\Var X_{\widetilde{f},n,\gamma}^{\theta_0}\leq \left\|f\circ \mathcal{M}\right\|^2_{\mathcal{L}_\omega} \Var
X_{\widetilde{\mathfrak{h}},n,\gamma}^{\theta_0},\eeq 
\end{prop}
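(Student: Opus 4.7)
The plan is to follow the strategy of \cite[Proposition~5.1]{realope}, reducing to a pointwise comparison between $f\circ \mathcal{M}$ and $\mathfrak{g}$ that is then integrated against the Christoffel--Darboux kernel. First, I would write out both variances using the standard determinantal identity already exploited in the proof of Lemma \ref{lemma:poisso-poisson}: for any (possibly complex valued) bounded sampling function $F$ on $\partial \mathbb{D}$,
\beq \no
\Var X_{F,n}=\frac{1}{2}\iint_{\partial\mathbb{D}^2}|F(\theta)-F(\phi)|^2\,|K_n(\theta,\phi)|^2\,d\mu(\theta)\,d\mu(\phi),
\eeq
applied to $F(\theta)=\widetilde{f}_{n,\gamma}(\theta-\theta_0)$ on the left and $F(\theta)=\widetilde{\mathfrak{h}}_{n,\gamma}(\theta-\theta_0)$ on the right (here $\Var$ for complex valued $F$ is understood as $\bbE|X_{F,n}-\bbE X_{F,n}|^2$). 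The problem is thereby reduced to the pointwise bound
\beq \no
|\widetilde{f}_{n,\gamma}(\theta-\theta_0)-\widetilde{f}_{n,\gamma}(\phi-\theta_0)|^2 \leq \|f\circ\mathcal{M}\|_{\mathcal{L}_\omega}^2\,|\widetilde{\mathfrak{h}}_{n,\gamma}(\theta-\theta_0)-\widetilde{\mathfrak{h}}_{n,\gamma}(\phi-\theta_0)|^2
\eeq
for all $\theta,\phi$.

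The key observation, which drives the entire argument, is that the weight appearing in $\|\cdot\|_{\mathcal{L}_\omega}$ is engineered to match exactly the modulus $|\mathfrak{g}(x)-\mathfrak{g}(y)|$. Indeed, for $x,y\in\bbR$,
\beq \no
|\mathfrak{g}(x)-\mathfrak{g}(y)|=\left|\frac{1}{x-i}-\frac{1}{y-i}\right|=\frac{|x-y|}{\sqrt{1+x^2}\sqrt{1+y^2}}.
\eeq
Since $f\in C_c^1((-\pi,\pi))$, the pullback $f\circ\mathcal{M}$ is $C^1$ and compactly supported on $\bbR$, so it lies in $\mathcal{L}_\omega$ (the weight is bounded on compact sets and $f\circ \mathcal{M}$ vanishes at infinity). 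By definition of the $\mathcal{L}_\omega$ norm,
\beq \no
|(f\circ\mathcal{M})(x)-(f\circ\mathcal{M})(y)|\leq \|f\circ\mathcal{M}\|_{\mathcal{L}_\omega}\cdot\frac{|x-y|}{\sqrt{1+x^2}\sqrt{1+y^2}}=\|f\circ\mathcal{M}\|_{\mathcal{L}_\omega}\,|\mathfrak{g}(x)-\mathfrak{g}(y)|.
\eeq

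The final step is to specialize this with $x=n^\gamma\mathcal{M}^{-1}\!\left(e^{i(\theta-\theta_0)}\right)$ and $y=n^\gamma \mathcal{M}^{-1}\!\left(e^{i(\phi-\theta_0)}\right)$, which are real because $\mathcal{M}^{-1}$ maps $\partial\mathbb{D}\setminus\{-1\}$ into $\bbR$ (only $\theta-\theta_0=\pi$ is excluded, a null set under $d\mu\otimes d\mu$). On the left hand side this produces $(f\circ\mathcal{M})(x)=\widetilde{f}_{n,\gamma}(\theta-\theta_0)$ and on the right hand side $\mathfrak{g}(x)=(\mathfrak{g}\circ\mathcal{M}^{-1}\circ\mathcal{M})(x/n^\gamma\cdot n^\gamma)=\widetilde{\mathfrak{h}}_{n,\gamma}(\theta-\theta_0)$ (and similarly in the $y,\phi$ variable). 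Squaring and integrating against $\tfrac{1}{2}|K_n(\theta,\phi)|^2\,d\mu(\theta)\,d\mu(\phi)$ yields the claimed inequality.

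The argument is almost entirely structural, so I do not anticipate a serious analytic obstacle; the only point requiring care is checking that $f\circ \mathcal{M}$ really belongs to $\mathcal{L}_\omega$ and that the weight identity is pointwise tight, so that the substitution through $\mathcal{M}$ preserves the inequality. Both of these are immediate from the explicit form of $\mathfrak{g}$.
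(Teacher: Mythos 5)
Your proposal is correct and follows essentially the same route as the paper's proof: both express the variance via the Christoffel--Darboux kernel, exploit the identity $|\mathfrak{g}(x)-\mathfrak{g}(y)|=|x-y|/(\sqrt{1+x^2}\sqrt{1+y^2})$ to convert the $\mathcal{L}_\omega$ norm bound into a pointwise comparison between $|f\circ\mathcal{M}(x)-f\circ\mathcal{M}(y)|$ and $|\mathfrak{g}(x)-\mathfrak{g}(y)|$, and then integrate. The only cosmetic difference is that the paper factors out the supremum inside the integral while you apply the pointwise inequality directly; these are the same argument.
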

\begin{proof}
\begin{equation}
    \label{eq:boundvarianceh}
    \begin{split}
&\Var X_{\widetilde{f},n,\gamma}^{\theta_0}\\
&=\frac{1}{2}\iint_{\partial\mathbb{D}^2}\left|f\circ\mathcal{M}\left(n^\gamma\mathcal{M}^{-1}\left(e^{i\left(\theta-\theta_0\right)}\right)\right)-f\circ\mathcal{M}\left(n^\gamma\mathcal{M}^{-1}\left(e^{i\left(\phi-\theta_0\right)}\right)\right)\right|^2\left|K_n(\theta,\phi)\right|^2d\mu(\theta
)d\mu(\phi)\\
&\leq \frac{1}{2} \left(\sup_{\theta,\phi\in [-\pi,\pi]}
\left|\frac{f\circ\mathcal{M}(n^\gamma\mathcal{M}^{-1}\left(e^{i\left(\theta-\theta_0\right)}\right))-f\circ\mathcal{M}(n^\gamma\mathcal{M}^{-1}\left(e^{i\left(\phi-\theta_0\right)}\right))}{\mathfrak{g}(n^\gamma\mathcal{M}^{-1}\left(e^{i\left(\theta-\theta_0\right)}\right))-\mathfrak{g}(n^\gamma\mathcal{M}^{-1}\left(e^{i\left(\phi-\theta_0\right)}\right))}\right|\right)^2\\
&\quad\times\iint_{\partial\mathbb{D}^2}\left|\mathfrak{g}(n^\gamma\mathcal{M}^{-1}\left(e^{i\left(\theta-\theta_0\right)}\right))-\mathfrak{g}(n^\gamma\mathcal{M}^{-1}\left(e^{i\left(\phi-\theta_0\right)}\right))\right|^2\left|K_n(\theta,\phi)\right|^2d\mu(\theta
)d\mu(\phi)\\
&=\quad \left(\sup_{\theta,\phi\in [-\pi,\pi]}
\left|\frac{f\circ\mathcal{M}(n^\gamma\mathcal{M}^{-1}\left(e^{i\left(\theta-\theta_0\right)}\right))-f\circ\mathcal{M}(n^\gamma\mathcal{M}^{-1}\left(e^{i\left(\phi-\theta_0\right)}\right))}{\mathfrak{g}(n^\gamma\mathcal{M}^{-1}\left(e^{i\left(\theta-\theta_0\right)}\right))-\mathfrak{g}(n^\gamma\mathcal{M}^{-1}\left(e^{i\left(\phi-\theta_0\right)}\right))}\right|\right)^2\Var X_{\widetilde{\mathfrak{h}},n,\gamma}^{\theta_0}.
    \end{split}
\end{equation}
We know that $\mathcal{M}^{-1}$ is onto $\mathbb{R}$, thus we can write
\begin{equation*}\begin{split}
    &\sup_{\theta,\phi\in [-\pi,\pi]}
    \left|\frac{f\circ\mathcal{M}(n^\gamma\mathcal{M}^{-1}\left(e^{i\left(\theta-\theta_0\right)}\right))-f\circ\mathcal{M}(n^\gamma\mathcal{M}^{-1}\left(e^{i\left(\phi-\theta_0\right)}\right))}{\mathfrak{g}(n^\gamma\mathcal{M}^{-1}\left(e^{i\left(\theta-\theta_0\right)}\right))-\mathfrak{g}(n^\gamma\mathcal{M}^{-1}\left(e^{i\left(\phi-\theta_0\right)}\right))}\right|\\&\quad=\sup_{x,y\in
    \mathbb{R}} \left|\frac{f\circ\mathcal{M}(x)-f\circ\mathcal{M}(y)}{\mathfrak{g}(x)-\mathfrak{g}(y)}\right|=\sup_{x,y\in \mathbb{R}}
    \left|\frac{f\circ\mathcal{M}(x)-f\circ\mathcal{M}(y)}{(x-i)^{-1}-(y-i)^{-1}}\right|\\&\quad=\sup_{x,y\in \mathbb{R}}
    \left|\frac{f\circ\mathcal{M}(x)-f\circ\mathcal{M}(y)}{x-y}\right||x-i||y-i|=\sup_{x,y\in \mathbb{R}}
    \left|\frac{f\circ\mathcal{M}(x)-f\circ\mathcal{M}(y)}{x-y}\right|\sqrt{1+x^2}\sqrt{1+y^2}\\&\quad=\left\|f\circ
    \mathcal{M}\right\|_{\mathcal{L}_\omega}
    \end{split}
\end{equation*}
hence the statement follows.
\end{proof}

The preceding proposition shows why the following lemma is useful for us.

\begin{lemma} \cite[Lemma 5.3]{realope} \label{lem:density1}
Let $ f\in C_c^1(\bbR)$ $($the space of continuously differentiable function with compact support in $\mathbb{R})$. For any $\eps>0$, there exists $N\in
\bbN, c_1,\ldots,c_N\in\bbR$ and $\widetilde{\eta_1},\ldots,\widetilde{\eta_N}\in \bbH_+= \{\Im \eta>0\}$ such that
$$\left \|f(x)-\sum_{j=1}^N c_j \Im \frac{1}{x-\widetilde{\eta_j}} \right \|_{\mathcal L_w} <\eps.$$
\end{lemma}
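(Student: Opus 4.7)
Write $\widetilde{\eta}=a+ib$ with $b>0$, so that
$$
\Im\frac{1}{x-\widetilde{\eta}}=\frac{b}{(x-a)^2+b^2}=\pi\,P_b(x-a),
$$
i.e.\ (up to a factor of $\pi$) the Poisson kernel for the upper half-plane. The plan is therefore a two-stage approximation: first mollify $f$ by convolution with a Poisson kernel, producing a smooth function $f_b$ close to $f$ in $\mathcal L_w$; then discretize the Poisson integral representation of $f_b$ as a finite Riemann sum, which is automatically a linear combination of functions of the form $\Im\tfrac{1}{x-\widetilde{\eta}}$.

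For the mollification step, I set $f_b(x)=\tfrac{1}{\pi}\int_{\mathbb R} f(y)P_b(x-y)\,dy$ for small $b>0$. Since $f\in C_c^1(\mathbb R)$, standard mollifier arguments give $f_b\to f$ uniformly together with $f_b'\to f'$ uniformly on every compact set containing $\supp f$. To upgrade this to convergence in the $\mathcal L_w$ norm I would split the defining supremum over $(x,y)\in\mathbb R^2$ into three regions: a compact window around $\supp f$, the diagonal strip $|x-y|\leq 1$ outside that window, and the off-diagonal far field. On the first, $C^1$-convergence on compacta does the job; on the second, one estimates $|f_b'(x)-f'(x)|$ weighted by $(1+x^2)$, using that $f$ is compactly supported and that $\int(1+x^2)|f_b'(x)-f'(x)|\,dx\to 0$; on the third, both $|f(x)-f(y)|/|x-y|$ and $|f_b(x)-f_b(y)|/|x-y|$ already decay fast enough in $(x,y)$ — the former is zero outside $\supp f\times\supp f$, and $f_b$ inherits decay of order $1/x^2$ at infinity — to absorb the weight.

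For the discretization step, fix $b$ as provided by Step 1 and note that the integral for $f_b$ is supported in $[-R,R]$ where $\supp f\subset [-R,R]$. Partition $[-R,R]$ into $N$ subintervals of width $\Delta y=2R/N$ and form
$$
S_N(x)=\frac{1}{\pi}\sum_{j=0}^{N-1} f(y_j)\,P_b(x-y_j)\,\Delta y
=\sum_{j=0}^{N-1}c_j\,\Im\frac{1}{x-\widetilde{\eta_j}},
$$
with $\widetilde{\eta_j}=y_j+ib\in\mathbb H_+$ and $c_j=f(y_j)\Delta y/\pi^2$. Differentiating under the sum, $f_b(x)-S_N(x)$ and its derivative in $x$ are controlled by the standard Riemann sum error for the smooth, compactly-integrated integrand $f(y)P_b(x-y)$ and $f(y)\partial_x P_b(x-y)$, both of which decay (in $x$) like $1/x^2$ and $1/x^3$ respectively uniformly for $y\in[-R,R]$. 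Hence taking $N$ large enough forces $\|f_b-S_N\|_{\mathcal L_w}$ below any preset tolerance, again by splitting the supremum exactly as before. Combining the two steps with $\varepsilon/2$ tolerances completes the proof.

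The main obstacle is the weighted nature of the norm: ordinary $C^1$ density would be essentially routine, but the weight $\sqrt{1+x^2}\sqrt{1+y^2}$ forces us to quantify how the approximations behave jointly for $x,y$ large. The saving grace is that $f$ has compact support, so in every region where one of $x,y$ is far from $\supp f$ either $f$ itself or its Poisson-mollified and discretized versions are small, and their differences decay polynomially — fast enough to beat the $\sqrt{1+x^2}\sqrt{1+y^2}$ weight against the $1/|x-y|$ factor. All estimates needed are elementary once the three-region decomposition is in place.
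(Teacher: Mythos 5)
Your strategy — mollify $f$ by convolving with a Poisson kernel, so that $f_b$ is a continuum superposition of the functions $\Im\frac{1}{x-\widetilde{\eta}}$, and then discretize the Poisson integral by a Riemann sum — is exactly the argument behind \cite[Lemma 5.3]{realope}, and your three-region decomposition, playing the weight $\sqrt{1+x^2}\sqrt{1+y^2}/|x-y|$ against the compact support of $f$ and the $O(1/|x|^2)$, $O(1/|x|^3)$ far-field decay of $f_b$, $f_b'$ and of the Riemann-sum error, is the right way to close the estimates. The one misstep is the appeal to $\int(1+x^2)|f_b'(x)-f'(x)|\,dx\to 0$ in the intermediate region: an $L^1$ quantity does not control the supremum defining the $\mathcal{L}_\omega$ norm, but what is actually needed there — the pointwise bound $|f_b'(\xi)|=O\left(b/|\xi|^3\right)$ for $\xi$ far from $\supp f$, combined with $f\equiv 0$ there — is elementary and does not change the conclusion.
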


Since $\mathcal{M}$ induces a bijection between $C_c^1(\mathbb{R})$ and $C_c^1(-\pi,\pi)$ (viewed as continuously differentiable functions with compact support in $\partial \mathbb{D}\setminus \{-1\}$), this lemma together with \eqref{eq:diff-of-momentgen} and Proposition \ref{prop:varlomega1} almost allow us to extend our results from Sections 3 and 4 to $C_c^1(-\pi,\pi)$. The missing ingredient is boundedness of $\Var X_{\widetilde{\mathfrak{h}},n,\gamma}^{\theta_0}$.  This is the content of the next

\begin{prop} \label{prop:BoundedVariance}
 Let $\mu_0$ be a probability measure on the unit circle. Let $0<\gamma<1$ and $\theta_0 \in \text{supp}(\mu_0)$ be such that there exists a neighborhood $\theta_0 \in I$ on which the following two conditions are
 satisfied: \\
(i) $\mu_0$ restricted to $I$ is absolutely continuous with respect to Lebesgue measure and its Radon-Nikodym derivative is bounded there.  \\
(ii) The orthonormal polynomials for $\mu_0$ are uniformly bounded on $I$. \\
Finally, as above, let $\mathfrak{g}(x)=(x-i)^{-1}$ and let $\mathfrak{h} = \mathfrak{g}\circ \mathcal{M}^{-1}$. Then $\Var X_{\widetilde{\mathfrak{h}},n,\gamma}^{\theta_0} $ is bounded.
\end{prop}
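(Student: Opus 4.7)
My plan is to reduce the variance of $X_{\widetilde{\mathfrak{h}},n,\gamma}^{\theta_0}$ to a single Poisson integral against $|\varphi_n|^2\,d\mu_0$ and then bound that integral exactly as in the proof of Theorem \ref{thm:UniversalityPoissonGeneral}.

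The first step is to rewrite $\widetilde{\mathfrak{h}}$ in an algebraically convenient form. Using $\mathcal{M}^{-1}(e^{i\theta}) = -i(e^{i\theta}-1)/(e^{i\theta}+1)$ one finds
\[
\widetilde{\mathfrak{h}}\bigl(e^{i(\theta-\theta_0)}\bigr) = \frac{i}{n^\gamma+1} + \frac{c_n\, e^{i\theta_0}}{e^{i\theta}-\widetilde{\omega}_n},
\]
where $\omega_n := (n^\gamma-1)/(n^\gamma+1)$, $\widetilde\omega_n := \omega_n e^{i\theta_0}$, and $c_n := 2in^\gamma/(n^\gamma+1)^2$. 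Combining the two evaluations into a single fraction yields a factor $(e^{i\phi}-e^{i\theta})$ in the numerator, and the identities $1-|\omega_n|^2 = 4n^\gamma/(n^\gamma+1)^2$ and $|c_n|^2 = (1-|\omega_n|^2)^2/4$ give
\[
\bigl|\widetilde{\mathfrak{h}}(e^{i(\theta-\theta_0)}) - \widetilde{\mathfrak{h}}(e^{i(\phi-\theta_0)})\bigr|^2 = \tfrac{1}{4}\,P_{\widetilde\omega_n}(\theta)\,P_{\widetilde\omega_n}(\phi)\,|e^{i\theta}-e^{i\phi}|^2,
\]
with $P_{\widetilde\omega_n}(\theta) = (1-|\omega_n|^2)/|e^{i\theta}-\widetilde\omega_n|^2$ the Poisson kernel at $\widetilde\omega_n$.

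Next, I would substitute this into the variance formula \eqref{eq:varianceformula} and apply the Christoffel-Darboux identity \eqref{eq:CDFormula}. The factor $|e^{i\theta}-e^{i\phi}|^2$ cancels the denominator appearing in $|K_n(\theta,\phi)|^2\,|e^{i\theta}-e^{i\phi}|^2 = |\varphi_n^*(e^{i\theta})\overline{\varphi_n^*(e^{i\phi})} - \varphi_n(e^{i\theta})\overline{\varphi_n(e^{i\phi})}|^2$. Using $|\varphi_n^*(e^{i\theta})| = |\varphi_n(e^{i\theta})|$ for real $\theta$, the elementary bound
\[
\bigl|\varphi_n^*(e^{i\theta})\overline{\varphi_n^*(e^{i\phi})} - \varphi_n(e^{i\theta})\overline{\varphi_n(e^{i\phi})}\bigr|^2 \le 4\,|\varphi_n(e^{i\theta})|^2\,|\varphi_n(e^{i\phi})|^2
\]
factorizes the double integral and yields
\[
\Var X_{\widetilde{\mathfrak{h}},n,\gamma}^{\theta_0} \le \tfrac{1}{2}\left(\int_{\partial\mathbb{D}} P_{\widetilde\omega_n}(\theta)\,|\varphi_n(e^{i\theta})|^2\,d\mu_0(\theta)\right)^{\!2}.
\]

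Finally, I would bound the remaining integral by splitting over $I$ and $I^c$, which is precisely the argument used to prove \eqref{eq:measure-condition} in Theorem \ref{thm:UniversalityPoissonGeneral}. On $I$, condition $(ii)$ yields $|\varphi_n|^2 \le C$ uniformly in $n$, and condition $(i)$ gives a bounded Radon-Nikodym derivative for $\mu_0$, so $\int_I P_{\widetilde\omega_n}\,|\varphi_n|^2\,d\mu_0 \le C'\int_I P_{\widetilde\omega_n}(\theta)\,d\theta \le 2\pi C'$. On $I^c$, since $\widetilde\omega_n \to e^{i\theta_0}\in I^\circ$, we have $|e^{i\theta}-\widetilde\omega_n| \ge c > 0$ uniformly for $\theta \in I^c$ and all sufficiently large $n$, so $P_{\widetilde\omega_n}|_{I^c} = O(n^{-\gamma})$; combined with $\int|\varphi_n|^2\,d\mu_0 = 1$ this contributes $O(n^{-\gamma})$. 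The only nontrivial step is the algebraic identification of $|\widetilde{\mathfrak{h}}(e^{i\theta})-\widetilde{\mathfrak{h}}(e^{i\phi})|^2$ as a product of Poisson kernels times $|e^{i\theta}-e^{i\phi}|^2$; once that is in place, the rest is routine.
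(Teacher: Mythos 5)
Your proposal is correct, and it is a genuine simplification of the paper's argument. Both proofs start from the same algebraic identity
\[
\bigl|\widetilde{\mathfrak{h}}(e^{i(\theta-\theta_0)})-\widetilde{\mathfrak{h}}(e^{i(\phi-\theta_0)})\bigr|^2
=\tfrac14 P_{\widetilde\omega_n}(\theta)\,P_{\widetilde\omega_n}(\phi)\,|e^{i\theta}-e^{i\phi}|^2,
\]
so that the Christoffel--Darboux denominator cancels (the paper writes this as the identity $\bigl(\tfrac{\sin\theta}{1+\cos\theta}-\tfrac{\sin\phi}{1+\cos\phi}\bigr)^2\tfrac{(1+\cos\theta)(1+\cos\phi)}{|e^{i\theta}-e^{i\phi}|^2}=1$, which is the same cancellation in trigonometric form). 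Where you diverge is the treatment of the CD numerator: the paper expands the square exactly, obtaining
\[
2\langle\varphi_{n+1},\,P\,\varphi_{n+1}\rangle^2-2\bigl|\langle\varphi_{n+1},\,P\,\varphi_{n+1}^*\rangle\bigr|^2,
\]
and then identifies the cross term $\langle\varphi_{n+1},\,P\,\varphi_{n+1}^*\rangle$ as a finite linear combination of entries of $\Re\,F_0(\omega_n')$ near the diagonal, using the explicit $\chi_j\!\leftrightarrow\!\varphi_j$ conversion identities and the five-diagonal structure of the CMV matrix. You instead throw away the cross term via the elementary estimate $|A-B|^2\le 2|A|^2+2|B|^2$ together with $|\varphi_n^*|=|\varphi_n|$ on $\partial\mathbb D$, which factorizes the double integral into $\tfrac12\bigl(\int P_{\widetilde\omega_n}|\varphi_n|^2\,d\mu_0\bigr)^2$; the single Poisson integral is then bounded by the $I$/$I^c$ decomposition exactly as in the proof of \eqref{eq:measure-condition}. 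The trade-off: the paper's version gives an exact expression for $\Var X_{\widetilde{\mathfrak{h}},n,\gamma}^{\theta_0}$ (which is not needed here, since only an upper bound is required), whereas yours gives only the bound but avoids the CMV bookkeeping entirely and is shorter. Both are valid; yours is the more economical route for the statement as posed. One minor point worth making explicit in a write-up is that the factor $(1-\cos(\theta_0-\theta))$ in the denominator on $I^c$ is bounded away from zero uniformly in $n$ because $\theta_0\in I^\circ$ and $|\omega_n|\to 1$, which is precisely the paper's observation in the $I^c$ branch of \eqref{eq:boundednessgfunction1}.
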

\begin{proof}
We assume without loss of generality that $\theta_0=0$. We note that
\begin{equation}
    \begin{split}
     &\left|\mathfrak{g}(n^\gamma\mathcal{M}^{-1}(e^{i\theta}))-\mathfrak{g}(n^\gamma\mathcal{M}^{-1}(e^{i\phi}))\right|^2\\
&= \left({\mathcal{M}}^{-1}(e^{i\theta})-{\mathcal{M}}^{-1}(e^{i\phi})\right)^2\Im  \frac{1}{{M}^{-1}(e^{i\theta})-{\rm i} /n^\gamma} \Im
\frac{1}{{M}^{-1}(e^{i\phi})-{\rm i} /n^\gamma}\\
&=\left(\frac{\sin\left(\theta\right)}{1+\cos\left(\theta\right)}-\frac{\sin\left(\phi\right)}{1+\cos\left(\phi\right)}\right)^2\left(1+\cos\left(\theta\right)\right)\left(1+\cos\left(\phi\right)\right)\\
&\quad \times
\text{Re}\left(\frac{e^{i\theta}+\left(1-\frac{2}{n^\gamma+1}\right)}{e^{i\theta}-\left(1-\frac{2}{n^\gamma+1}\right)}\right)\text{Re}\left(\frac{e^{i\phi}+\left(1-\frac{2}{n^\gamma+1}\right)}{e^{i\phi}-\left(1-\frac{2}{n^\gamma+1}\right)}\right)
\end{split}
\end{equation}
Using \eqref{eq:varianceformula} and \eqref{eq:CDFormula} we have
\beq
\begin{split}
&\Var X_{\widetilde{\mathfrak{h}},n,\gamma}^{\theta_0} \\
&=\iint_{\partial\mathbb{D}^2}\left(\frac{\sin\left(\theta\right)}{1+\cos\left(\theta\right)}-\frac{\sin\left(\phi\right)}{1+\cos\left(\phi\right)}\right)^2\frac{\left(1+\cos\left(\theta\right)\right)\left(1+\cos\left(\phi\right)\right)}{\left|e^{i\theta}-e^{i\phi}\right|^2}\\
&\quad \times
\text{Re}\left(\frac{e^{i\theta}+\left(1-\frac{2}{n^\gamma+1}\right)}{e^{i\theta}-\left(1-\frac{2}{n^\gamma+1}\right)}\right)\text{Re}\left(\frac{e^{i\phi}+\left(1-\frac{2}{n^\gamma+1}\right)}{e^{i\phi}-\left(1-\frac{2}{n^\gamma+1}\right)}\right)\\
&\quad \times
\left|\overline{\varphi_{n+1}^*(\theta)}\varphi_{n+1}^*(\phi)-\overline{\varphi_{n+1}(\theta)}\varphi_{n+1}(\phi)\right|^2d\mu_0(\theta
)d\mu_0(\phi) \\
&=\iint_{\partial\mathbb{D}^2}\text{Re}\left(\frac{e^{i\theta}+\left(1-\frac{2}{n^\gamma+1}\right)}{e^{i\theta}-\left(1-\frac{2}{n^\gamma+1}\right)}\right)\text{Re}\left(\frac{e^{i\phi}+\left(1-\frac{2}{n^\gamma+1}\right)}{e^{i\phi}-\left(1-\frac{2}{n^\gamma+1}\right)}\right)\\
&\quad \times
\left|\overline{\varphi_{n+1}^*(\theta)}\varphi_{n+1}^*(\phi)-\overline{\varphi_{n+1}(\theta)}\varphi_{n+1}(\phi)\right|^2d\mu_0(\theta
)d\mu_0(\phi) 
\end{split}
\eeq
since
$$\left(\frac{\sin\theta }{1+\cos \theta }-\frac{\sin\phi}{1+\cos\phi}\right)^2\frac{\left(1+\cos\theta \right)\left(1+\cos \phi \right)}{\left|e^{i\theta}-e^{i\phi}\right|^2}=1.$$

By expanding the square and writing $\omega'_n=\omega'_n(\gamma)=\left(1-\frac{2}{n^\gamma+1}\right)$ we get 
\beq \label{eq:VarianceBound1}
\begin{split}
&\iint_{\partial\mathbb{D}^2}\text{Re}\left(\frac{e^{i\theta}+\omega'_n}{e^{i\theta}-\omega'_n}\right)\text{Re}\left(\frac{e^{i\phi}+\omega'_n}{e^{i\phi}-\omega'_n}\right)\left|\overline{\varphi_{n+1}^*(\theta)}\varphi_{n+1}^*(\phi)-\overline{\varphi_{n+1}(\theta)}\varphi_{n+1}(\phi)\right|^2d\mu_0(\theta
)d\mu_0(\phi)\\
&=2\int_{\partial\mathbb{D}}\text{Re}\left(\frac{e^{i\theta}+\omega'_n}{e^{i\theta}-\omega'_n}\right)\left|\varphi_{n+1}(\theta)\right|^2d\mu_0(\theta)
\int_{\partial\mathbb{D}}\text{Re}\left(\frac{e^{i\phi}+\omega'_n}{e^{i\phi}-\omega'_n}\right)\left|\varphi_{n+1}(\phi)\right|^2d\mu_0(\phi)\\
&\quad- \int_{\partial\mathbb{D}}\text{Re}\left(\frac{e^{i\theta}+\omega'_n}{e^{i\theta}-\omega'_n}\right)\overline{\varphi_{n+1}(\theta)}\varphi_{n+1}^*(\theta)d\mu_0(\theta)
 \int_{\partial\mathbb{D}}\text{Re}\left(\frac{e^{i\phi}+\omega'_n}{e^{i\phi}-\omega'_n}\right)\varphi_{n+1}(\phi)\overline{\varphi_{n+1}^*(\phi)}d\mu_0(\phi)\\
&\quad-\int_{\partial\mathbb{D}}\text{Re}\left(\frac{e^{i\theta}+\omega'_n}{e^{i\theta}-\omega'_n}\right)\overline{\varphi_{n+1}^*(\theta)}\varphi_{n+1}(\theta)d\mu_0(\theta)
\int_{\partial\mathbb{D}}\text{Re}\left(\frac{e^{i\phi}+\omega'_n}{e^{i\phi}-\omega'_n}\right)\varphi_{n+1}^*(\phi)\overline{\varphi_{n+1}(\phi)}d\mu_0(\phi)\\
&=2\left\langle\varphi_{n+1},\text{Re}\left(\frac{e^{i\phi}+\omega'_n}{e^{i\phi}-\omega'_n}\right)
\varphi_{n+1}\right\rangle^2-2\left|\left\langle\varphi_{n+1},\text{Re}\left(\frac{e^{i\phi}+\omega'_n}{e^{i\phi}-\omega'_n}\right)
\varphi^*_{n+1}\right\rangle\right|^2.
\end{split}
\eeq
We use the following identities which hold on $\partial\mathbb{D}$ (see \cite[(2.18)(2.19),(2.20)]{cmv5years})
\beq  \label{eq:cmvbasisopuc}
\begin{split}
   &\qquad\varphi_{2k}(\theta) =e^{ik\theta}\overline{\chi_{2k}(\theta)}\qquad  \varphi_{2k-1}(\theta)=e^{i(k-1)\theta}\chi_{2k-1}(\theta)\\
   &\qquad\varphi_{2k}^*(\theta) =e^{ik\theta}{\chi_{2k}(\theta)}\qquad  \varphi_{2k-1}^*(\theta)=e^{ik\theta}\overline{\chi_{2k-1}(\theta)}
\end{split}
\eeq
to get
\beq\begin{split}
&\left\langle\varphi_{n+1},\text{Re}\left(\frac{e^{i\phi}+\omega'_n}{e^{i\phi}-\omega'_n}\right)
\varphi_{n+1}\right\rangle^2=\left\langle
\chi_{n+1},\text{Re}\left(\frac{e^{i\phi}+\omega'_n}{e^{i\phi}-\omega'_n}\right)
\chi_{n+1}\right\rangle^2\\&=\left(\left[\text{Re}\:F_0(\omega'_n)\right]_{n+1,n+1}\right)^2
\end{split}\eeq
where $F_0(\omega'_n)=\left( \mathcal{C}_0+\omega'_n \right)\left(\mathcal{C}_0-\omega'_n \right)^{-1}$ and $\mathcal{C}_0$ is CMV matrix for $\mu_0$. By the same argument as the one in the proof of Theorem \ref{thm:UniversalityPoissonGeneral} (showing that \eqref{eq:boundednessgfunction1} is bounded) we thus see that the first summand in \eqref{eq:VarianceBound1} is bounded.

In order to treat the second summand we employ the following two identities
$$\chi_{2k}=-\alpha_{2k-1}\overline{\chi_{2k}}+\rho_{2k-1}\overline{\chi_{2k-1}}\quad
\chi_{2k-1}=\overline{\alpha_{2k}}\overline{\chi_{2k}}+{\rho_{2k}}\overline{\chi_{2k-1}}$$ 
which hold on the unit circle. They imply for $n=2k$:
\beq
\begin{split}
&\left\langle\varphi_{n},\text{Re}\left(\frac{e^{i\phi}+\omega'_n}{e^{i\phi}-\omega'_n}\right)
\varphi^*_{n}\right\rangle=\left\langle\overline{\chi_{2k}},\text{Re}\left(\frac{e^{i\phi}+\omega'_n}{e^{i\phi}-\omega'_n}\right)
\chi_{2k}\right\rangle\\
&=-\overline{\alpha_{2k-1}}\left\langle{\chi_{2k}},\text{Re}\left(\frac{e^{i\phi}+\omega'_n}{e^{i\phi}-\omega'_n}\right)
\chi_{2k}\right\rangle+\rho_{2k-1}\left\langle{\chi_{2k-1}},\text{Re}\left(\frac{e^{i\phi}+\omega'_n}{e^{i\phi}-\omega'_n}\right)
\chi_{2k}\right\rangle\\
&=-\overline{\alpha_{2k-1}}\left(\left[\text{Re}\:F_0(\omega'_n)\right]_{n,n}\right)+\rho_{2k-1}\left(\left[\text{Re}\:F_0(\omega'_n)\right]_{n-1,n}\right),
\end{split}
\eeq
and for $n=2k-1$:
\beq\begin{split}
   &\left\langle\varphi_{n},\text{Re}\left(\frac{e^{i\phi}+\omega'_n}{e^{i\phi}-\omega'_n}\right)
   \varphi^*_{n}\right\rangle=
   \left\langle{\chi_{2k-1}},\text{Re}\left(\frac{e^{i\phi}+\omega'_n}{e^{i\phi}-\omega'_n}\right)
   e^{i\phi}\overline{\chi_{2k-1}}\right\rangle\\
   &=\alpha_{2k}\left\langle{\chi_{2k-1}},\text{Re}\left(\frac{e^{i\phi}+\omega'_n}{e^{i\phi}-\omega'_n}\right)
   e^{i\phi}{\chi_{2k}}\right\rangle+\rho_{2k}\left\langle{\chi_{2k-1}},\text{Re}\left(\frac{e^{i\phi}+\omega'_n}{e^{i\phi}-\omega'_n}\right)
   e^{i\phi}{\chi_{2k-1}}\right\rangle\\
   &={\alpha_{2k}}\left(\left[\mathcal{C}_0\text{Re}\:F_0(\omega'_n)\right]_{n,n+1}\right)+\rho_{2k}\left(\left[\mathcal{C}_0\text{Re}\:F_0(\omega'_n)\right]_{n,n}\right).
\end{split}
\eeq

We recall that $\mathcal{C}_0$ has a 5-diagonal structure, thus $\left[\mathcal{C}_0\text{Re}\:F_0(\omega'_n)\right]_{n,n+1}$ and
$\left[\mathcal{C}_0\text{Re}\:F_0(\omega'_n)\right]_{n,n}$ are linear combinations of elements from
$$\left\{\left[\text{Re}\:F_0(\omega'_n)\right]_{n-1,n+1},\ldots,\left[\text{Re}\:F_0(\omega'_n)\right]_{n+3,n+1}\right\}$$ and
$$\left\{\left[\text{Re}\:F_0(\omega'_n)\right]_{n-2,n},\ldots,\left[\text{Re}\:F_0(\omega'_n)\right]_{n+2,n}\right\}$$ respectively, with
coefficients bounded by 1. The conditions of the theorem imply, by repeating the argument as in the proof of Theorem \ref{thm:UniversalityPoissonGeneral} that indeed the second summand is bounded as well, which proves the proposition.
\end{proof}

\begin{prop} \label{cor:contvar}
Let $\mu,\mu_0$ be measures satisfying the assumptions of \ref{thm:UniversalityPoissonIntro} for some $\theta_0 \in [0,2\pi)$ and $0<\beta<1$. 
Let, as above, $\mathfrak{g}(x)=\frac{1}{x-i}$ and $\mathfrak{h}=\mathfrak{g}\circ\mathcal{M}^{-1}$. Then for any $0<\gamma<\beta$, $\Var X_{\widetilde{\mathfrak{h}},n,\gamma}^{\theta_0}$ (with respect to $\mu$) is bounded.
Moreover, there exists $c(\mu)>0$ such that for any $f \in C_c^1((-\pi,\pi))$, 
\beq \label{eq:BoundedVarianceCor}
\Var X_{\widetilde{f},n,\gamma}^{\theta_0}\leq c(\mu) \left\|f\circ
\mathcal{M}\right\|^2_{\mathcal{L}_\omega}.
\eeq
\end{prop}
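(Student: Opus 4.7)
The inequality \eqref{eq:BoundedVarianceCor} is immediate from Proposition \ref{prop:varlomega1} with $c(\mu) := \sup_n \Var X_{\widetilde{\mathfrak{h}},n,\gamma}^{\theta_0}$, so the whole proposition reduces to proving that $\Var X_{\widetilde{\mathfrak{h}},n,\gamma}^{\theta_0}$ is bounded in $n$ under $\mu$. My plan is to rerun the derivation of Proposition \ref{prop:BoundedVariance} for the measure $\mu$ in place of $\mu_0$ and then transfer the resulting scalar bound from $\mu_0$ to $\mu$ using the Verblunsky perturbation hypothesis. Concretely, the identity $|\mathfrak{g}(a)-\mathfrak{g}(b)|^2 = (a-b)^2 \Im\mathfrak{g}(a)\Im\mathfrak{g}(b)$ and the Christoffel--Darboux formula hold for any probability measure and give
\[
\Var X_{\widetilde{\mathfrak{h}},n,\gamma}^{\theta_0} = 2 A_n^2 - 2|B_n|^2 \le 2 A_n^2,
\]
with $A_n = [\Re F(\omega_n')]_{n+1,n+1}$, $\omega_n' = (1 - 2/(n^\gamma+1))e^{i\theta_0}$, $F(\omega)=(\mathcal{C}+\omega)(\mathcal{C}-\omega)^{-1}$, $\mathcal{G}(\omega)=(\mathcal{C}-\omega)^{-1}$, and $\mathcal{C}$ the CMV matrix of $\mu$; moreover $A_n = (1-|\omega_n'|^2)\sum_k |\mathcal{G}(\omega_n')_{n+1,k}|^2$. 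It therefore suffices to bound $A_n$, and the analogous quantity $A_n^0$ for $\mu_0$ is already bounded by Proposition \ref{prop:BoundedVariance}.

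The transfer step uses the first resolvent identity at the $(n+1,n+1)$ entry:
\[
A_n - A_n^0 = 2\Re\bigg[\omega_n' \sum_{j,k}\mathcal{G}(\omega_n')_{n+1,j}(\mathcal{C}_0-\mathcal{C})_{jk}\mathcal{G}_0(\omega_n')_{k,n+1}\bigg].
\]
Since the CMV matrices are $5$-diagonal and $|(\mathcal{C}-\mathcal{C}_0)_{jk}| = O(j^{-\beta})$ on their nonzero diagonals, and since the Combes--Thomas estimate \eqref{eq:comb-tom} localises the effective sum to indices with $|j-(n+1)| \lesssim n^\gamma$ on which $j^{-\beta}\sim n^{-\beta}$, a weighted Cauchy--Schwarz combined with the identity $\sum_k|\mathcal{G}(\omega_n')_{n+1,k}|^2 = A_n/(1-|\omega_n'|^2) \sim n^\gamma A_n/4$ (and its $\mu_0$-analogue, which is $O(n^\gamma)$ by boundedness of $A_n^0$) yields
\[
|A_n - A_n^0| \le C\, n^{\gamma-\beta}(A_n A_n^0)^{1/2}.
\]
Because $A_n^0$ is bounded and $\gamma<\beta$, treating this as a quadratic inequality in $A_n^{1/2}$ forces $A_n \le A_n^0 + o(1)$, so $A_n$ is uniformly bounded. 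Combined with the variance identity above this bounds $\Var X_{\widetilde{\mathfrak{h}},n,\gamma}^{\theta_0}$, and Proposition \ref{prop:varlomega1} then gives \eqref{eq:BoundedVarianceCor}.

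I expect the main obstacle to be the sharpness of the last displayed estimate. A naive argument using only the operator-norm bound $\|\mathcal{G}(\omega_n')\|_\infty = O(n^\gamma)$ in place of the $\sqrt{A_n}$ factor would give $|A_n - A_n^0| = O(n^{2\gamma-\beta})$, closing the bootstrap only when $\gamma<\beta/2$. Obtaining the full range $\gamma<\beta$ requires keeping the unknown $A_n$ itself inside Cauchy--Schwarz via the identity $\sum_k|\mathcal{G}(\omega_n')_{n+1,k}|^2 = A_n/(1-|\omega_n'|^2)$, and then solving the resulting quadratic in $A_n^{1/2}$ self-consistently, using the boundedness of $A_n^0$ only on the $\mathcal{G}_0$ side.
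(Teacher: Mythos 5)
Your proposal is correct and reaches the conclusion by the same basic route as the paper --- transfer the a~priori bound on $A_n^0=[\Re F_0(\omega'_n)]_{n+1,n+1}$ supplied by Proposition~\ref{prop:BoundedVariance} to the $\mu$-quantity $A_n=[\Re F(\omega'_n)]_{n+1,n+1}$ using the CMV resolvent identity \eqref{eq:resolvent-formula}, the Combes--Thomas estimate \eqref{eq:comb-tom}, and the exact identity $\sum_k|\mathcal{G}_{kj}|^2=[\Re F]_{jj}/(1-|\omega'_n|^2)$ --- but the way you close the estimate is genuinely different. The paper defers this step to a modification of \cite[Proposition~3.5]{realope}; the device there (visible also in the proof of Proposition~\ref{prop:CMV-perturbation}) is to apply the resolvent identity a \emph{second} time, $\mathcal{G}=\mathcal{G}_0(I+(\mathcal{C}_0-\mathcal{C})\mathcal{G})$, so that both $\ell^2$ factors in the ensuing Cauchy--Schwarz step are $\mathcal{G}_0$-factors, each controlled a~priori by $A_n^0$, while the correction $I+(\mathcal{C}_0-\mathcal{C})\mathcal{G}$ is bounded in operator norm via $\|\mathcal{G}\|_\infty=\mathcal{O}(n^\gamma)$ together with $\|(\mathcal{C}-\mathcal{C}_0)\widetilde{P}\|_\infty=o(n^{-\beta'})$ for some $\gamma<\beta'<\beta$. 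You instead apply the resolvent identity once, retain the mixed factor $\sqrt{A_nA_n^0}$, and close the bound $|A_n-A_n^0|\le Cn^{\gamma-\beta}\sqrt{A_nA_n^0}$ by a self-consistent quadratic inequality in $\sqrt{A_n}$. Both routes cover the full range $\gamma<\beta$, and you correctly flag that the naive estimate keeping only operator-norm bounds on the two resolvents closes only for $\gamma<\beta/2$. Your bootstrap makes the argument self-contained and avoids the second resolvent insertion, at the modest cost of the self-consistency step. The further observation that $\Var X^{\theta_0}_{\widetilde{\mathfrak{h}},n,\gamma}=2A_n^2-2|B_n|^2\le 2A_n^2$, which reduces the whole task to bounding the single diagonal entry $A_n$, is also a small simplification over the paper, which separately controls the fixed-width block of entries of $\Re F$ contributing to $B_n$.
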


\begin{proof}
Let $\mathcal{C}_0$ and $\mathcal{C}$ be the CMV matrices corresponding to $\mu_0$ and $\mu$ respectively. For $\omega'_n=\left(1-\frac{2}{n^\gamma+1}\right)e^{i\theta_0}$, let as above $F_0(\omega'_n)=\left( \mathcal{C}_0+\omega'_n \right)\left(\mathcal{C}_0-\omega'_n \right)^{-1}$ and $F(\omega'_n)=\left( \mathcal{C}+\omega'_n \right)\left(\mathcal{C}-\omega'_n \right)^{-1}$.

As shown in the proof of Proposition \ref{prop:BoundedVariance}, $\Var X_{\widetilde{\mathfrak{h}},n,\gamma}^{\theta_0}$ is a linear combination of elements from
$$\left\{\left[{\Psi}_{n,\alpha,2}(\mathcal{C})\right]_{n-2,n},\ldots,\left[{\Psi}_{n,\alpha,2}(\mathcal{C})\right]_{n+2,n},\left[{\Psi}_{n,\alpha,2}(\mathcal{C})\right]_{n-1,n+1},\ldots,\left[{\Psi}_{n,\alpha,2}(\mathcal{C})\right]_{n+3,n+1}\right\}$$ with coefficients from the fixed size block around the $(n,n)$ entry of the corresponding CMV matrix (where $\Psi_n(e^{i\theta})=\text{Re}\left(\frac{e^{i\theta}+\omega'_n}{e^{i\theta}-\omega'_n}\right)$). Thus, since we know that $\alpha_n-\alpha_n^{0}=O(n^{-\beta})$, in order to prove the boundedness of the variance it suffices to show that 
$$\left[\text{Re}\:F_0(\omega'))\right]_{n+j,n+k}-\left[\text{Re}\:F(\omega')\right]_{n+j,n+k}\to 0$$ as $n\to \infty$ for any $j,k \in \mathbb{Z}$. But the proof of this is a straightforward modification of \cite[Proposition 3.5]{realope}, with $\mathcal{C}$ instead of $J$ and \eqref{eq:resolvent-formula} playing the role of the resolvent formula.

Now \eqref{eq:BoundedVarianceCor} follows from Proposition \ref{prop:varlomega1}.
\end{proof}

We are ready to prove Theorems \ref{thm:main-result} and \ref{thm:CLTgeneral}.
\begin{proof}[Proof of Theorems  \ref{thm:main-result} and \ref{thm:CLTgeneral}]
The argument is the same as the one given in \cite[Subsection 5.2]{realope}. Having set up the ingredients for the case of $\partial \mathbb{D}$ we see no need to repeat it. We just give a short sketch. 

The idea behind the proof of Theorem \ref{thm:main-result} is to replace $f\in C_c^1((\pi,\pi))$ with $g(x)=\sum_{j=1}^k c_j \Im \frac{1}{x-\widetilde{\eta}_j}$ for an appropriate choice of $c_j$'s and $\widetilde{\eta}_j$'s, leading to an $\varepsilon$ error in the $\mathcal{L}_\omega$ norm, by Lemma \ref{lem:density1}. Since this norm controls the variance, by \eqref{eq:diff-of-momentgen1} together with Proposition \ref{cor:contvar} it is now possible to control  
$$
 \left|\bbE \left[{\rm e}^{ t \left(X_{\widetilde{f},n,\gamma}^{\theta_0} - \bbE X_{\widetilde{f},n,\gamma}^{\theta_0} \right)} \right]-\bbE_0 \left[{\rm
e}^{ t \left( X_{\widetilde{f},n,\gamma}^{\theta_0} - \bbE_0 X_{\widetilde{f},n,\gamma}^{\theta_0}\right)} \right]\right|
$$
by
$$
\left|\bbE \left[{\rm e}^{ t\left(X_{\widetilde{g\circ\mathcal{M}^{-1}},n,\gamma}^{\theta_0} - \bbE X_{\widetilde{g\circ\mathcal{M}^{-1}},n,\gamma}^{\theta_0}\right)}
\right]-\bbE_0 \left[{\rm e}^{ t \left(X_{\widetilde{g\circ\mathcal{M}^{-1}},n,\gamma}^{\theta_0} - \bbE_0
X_{\widetilde{g\circ\mathcal{M}^{-1}},n,\gamma}^{\theta_0}\right)} \right]\right|+2\varepsilon.
$$
But we know this goes to zero by the discussion after the proof of Lemma \ref{lemma:poisso-poisson}.

The proof of Theorem  \ref{thm:CLTgeneral} uses the same idea, together with \cite[Lemma 2.1]{realope} (which says that the cumulants are controlled by the variance), and the fact that
\beq \no 
\iint_{\mathbb{R}^2}\left( \frac{f(x)-f(y)}{x-y} \right)^2 dxdy \leq \| f\|_{\mathcal{L}_\omega}.
\eeq 
\end{proof}

\end{document}